\newcommand{\calS}{\mathcal{S}}
\newcommand{\proddom}[1]{\mathcal{A}^{dom}[\mathcal{#1}]}
\newcommand{\termdom}[1]{\mathcal{A}_{\Box}^{dom}[\mathcal{#1}]}
\newtheorem{observation}{Observation}[section]
\newtheorem{definition}{Definition}[section]
\newtheorem{claim}{Claim}[section]
\begin{document}

\title{\Large The Need for Seed (in the abstract Tile Assembly Model)}
\author{
Andrew Alseth%
  \thanks{Department of Computer Science and Computer Engineering, University of Arkansas,
    \protect\url{andrew.alseth@gmail.com}
    Supported in part by National Science Foundation Grant CAREER-1553166.}
\and
  Matthew J. Patitz%
    \thanks{Department of Computer Science and Computer Engineering, University of Arkansas,
      \protect\url{patitz@uark.edu}
      Supported in part by National Science Foundation Grant CAREER-1553166.}
}

\date{}

\maketitle


\fancyfoot[R]{\scriptsize{Copyright \textcopyright\ 2023 by SIAM\\
Unauthorized reproduction of this article is prohibited}}





\begin{abstract} \small\baselineskip=9pt In the abstract Tile Assembly Model (aTAM) square tiles self-assemble, autonomously binding via glues on their edges, to form structures. Algorithmic aTAM systems can be designed in which the patterns of tile attachments are forced to follow the execution of targeted algorithms. Such systems have been proven to be computationally universal as well as intrinsically universal (IU), a notion borrowed and adapted from cellular automata showing that a single tile set exists which is capable of simulating all aTAM systems (FOCS 2012). The input to an algorithmic aTAM system can be provided in a variety of ways, with a common method being via the ``seed'' assembly, which is a pre-formed assembly from which all growth propagates. Arbitrary amounts of information can be encoded into seed assemblies by both (1) the types and patterns of glues exposed on their exteriors, and (2) their shapes. Since a common metric by which aTAM systems are measured is their tile complexity (i.e. the number of unique types of tiles they utilize), in order to provide a fair basis for comparison, systems are often designed with seed assemblies consisting of only a single seed tile, a.k.a. single-tile seeds. (For instance, in STOC 2000 and 2001 information theoretically optimal tile complexity was shown possible for the self-assembly of squares.) This requires the transferring of any information that may be encoded in a multi-tile seed assembly into tile complexity. In this paper, we explore this process to show when and how such transformations are possible while ensuring that a derived system with a single-tile seed faithfully replicates the behaviors of the original system.

We first show that a trivial transformation, in which the locations of a multi-tile seed are tiled by ``hard-coded'' tiles that can grow to represent that seed from a single tile, can succeed only if (1) there are not tile locations in the seed such that there exist growth sequences where those locations could block future growth, or (2) an ordering of growth can be enforced for the growth of the seed from a single tile to ensure that such blocking locations are tiled before collisions are possible. However, we show that knowing if this is the case is uncomputable. Therefore, we examine what is possible if the scale factor of the original system is increased and show that all systems with multi-tile seeds can be transformed into systems with single-tile seeds at scale factor 3 (i.e. each tile of the original system is replaced by a $3 \times 3$ square of tiles), such that the transformed systems faithfully replicate the dynamics of the original systems. We also prove that this scale factor is optimal, and that in fact there exist systems with multi-tile seeds for which no systems at scale factors 1 or 2 (or scale factor 3 when a more restrictive form of simulation is required) with single-tile seeds exist that can even produce the same sets of terminal output shapes. Since the scale 3 transformation results in a tile complexity which is proportional to the size of the original tile set plus the size of the multi-tile seed multiplied by the scale factor, we then also provide a transformation that yields an asymptotically optimal tile complexity proportional to the Kolmogorov complexity of the original system and which is based on the IU construction from FOCS 2012. Additionally, we are able to make simple modifications to that construction to provide a single aTAM system which simultaneously and in parallel simulates \emph{all} aTAM systems, and provide a connection between that system and the existence of systems within models other than the aTAM which are IU for the aTAM. This set of results provides a full characterization of the tradeoffs between systems with multi-tile seeds and those with single-tile seeds, which is fundamental to the measure of complexity of aTAM systems.
\end{abstract}

\section{Introduction}

Mathematical models of self-assembly provide high-level abstractions of the self-assembling behaviors of systems that typically function on a molecular level. While many natural systems harness the power of self-assembly, scientists and engineers are rapidly increasing their abilities to design self-assembling systems. One of the most versatile molecules that we have discovered is DNA. While biology utilizes DNA primarily as a storage medium, it is now being recognized for its impressive capabilities for structure-building, augmented by its ease of synthesis, durability, and programmability. Engineers are designing impressive DNA-based self-assembling systems whose complexities are increasing at a nearly exponential rate (e.g.  \cite{woods2019diverse,zhang2020programming,MonaLisa,AndersenBox2009}). Not to be outdone, theoreticians are rapidly expanding the mathematical toolkit for modeling, predicting, and analyzing such systems. The level of granularity in modeling can vary from highly realistic atomic-level modeling to purely mathematical abstractions, and computer science is impacting all levels. In this paper, we utilize computational theory, complexity theory, and algorithm design and analysis to explore a fundamental aspect of mathematical modeling: the ways in which so-called ``seed'' assemblies, i.e. preexisting assemblies that serve as nucleation points for growth, can influence and impact the behaviors of self-assembling systems.

A popular and widely studied mathematical model of self-assembly called the abstract Tile Assembly Model (aTAM) was introduced by Winfree \cite{Winf98} and immediately shown to be capable of complex algorithmic behavior and computationally universal. In the aTAM, the basic components are square tiles that have glues on their edges that allow them to bind together, and as a ``seeded'' model growth begins with tile attachments to an input seed assembly and proceeds by growing that assembly. This is in contrast to ``seedless'' models such as the 2-Handed Assembly Model (2HAM) \cite{AGKS05g,2HAMIU}, a.k.a. Hierarchical Assembly Model, in which growth can begin with the combination of any pair of base components and proceeds by the combination of pairs of base or previously formed components. Prior work has directly compared these two models \cite{Versus,jVersus1}, and although the power of hierarchical assembly grants the 2HAM many advantages, the few results showing advantages for the aTAM resulted from aspects related to the seed structures, particularly due to the geometric hindrance (a.k.a. blocking) they can provide. 

One of the major metrics used to analyze self-assembling systems is called \emph{tile complexity}, which is the number of unique types of tiles that are required to self-assemble a target structure. Various results have exemplified the power of algorithmic self-assembly by demonstrating theoretical constructions that meet information theoretic lower bounds on tile complexity. For instance, the self-assembly of $n \times n$ squares in the aTAM has been shown to be possible with $\Theta(\frac{\log n}{\log \log n})$ tile types\cite{RotWin00,AdChGoHu01}, and general shapes (at large scale factors) with $\Theta(\frac{K(S)}{\log K(S)})$ tile types (where $S$ is the definition of the target shape and $K(S)$ is the Kolmogorov complexity of $S$)\cite{SolWin07}.

The aTAM allows seed assemblies to be arbitrarily complex (as long as they are finite), but if the goal is to measure the complexities of systems, the information contained in the seed assemblies should be factored in along with the tile complexity. However, it isn't immediately clear how to completely quantify the amount of information contained within a seed assembly. Clearly, the amount of information encoded in the glues exposed by an assembly is important. However, information can also be implicitly encoded in the shapes of tiles and assemblies (e.g. \cite{GeoTiles,jDuples,Polygons,Polyominoes,OneTile}), and variations in shapes can allow systems to utilize another tool: geometric hindrance. Since quantifying the information embedded in a seed assembly can be difficult, a standard basis for these complexity results is the requirement that the system have a seed consisting of a single tile (i.e. a single-tile seed). In this way, all information is transferred into the definitions of the tiles, providing a more even foundation for comparison. Such single-tile seeds are the basis of results such as the self-assembly of $n \times n$ squares and also of generals shapes previously mentioned. Of course, given the ability of seed assemblies to encode arbitrary amounts of information, it is also possible to use a constant tile set and vary the seed assemblies to instead shift all of the information to the seed. Several results have been based on this approach, including those related to \emph{intrinsic universality} (IU), a notion of simulation in which one tile assembly system is used to simulate another in a way that attempts to capture the full dynamics of the simulated system, modulo only scaling in time and space. In \cite{IUSA} it was shown that the aTAM is intrinsically universal, which means that there is a single universal aTAM tile set (and functions that specify how to arrange those tiles into seed assemblies and interpret blocks of them as individual tiles in the simulated systems) that can be used to simulate \emph{any} aTAM system. This is a powerful closure property of the model, and the notion of intrinsic simulation has been utilized to compare and contrast the powers of various models and classes of systems (e.g. \cite{DirectedNotIU,j2HAMIU,temp1notIU,WoodsMeunierSTOC}). When designing and utilizing IU tile sets, the tile set is constant across differing simulations, and therefore the information serving as input to the system comes completely from the structure and glues of the seed assembly.
Slight variations to the aTAM have also resulted in models in which there are alternative methods for providing information to systems to seed their growth. These include \emph{temperature programming} in which the temperature parameter of a system can be programmed to change following a prescribed, and arbitrarily complex, series of values that causes a growing structure to periodically grow and/or partially break apart \cite{AGKS05g,KS07,SummersTemp}, \emph{concentration programming} in which a very precise concentration value can be specified for each tile type in order to influence the probabilities of attachment of specific tile types \cite{KaoSchS08,Dot10}, and others.

In this paper, we present a wide array of results that serve to quantify the types and magnitudes of impacts that seed structures can have on aTAM systems. We first provide a few simple results and observations to show (1) systems that can encode arbitrary amounts of information in seeds and utilize ``Garden of Eden'' assemblies (which are assemblies that have no pathway for growth and can only stably exist if they completely form instantaneously) in Section \ref{sec:seed-encoding-GoE}, (2) ``blocking'' (which occurs when potential placement of later tiles is prevented by the prior placement of earlier tiles) by tiles in a seed assembly can impact the dynamics of systems in important ways but that it is uncomputable, in general, to determine if one or any seed tile locations have the potential to block growth (Observation \ref{obs:uncomp-blocking}), (3) the same sets of shapes of the assemblies produced by even some relatively simple multi-tile seed systems cannot be produced by systems with single-tile seeds if they are not allowed to be scaled up in size (i.e. if each single tile is replaced by an $n \times n$ square of tiles, we say that the system is scaled by factor $n$) in Section \ref{sec:imp-scale-1}, and (4) that, since the aTAM does not allow for seeds of infinite size, there is an infinite set of shapes that cannot self-assemble in any aTAM system (Observation \ref{obs:infinite-seed}).

Given the utility of transferring information encoded in seed assemblies into tile complexity, by trading multi-tile seeds for single-tile seeds, we explore when and how it is possible to do so.  In order to fully explore the capabilities and limitations of converting systems with multi-tile seeds to those with single-tile seeds, we provide definitions for new notions of simulation (Section \ref{sec:simulation_def}).
In standard definitions, both the simulated system and the simulating system start from the same assembly (modulo scale factor and interpreted through a mapping function). In these new definitions, the simulator is allowed to begin with seed assemblies as small as a single tile, and then are allowed to grow assemblies representing the seed structures. In the most permissive definition (which we call ``shape-simulation''), we only stipulate that both systems have to produce terminal assemblies of the same shapes (modulo scale factor and mapping). In the most restrictive definition (which we call ``seed-first-simulation''), we require that the simulating system completely grow an assembly mapping to the seed structure of the original system before allowing any of the rest of the growth permitted by that system to occur.
Using these notions of simulation, we prove the tightest results possible.
We prove that even with the most relaxed version, shape-simulation, there are aTAM systems with multi-tile seeds that no systems with single-tile seeds can correctly shape-simulate at scale factors 1 or 2, or also at scale factor 3 with a technical restriction applied (i.e. ``cheating fuzz'', as defined in Section \ref{sec:simulation_def}, is not allowed) (Theorem \ref{thm:multi-imposs}). However, using even the most restrictive notion of simulation, seed-first-simulation, we prove that any aTAM system with a multi-tile seed can be seed-first-simulated by a system with a single-tile seed at scale factor 3 with the technical restriction removed (i.e. cheating fuzz is allowed) (Theorem \ref{thm:scale3-sim}) or at scale factor 4 with the technical restriction once again applied (i.e. cheating fuzz is not allowed) (Theorem \ref{thm:scale4-sim}). Using the restrictive notion of seed-first-simulation makes it possible to guarantee the full seed-representing assembly is grown in the simulator before any outward growth can occur, making it behave identically to the simulated system after an initial seed-building phase.

While we prove that the scale 3 and 4 simulations achieve the lower bound for scale factor, they pay the price in terms of tile complexity, which is $O(|\sigma|+|T|)$ for the simulation of a system with tile set $T$ and seed assembly $\sigma$. In order to achieve optimality for the tile complexity metric, our next result (Theorem \ref{thm:scaled-simulation}) proves that the tile complexity can be reduced to $O(\frac{K(\calT)}{\log(K(\calT)})$ (which, by an information theoretic argument, is the lower bound) for the seed-first-simulation of an arbitrary aTAM system $\calT$ at the trade-off of a (massive) increase in scale factor (which is proportional to the running time of a relatively complex Turing machine).

Although the tradeoff in scale factor is immense for the previous construction, beyond approaching optimal tile complexity, it provides a basis for additional theoretically interesting results. As a module of that construction, we use a (minimally modified) version of the intrinsically universal aTAM tile set from \cite{IUSA}. This machinery allows us to extend the construction to first present a new construction that simultaneously and in parallel simulates all aTAM systems (Theorem \ref{thm:simult-sim}). This, of course, requires a modified definition of simulation to take into account the fact that no fixed value for a scale factor could suffice to simulate arbitrary aTAM systems, so we define a type of ``mixed-scale-simulation''. Then, for our final result (Theorem \ref{clm:cross-model-IU}), we show how we can make use of the ability of systems in a different model to simulate systems of the aTAM by providing a construction that utilizes ``nested simulations'', allowing us to make connections between models that can simulate arbitrary systems in the aTAM and intrinsic universality for the aTAM (notions we also show are not necessarily correlated).

In summary, we prove a wide assortment of results that expose the impacts of seed assemblies on the dynamics and complexities of aTAM systems and provide tight bounds that show how it is possible to replace multi-tile seeds with single-tile seeds and what the tradeoffs are.

The paper is laid out as follows. Section \ref{sec:definitions} contains the definitions of, and related to, the aTAM, as well as definitions for the various types of simulations used throughout the paper. Section \ref{sec:basic} contains a set of simple results and observations that lay the foundation for the more complex results to follow. Section \ref{sec:single-tile-limits} sketches the result showing the lower bound for the scale factor required to simulate systems with multi-tile seeds by those with single-tile seeds.
Section \ref{sec:min-scaled-sims} gives overviews of the constructions that provide a tight upper bound for the scale factor of such simulations. In Section \ref{sec:scaled-sim} we present the results showing simulation by systems with single-tile seeds using optimal tile complexity, as well as simultaneous simulation of all aTAM systems, and one relating to intrinsic universality and the aTAM. Then, the Technical Appendix, Section \ref{sec:append}, contains proofs and technical details omitted from the other sections.

\section{Definitions}\label{sec:definitions}
In this section we provide definitions of the aTAM and also related to the simulation of one tile assembly system by another.

\subsection{The abstract Tile Assembly Model}
\label{sec:tam-informal}

This section gives a brief informal sketch of the abstract Tile Assembly Model (aTAM) \cite{Winf98} and uses notation from \cite{RotWin00} and \cite{jSSADST}. For more formal definitions and additional notation, see \cite{RotWin00} and \cite{jSSADST}.

A \emph{tile type} is a unit square with four sides, each consisting of a \emph{glue label} which is often represented as a finite string.
An aTAM system has a finite set $T$ of tile types, but an infinite number of copies of each tile type, with each copy being referred to as a \emph{tile}.
A \emph{glue function} is a symmetric mapping from pairs of glue labels to a non-negative integer value which represents the strength of binding between those glues.
An \emph{assembly}
is a positioning of tiles on the integer lattice $\Z^2$, described  formally as a partial function $\alpha:\Z^2 \dashrightarrow T$. 
Let $\mathcal{A}^T$ denote the set of all assemblies of tiles from $T$, and let $\mathcal{A}^T_{< \infty}$ denote the set of finite assemblies of tiles from $T$.
We write $\alpha \sqsubseteq \beta$ to denote that $\alpha$ is a \emph{subassembly} of $\beta$, which means that $\dom\alpha \subseteq \dom\beta$ and $\alpha(p)=\beta(p)$ for all points $p\in\dom\alpha$.
We write $\alpha \setminus \beta$ to denote the assembly $\alpha$ without any of the tiles in locations in $\beta$, i.e. the result of starting with $\alpha$ and removing tiles from any locations which are in both $\alpha$ and $\beta$.
Two adjacent tiles in an assembly \emph{interact}, or are \emph{attached}, if the glue labels on their abutting sides match. 
Each assembly induces a \emph{binding graph}, a grid graph whose vertices are tiles, with an edge between two tiles if they interact.
The assembly is \emph{$\tau$-stable} if every cut of its binding graph has strength at least~$\tau$, where the strength   of a cut is the sum of all of the individual glue strengths in the cut.


A \emph{tile assembly system} (TAS) is a 3-tuple $\calT = (T,\sigma,\tau)$, where $T$ is a finite set of tile types, $\sigma:\Z^2 \dashrightarrow T$ is a finite $\tau$-stable \emph{seed assembly},
and $\tau$ is the \emph{temperature} parameter (a.k.a. \emph{binding threshold}).
Given an assembly $\alpha$, the \emph{frontier}, $\frontiert{\alpha}$, is the set of locations to which tiles can $\tau$-stably attach.
An assembly $\alpha$ is \emph{producible} if either $\alpha = \sigma$ or if $\beta$ is a producible assembly and $\alpha$ can be obtained from $\beta$ by the stable binding of a single tile to a location in $\frontiert{\beta}$. In this case we write $\beta\to_1^\calT \alpha$ (to mean $\alpha$ is producible from $\beta$ by the attachment of one tile), and we write $\beta\to^\calT \alpha$ if $\beta \to_1^{\calT*} \alpha$ (to mean $\alpha$ is producible from $\beta$ by the attachment of zero or more tiles).
An \emph{assembly sequence} in a TAS $\calT$ is a (finite or infinite) sequence $\vec{\alpha} = (\alpha_0,\alpha_1,...)$ of assemblies in which each $\alpha_{i+1}$ is obtained from $\alpha_i$ by the addition of one tile, i.e. $\alpha_i \to_1^{\calT} \alpha_{i+1}$.

We let $\prodasm{\calT}$ denote the set of producible assemblies of $\calT$.
An assembly $\alpha$ is \emph{terminal} if no tile can be $\tau$-stably attached to it, i.e. $|\frontiert{\alpha}| = 0$.
We let   $\termasm{\calT} \subseteq \prodasm{\calT}$ denote  the set of producible, terminal assemblies of $\calT$.
A TAS $\calT$ is \emph{directed} if $|\termasm{\calT}| = 1$. Hence, although a directed system may be nondeterministic in terms of the order of tile placements, it is deterministic in the sense that exactly one terminal assembly can be produced.

We define the \emph{producible shapes} of an aTAM system $\calT$, denoted $\proddom{T}$, as the set of shapes (i.e. domains) of all producible assemblies of $\calT$. More formally, $\proddom{T} = \{\dom \alpha \mid \alpha \in \prodasm{T}\}$. Similarly, we define the \emph{terminal shapes} of an aTAM system $\calT$, denoted $\termdom{T}$, as the set of shapes (i.e. domains) of all terminal assemblies of $\calT$. More formally, $\termdom{T} = \{\dom \alpha \mid \alpha \in \termasm{T}\}$. We say that aTAM system $\calT$ \emph{self-assembles shape $S$} if and only if $|\termdom{T}| = 1$ and $S \in \termdom{T}$, that is, $\calT$ produces terminal assemblies of only a single shape, which is $S$.

\begin{definition}[TAS equivalence]\label{def:tas-equiv}
Given two aTAM systems $\calT$ and $\mathcal{S}$, and assembly $\gamma$, we say that $\calT$ and $\mathcal{S}$ are \emph{equivalent modulo $\gamma$} if and only if for every producible assembly $\alpha \in \prodasm{\calT}$ there exists a producible assembly $\beta \in \prodasm{\mathcal{S}}$ such that $(\alpha \setminus \gamma) = (\beta \setminus \gamma)$, and vice versa (i.e. for every producible assembly $\beta \in \prodasm{\mathcal{S}}$ there exists a producible assembly $\alpha \in \prodasm{\mathcal{T}}$ such that $(\beta \setminus \gamma) = (\alpha \setminus \gamma)$).
\end{definition}

Note that the notion of equivalence between aTAM systems is quite strict, requiring all of the same tile types to be used outside of the region of $\gamma$, and not allowing for any scaling factor. For more general notions of equivalence between systems, see Section \ref{sec:simulation_def}.

\begin{definition}[Strict dependence]\label{def:strict-dep}
Given an aTAM system $\calT$ and sets of locations $L_1,L_2 \subset \mathbb{Z}^2$, we say that $L_2$ \emph{strictly depends upon} $L_1$ if, for every valid assembly sequence in $\calT$, if a tile is placed in some location in $L_2$, a tile was previously placed in some location in $L_1$.
\end{definition}

We define a \emph{path} $p$ as an ordered list of distinct locations in $\mathbb{Z}^2$, and refer to the $i$th location in $p$ as $p[i]$, so that each location $p[i]$, for $0 \le i < |p|-1$, is adjacent to $p[i+1]$.

\begin{definition}[Dependence path]\label{def:dep-path}
Given an aTAM system $\calT$, a producible assembly $\alpha \in \prodasm{\calT}$, assembly sequence $\vec{\alpha}$ which produces $\alpha$, and locations $l_1,l_2 \in \dom{\alpha}$, we say that there is a \emph{dependence path from $l_1$ to $l_2$} if and only if there exists a path $p$ from $1_1$ to $l_2$ in $\dom{\alpha}$ such that, in $\vec{\alpha}$, for each location $p[i]$, for $0 < i < |p|-1$, (1) the tile at location $p[i]$ was placed before the tile in $p[i+1]$, and (2) the tile at location $p[i+1]$ was required to form a bond with the tile at location $p[i]$ in order to attach.
\end{definition}

Intuitively, a dependence path from $l_1$ to $l_2$ is a path formed by sequential tile attachments that leads from $l_1$ to $l_2$. Note that, by definition, a dependence path is directional and therefore a dependence path from $l_1$ to $l_2$ cannot be the same path as a dependence path from $l_2$ to $l_1$.

\begin{lemma}[Dependence paths]\label{lem:depend-path}
Given an aTAM system $\calT$ whose seed consists of a single tile, a producible assembly $\alpha \in \prodasm{\calT}$, and sets of locations $L_1,L_2 \subset \dom{\alpha}$, if $L_2$ strictly depends upon $L_1$, then in each valid assembly sequence of $\calT$ there must be a dependence path from some  $l_1 \in L_1$ to some $l_2 \in L_2$.
\end{lemma}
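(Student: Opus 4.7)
The plan is to fix an arbitrary valid assembly sequence $\vec{\alpha}$ of $\calT$ that tiles at least one location in $L_2$, let $l_2 \in L_2$ be the first such location tiled in $\vec{\alpha}$, and then extract a dependence path ending at $l_2$ by building a \emph{minimal support DAG} $D \subseteq \dom \alpha$ rooted at $l_2$. The set $D$ is built greedily and top-down: initialize $D = \{l_2\}$, and while some non-seed $l \in D$ has not yet been justified, add to $D$ a minimal subset of those neighbors of $l$ that (i) were tiled before $l$ in $\vec{\alpha}$ and (ii) whose glue strengths with $l$ sum to at least $\tau$. Call the members of this minimal subset the \emph{chosen predecessors} of $l$, and iterate until every non-seed element of $D$ has its chosen predecessors in $D$. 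The recursion terminates because the $\vec{\alpha}$-placement time of each chosen predecessor is strictly earlier than that of its justified tile, and the single-tile seed is the unique minimum of that order.

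The key property of $D$ is that restricting $\vec{\alpha}$ to tile placements at locations in $D$, preserving their original order, yields a valid assembly sequence $\vec{\beta}$: each non-seed $l \in D$ finds its chosen predecessors already present in $\vec{\beta}$, and by construction their bond strengths sum to at least $\tau$, so $l$ attaches stably.

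Now I would case-split on whether $D$ meets $L_1$. If $D \cap L_1 = \emptyset$, then $\vec{\beta}$ is a valid assembly sequence in which $l_2 \in L_2$ is tiled but no location in $L_1$ is ever tiled, directly contradicting the hypothesis that $L_2$ strictly depends upon $L_1$. Hence $D \cap L_1 \neq \emptyset$; pick any $l_1 \in D \cap L_1$. By the top-down construction, $l_1$ was inserted as a chosen predecessor of some $t_1 \in D$, which in turn was inserted for some $t_2 \in D$, and so on; since $l_2$ is the unique element of $D$ not inserted on behalf of a later tile, this chain must terminate at $l_2$. The resulting sequence $p = (l_1, t_1, t_2, \ldots, l_2)$ consists of pairwise adjacent locations in which each $p[i+1]$ is tiled strictly after $p[i]$ and was required to form a bond with $p[i]$ in order to attach: indeed, $p[i]$ lies in a \emph{minimal} $\tau$-justifying set for $p[i+1]$, so removing the $p[i]$--$p[i+1]$ bond would drop the supporting bond strength below $\tau$. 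This is exactly a dependence path from $l_1 \in L_1$ to $l_2 \in L_2$ in the sense of Definition~\ref{def:dep-path}.

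The main obstacle lies at temperature $\tau \geq 2$, where a tile's stable attachment may demand simultaneous bonds from several neighbors, so the backward ``was-needed-for'' relation is a DAG rather than a linear chain. This is what forces the proof to track a support DAG rather than a single parent pointer, and it is why the \emph{minimality} of each chosen justifying set is essential: it certifies that every bond along the extracted chain is truly required within its justification, as opposed to merely being one of several interchangeable contributors. A smaller technicality is checking that the restriction of $\vec{\alpha}$ to $D$ really is a valid assembly sequence; this follows immediately from the construction, with the single-tile seed hypothesis guaranteeing that the restricted sequence has a legitimate starting assembly.
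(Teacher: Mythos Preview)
Your proof is correct and takes a genuinely different route from the paper's. The paper argues by contradiction: assuming some assembly sequence $\vec{\alpha}$ has no dependence path from $L_1$ to $L_2$, it builds a modified sequence $\vec{\alpha}'$ by stepping through $\vec{\alpha}$ and suppressing all placements in $L_1$, asserting that the absence of a dependence path lets the first $L_2$ placement still go through, which contradicts strict dependence. Your approach is instead constructive and ``backward'': you build a minimal support DAG $D$ below the first $L_2$ tile, observe that restricting $\vec{\alpha}$ to $D$ is itself a valid assembly sequence, and use strict dependence to force $D \cap L_1 \neq \emptyset$, then read off an explicit dependence path by following insertion pointers up to $l_2$. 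The paper's argument is shorter but leaves the crucial step---why ``no dependence path'' guarantees the modified sequence still reaches $L_2$---essentially unjustified; your DAG construction is exactly what is needed to make that step rigorous, and your use of \emph{minimal} justifying sets cleanly handles the cooperative ($\tau \ge 2$) case where a tile may bind via several neighbors simultaneously. One small remark: your minimality argument shows the $p[i]$--$p[i+1]$ bond is essential within the chosen subset $S$, not necessarily within the full set of neighbors present at attachment time in $\vec{\alpha}$; this matches the paper's (loose) reading of ``required to form a bond'' in Definition~\ref{def:dep-path}, which in its own proof treats any contributing bond as sufficient for a dependence-path edge.
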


The proof of Lemma \ref{lem:depend-path} can be found in Section \ref{sec:depend-path-proof}.

\begin{figure}
    \centering
    \includegraphics[width=6.0in]{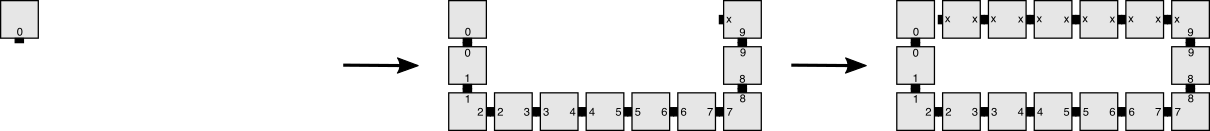}
    \caption{A basic example of blocking by a seed location. (Left) The seed tile, (Middle) Growth from the seed with tiles specific to each location, (Right) Growth back toward the seed using a single, repeating tile type. This row ``crashes'' into the seed (i.e. it would place a tile in the location of the seed tile if that tile weren't there), so the seed location blocks placement of a tile (of the type with $x$ glues on its east and west).}
    \label{fig:blocking}
\end{figure}

\begin{definition}[Blocking]\label{def:blocking}
Given a TAS $\calT$ and producible assembly $\alpha \in \prodasm{\calT}$, a tile $t$ at location $\vec{l}$ in $\alpha$ \emph{blocks} a tile if there exists a valid assembly sequence which begins with $\alpha$ and results in one or more tiles adjacent to $t$ which did not require glues of $t$ to bind to the assembly and to which a tile of a type different than $t$ could bind with $\tau$ strength in location $\vec{l}$ if $t$ was removed.
\end{definition}

See Figure \ref{fig:blocking} for an example of blocking. Note that we still say blocking occurs even if the tile adjacent to $t$ strictly depends upon $t$, meaning it would technically be impossible to remove $t$ and replace it with a tile of another type.

\subsection{Simulation of tile assembly systems}
\label{sec:simulation_def}

First, we give a very brief intuitive definition of what it means for one tile assembly system to simulate another, and then provide more technically detailed definitions related to simulation, especially as it relates to scale factors greater than 1. We define new notions of simulation that are necessary to allow, and capture, the dynamics of simulating systems that don't begin from seed assemblies that represent the full seed assemblies of the systems they are simulating, and thus they have to grow those missing portions which the standard definition of simulation assumes exist at the beginning of the simulation. For the technical definitions related to the standard model of simulation, see \cite{DirectedNotIU}.

Intuitively, simulation of a system $\calT$ by a system $\calS$ requires that there is some scale factor $m \in \Z^+$ such that $m \times m$ squares of tiles in $\calS$ represent individual tiles in $\calT$, and there is a ``representation function'' capable of inspecting assemblies in $\calS$ and mapping them to assemblies in $\calT$. A representation function $R$ takes as input an assembly in $\calS$ and returns an assembly in $\calT$ to which it maps. In order for $\calS$ to correctly simulate $\calT$, it must be the case that for any producible assembly $\alpha \in \prodasm{\calT}$ that there is a corresponding assembly $\beta \in \prodasm{\calS}$ such that $R(\beta) = \alpha$. (Note that there may be more than one such $\beta$.) Furthermore, for any $\alpha' \in \prodasm{\calT}$ which can result from a tile addition to $\alpha$, there exists $\beta' \in \prodasm{\calS}$, where $R(\beta') = \alpha'$, which can result from the addition of one or more tiles to $\beta$, and conversely, $\beta$ can only grow into assemblies which can be mapped into valid assemblies of $\calT$ into which $\alpha$ can grow.

We now present a formal, rigorous definition of what it means for one tile assembly system to ``simulate'' another. Our definitions are based on those of \cite{temp1notIU}, but are adapted to account for the simulating system to grow a representation of the original system's seed rather than beginning from a seed assembly which already represents it. Also, note that a great amount of the complexity required for the definitions arises due to the possible dynamics of simulations with scale factors $> 1$, and that otherwise the mapping of assemblies and equivalence of production and dynamics are much more straightforward.


From this point on, let $T$ be a tile set, and let $m\in\Z^+$.
An \emph{$m$-block supertile} over $T$ is a partial function $\alpha : \Z_m^2 \dashrightarrow T$, where $\Z_m = \{0,1,\ldots,m-1\}$.
Let $B^T_m$ be the set of all $m$-block supertiles over $T$.
The $m$-block with no domain is said to be $\emph{empty}$.
For a general assembly $\alpha:\Z^2 \dashrightarrow T$ and $(x,y)\in\Z^2$, define $\alpha^m_{x,y}$ to be the $m$-block supertile defined by $\alpha^m_{x,y}(i_x, i_y) = \alpha(mx+i_x, my+i_y)$ for $0 \leq i_x,i_y< m$.
For some tile set $S$, a partial function $R: B^{S}_m \dashrightarrow T$ is said to be a \emph{valid $m$-block supertile representation} from $S$ to $T$ if for any $\alpha,\beta \in B^{S}_m$ such that $\alpha \sqsubseteq \beta$ and $\alpha \in \dom R$, then $R(\alpha) = R(\beta)$.
Note that we use the term \emph{macrotile} interchangeably with supertile, to mean the same thing.

For a given valid $m$-block supertile representation function $R$ from tile set~$S$ to tile set $T$, define the \emph{assembly representation function}\footnote{Note that $R^*$ is a total function since every assembly of $S$ represents \emph{some} assembly of~$T$; the functions $R$ and $\alpha$ are partial to allow undefined points to represent empty space.}  $R^*: \mathcal{A}^{S} \rightarrow \mathcal{A}^T$ such that $R^*(\alpha') = \alpha$ if and only if $\alpha(x,y) = R\left(\alpha'^m_{x,y}\right)$ for all $(x,y) \in \Z^2$.
For an assembly $\alpha' \in \mathcal{A}^{S}$ such that $R(\alpha') = \alpha$, $\alpha'$ is said to map \emph{cleanly} to $\alpha \in \mathcal{A}^T$ under $R^*$ if for all non empty blocks $\alpha'^m_{x,y}$, $(x,y)+(u_x,u_y) \in \dom \alpha$ for some $u_x,u_y \in \{-1,0,1\}$ such that $u_x^2 + u_y^2 \leq 1$. In other words, $\alpha'$ may have tiles on supertile blocks representing empty space in $\alpha$, but only if that position is adjacent to a tile in $\alpha$.  We call such growth ``around the edges'' of $\alpha'$ \emph{fuzz} and thus restrict it to be adjacent to only valid supertiles, but not diagonally adjacent (i.e.\ we do not permit \emph{diagonal fuzz}). Additionally, if tiles grow as fuzz into a supertile region which maps to a location in the simulated system in which there is never an incident glue (other than the null glue) from any adjacent tile, we call this \emph{cheating fuzz}. The justification for this distinction is that the goal of an intrinsic simulation is for the simulator to only utilize the dedicated macrotile space to simulate a tile or to compute if a tile may grow into a location, but cheating fuzz would be tile growth into a macrotile location that never has any input glues and therefore no chance of ever becoming a tile. See Figure \ref{fig:S3-10_cheating_fuzz_locations} for an example.

\begin{figure}
    \centering
    \includegraphics[width=0.3\textwidth]{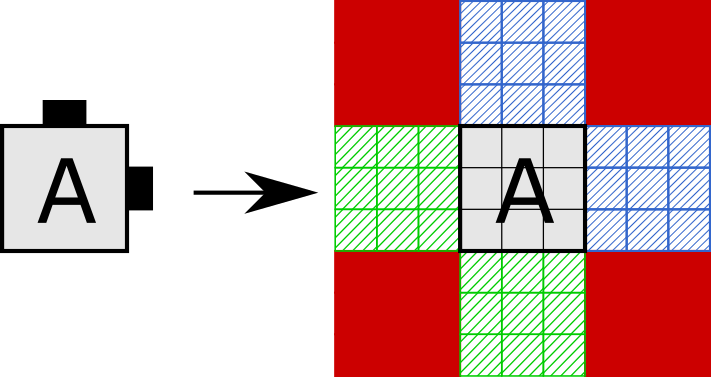}
    \caption{When simulating an aTAM tile `A' with glues on its north and east faces and null glues on south and west faces, although a simulation is allowed to grow fuzz into each of the adjacent north, east, south, and west marcortile locations, we further restrict the locations for ``legal fuzz'' to grow only in the adjacent macrotiles to its north and east, indicated by blue hashed tiles. Locations for placement of ``cheating fuzz'', in adjacent directions in which no glue is presented by A, are indicated by the green hashed tiles.}
    \label{fig:S3-10_cheating_fuzz_locations}
\end{figure}


In the following definitions, let $\mathcal{T} = \left(T,\sigma_T,\tau_T\right)$ be a tile assembly system, let $\mathcal{S} = \left(S,\sigma_S,\tau_S\right)$ be a tile assembly system, and let $R$ be a valid $m$-block representation function $R:B^S_m \rightarrow T$.


\begin{definition}\label{def-shape-sim}
We say that $\calS$ \emph{shape-simulates} $\calT$ if the following conditions hold:
\begin{enumerate}
        \item $\left\{ \dom R^*(\alpha') | \alpha' \in \termasm{\mathcal{S}}\right\} = \termdom{T}$.
        \item For all $\alpha'\in \termasm{\mathcal{S}}$, $\alpha'$ maps cleanly to $R^*(\alpha')$.
\end{enumerate}
\end{definition}

Essentially, for $\mathcal{S}$ to shape-simulate $\calT$, its terminal assemblies must map to the exact same set of domains as those of the terminal assemblies of $\calT$.

\begin{definition}
\label{def-equiv-prod-mod} We say that $\mathcal{S}$ and $\mathcal{T}$ have \emph{equivalent productions modulo $\sigma_\calT$} (under $R$), and we write $\mathcal{S} \Leftrightarrow^{\sigma_\calT}_R \mathcal{T}$ if the following conditions hold:
\begin{enumerate}
        \item $\left\{R^*(\alpha') | \alpha' \in \prodasm{\mathcal{S}}\right\} = \prodasm{\mathcal{T}} \cup \{ \alpha \mid \alpha \sqsubseteq \sigma_\calT\}$.
        \item $\left\{R^*(\alpha') | \alpha' \in \termasm{\mathcal{S}}\right\} = \termasm{\mathcal{T}}$.
        \item For all $\alpha'\in \prodasm{\mathcal{S}}$, $\alpha'$ maps cleanly to $R^*(\alpha')$.
\end{enumerate}
\end{definition}

Note that the definition of equivalent productions modulo the seed assembly differs slightly from the definition of equivalent productions in other papers (e.g. \cite{IUSA,DirectedNotIU,temp1notIU}) since we must allow the producible assemblies of the simulating system to represent subsets of the seed of the simulated system. This is because we are interested in being able to begin with a singly-seeded system that \emph{first} grows a complete representation of the seed of the simulated system, and then continues with its simulation. Alternatively, the next definition can be used to define a notion of simulation in which growth can extend beyond the simulated seed before the full representation of the seed has completed.

\begin{definition}
\label{def-equiv-prod-minus} We say that $\mathcal{S}$ and $\mathcal{T}$ have \emph{equivalent productions minus $\sigma_\calT$} (under $R$), and we write $\mathcal{S} \Leftrightarrow^{-\sigma_\calT}_R \mathcal{T}$ if the following conditions hold:
\begin{enumerate}
        \item For all $\alpha' \in \prodasm{\mathcal{S}}$, all locations of $\alpha'$ that map to a location in $\dom{\sigma_\calT}$ map, under $R$, to the tile of $\calT$ in that location of $\sigma_\calT$.
        \item $\left\{R^*(\alpha') \cup \sigma_\calT | \alpha' \in \prodasm{\mathcal{S}}\right\} = \prodasm{\mathcal{T}}$.
        \item $\left\{R^*(\alpha') | \alpha' \in \termasm{\mathcal{S}}\right\} = \termasm{\mathcal{T}}$.
        \item For all $\alpha'\in \prodasm{\mathcal{S}}$, $\alpha'$ maps cleanly to $R^*(\alpha')$.
\end{enumerate}
\end{definition}

\begin{definition}
\label{def-t-follows-s} We say that $\mathcal{T}$ \emph{follows $\mathcal{S}$ modulo $\sigma_\calT$} (under $R$), and we write $\mathcal{T} \dashv^{\sigma_\calT}_R \mathcal{S}$ if $\alpha' \rightarrow^\mathcal{S} \beta'$, for some $\alpha',\beta' \in \prodasm{\mathcal{S}}$, implies that $R^*(\alpha') \cup \sigma_\calT \to^\mathcal{T} R^*(\beta') \cup \sigma_\calT$.
\end{definition}

The next definition essentially specifies that every time $\mathcal{S}$ simulates an assembly $\alpha \in \prodasm{\mathcal{T}}$, there must be at least one valid growth path in $\mathcal{S}$ for each of the possible next steps that $\mathcal{T}$ could make from $\alpha$ which results in an assembly in $\mathcal{S}$ that maps to that next step.

\begin{definition}
\label{def-s-models-t} We say that $\mathcal{S}$ \emph{models} $\mathcal{T}$ (under $R$), and we write $\mathcal{S} \models_R \mathcal{T}$, if for every $\alpha \in \prodasm{\mathcal{T}}$, there exists $\Pi \subset \prodasm{\mathcal{S}}$ where $\Pi \neq$ \O
and $R^*(\alpha') = \alpha$ for all $\alpha' \in \Pi$, such that, for every $\beta \in \prodasm{\mathcal{T}}$ where $\alpha \rightarrow^\mathcal{T} \beta$, (1) for every $\alpha' \in \Pi$ there exists $\beta' \in \prodasm{\mathcal{S}}$ where $R^*(\beta') = \beta$ and $\alpha' \rightarrow^\mathcal{S} \beta'$, and (2) for every $\alpha'' \in \prodasm{\mathcal{S}}$ where $\alpha'' \rightarrow^\mathcal{S} \beta'$, $\beta' \in \prodasm{\mathcal{S}}$, $R^*(\alpha'') = \alpha$, and $R^*(\beta') = \beta$, there exists $\alpha' \in \Pi$ such that $\alpha' \rightarrow^\mathcal{S} \alpha''$.
\end{definition}

We now present the two new definitions for types of simulations that are relevant when we want to use systems with single-tile seeds to simulate systems with multi-tile seeds. The first, seed-first-simulation, requires that a complete representation of the multi-tile-seed of the simulated system self-assembles before any growth may occur away from the seed, and once the seed is complete, simulation continues in the standard way. The second, seed-growth-simulation, simply requires that the entire seed is eventually grown 
, but doesn't restrict growth away from the seed from beginning before the representation of the multi-tile-seed is complete. However it must still correctly simulate the behavior of the system with a multi-tile seed. 

\begin{definition}
\label{def-s-sf-simulates-t} We say that $\mathcal{S}$ \emph{seed-first-simulates} $\mathcal{T}$ (under $R$) if $\mathcal{S} \Leftrightarrow^{\sigma_\calT}_R \mathcal{T}$ (they have equivalent productions modulo the seed of $\calT$), $\mathcal{T} \dashv^{\sigma_\calT}_R \mathcal{S}$ and $\mathcal{S} \models_R \mathcal{T}$ (they have equivalent dynamics).
\end{definition}

\begin{definition}
\label{def-s-sg-simulates-t} We say that $\mathcal{S}$ \emph{seed-growth-simulates} $\mathcal{T}$ (under $R$) if $\mathcal{S} \Leftrightarrow^{-\sigma_\calT}_R \mathcal{T}$ (they have equivalent productions minus the seed of $\calT$), $\mathcal{T} \dashv^{\sigma_\calT}_R \mathcal{S}$ and $\mathcal{S} \models_R \mathcal{T}$ (they have equivalent dynamics).
\end{definition}

\begin{corollary}\label{cor:seed-growth-implies-shape}
If system $\mathcal{S}$ seed-growth-simulates $\calT$ or seed-first-simulates $\calT$, then $\mathcal{S}$ shape-simulates $\calT$.
\end{corollary}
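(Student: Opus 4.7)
The plan is to show that both notions of simulation in Definitions \ref{def-s-sf-simulates-t} and \ref{def-s-sg-simulates-t} already contain, verbatim or nearly so, the two conditions required by Definition \ref{def-shape-sim}. Hence the corollary reduces to unwrapping definitions and taking domains on both sides of an equality of assembly sets.

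First, suppose $\mathcal{S}$ seed-first-simulates $\mathcal{T}$. Then in particular $\mathcal{S} \Leftrightarrow^{\sigma_\calT}_R \mathcal{T}$, whose second condition states
\[
\left\{R^*(\alpha') \mid \alpha' \in \termasm{\mathcal{S}}\right\} = \termasm{\mathcal{T}}.
\]
Applying $\dom$ to each side, and recalling the paper's definition $\termdom{T} = \{\dom \beta \mid \beta \in \termasm{\calT}\}$, I get
\[
\left\{\dom R^*(\alpha') \mid \alpha' \in \termasm{\mathcal{S}}\right\} = \termdom{T},
\]
which is condition 1 of shape-simulation. The third condition of $\Leftrightarrow^{\sigma_\calT}_R$ requires that every $\alpha' \in \prodasm{\mathcal{S}}$ maps cleanly to $R^*(\alpha')$; since $\termasm{\mathcal{S}} \subseteq \prodasm{\mathcal{S}}$, this immediately yields condition 2 of shape-simulation.

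The seed-growth case is essentially identical. If $\mathcal{S}$ seed-growth-simulates $\mathcal{T}$, then $\mathcal{S} \Leftrightarrow^{-\sigma_\calT}_R \mathcal{T}$, and the third condition of that definition is the same equality $\left\{R^*(\alpha') \mid \alpha' \in \termasm{\mathcal{S}}\right\} = \termasm{\mathcal{T}}$; the fourth condition again gives clean mapping for all producibles, hence for all terminals. So I can handle both cases in parallel by pointing out that the terminal-equivalence clause and the clean-mapping clause are common to both definitions.

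There is essentially no obstacle here: the only thing to be careful about is to make sure I quote the \emph{terminal} equality (the third item in Definition \ref{def-equiv-prod-minus}, the second item in Definition \ref{def-equiv-prod-mod}) rather than the \emph{producible} equality, since shape-simulation only speaks of terminal assemblies. The dynamics conditions $\mathcal{T} \dashv^{\sigma_\calT}_R \mathcal{S}$ and $\mathcal{S} \models_R \mathcal{T}$ are not needed at all for this implication. I would write the proof as a short two-paragraph argument covering both cases simultaneously.
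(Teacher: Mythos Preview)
Your proposal is correct and follows essentially the same approach as the paper, which simply notes that both seed-growth-simulation and seed-first-simulation require equivalent sets of terminal assemblies under $R$, from which shape-simulation follows immediately. Your version is slightly more detailed in that you explicitly verify the clean-mapping condition as well, but the core idea is identical.
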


Corollary \ref{cor:seed-growth-implies-shape} follows immediately from the fact that both seed-growth-simulation and seed-first-simulation require equivalent sets of terminal assemblies between the simulated system $\calT$ and the simulating system $\mathcal{S}$ (under mapping by $R$), which implies shape-simulation.

\begin{corollary}\label{cor:seed-first-implies-seed-growth}
If system $\mathcal{S}$ seed-first-simulates $\calT = (T,\sigma,\tau)$, then $\mathcal{S}$ seed-growth-simulates $\calT$.
\end{corollary}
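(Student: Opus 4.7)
The plan is to unpack the definitions of seed-first-simulation and seed-growth-simulation side by side and observe that the only condition that differs between them is the production-equivalence clause; conditions $\mathcal{T} \dashv^{\sigma_\calT}_R \mathcal{S}$ and $\mathcal{S} \models_R \mathcal{T}$ appear verbatim in both and so require no argument. Thus the entire content of the corollary reduces to showing that $\mathcal{S} \Leftrightarrow^{\sigma_\calT}_R \mathcal{T}$ (Definition~\ref{def-equiv-prod-mod}) implies $\mathcal{S} \Leftrightarrow^{-\sigma_\calT}_R \mathcal{T}$ (Definition~\ref{def-equiv-prod-minus}). I would verify each of the four numbered clauses of the latter in turn, using the three clauses of the former.

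For clause~(1) of $\Leftrightarrow^{-\sigma_\calT}_R$, I would fix an arbitrary $\alpha' \in \prodasm{\mathcal{S}}$ and consider $R^*(\alpha')$. By clause~(1) of $\Leftrightarrow^{\sigma_\calT}_R$, either $R^*(\alpha') \in \prodasm{\mathcal{T}}$ or $R^*(\alpha') \sqsubseteq \sigma_\calT$. In the first case, since every producible assembly of an aTAM system contains its seed as a subassembly, $\sigma_\calT \sqsubseteq R^*(\alpha')$, so the tile values at locations in $\dom \sigma_\calT$ agree with $\sigma_\calT$. In the second case the agreement holds trivially because subassemblies of $\sigma_\calT$ are equal to $\sigma_\calT$ on their domain. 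Either way the required per-location agreement holds.

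For clause~(2), I would establish both inclusions of the set equality. Given $\alpha' \in \prodasm{\mathcal{S}}$, the same case analysis as above shows $R^*(\alpha') \cup \sigma_\calT$ is either $R^*(\alpha') \in \prodasm{\mathcal{T}}$ (when the seed was already contained) or $\sigma_\calT \in \prodasm{\mathcal{T}}$ (when $R^*(\alpha')$ was only a partial fragment of $\sigma_\calT$), giving the forward inclusion. Conversely, any $\alpha \in \prodasm{\mathcal{T}}$ is hit by $R^*$ on some producible $\alpha'$ of $\mathcal{S}$ by clause~(1) of $\Leftrightarrow^{\sigma_\calT}_R$, and then $R^*(\alpha') \cup \sigma_\calT = \alpha$ again because $\sigma_\calT \sqsubseteq \alpha$. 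Clauses~(3) and~(4) of $\Leftrightarrow^{-\sigma_\calT}_R$ are then verbatim clauses~(2) and~(3) of $\Leftrightarrow^{\sigma_\calT}_R$ and need nothing further.

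There is no real obstacle here, since the argument is a direct definitional check. The only place that requires any thought at all is the union-with-$\sigma_\calT$ step of clause~(2), where one must keep in mind the standing aTAM fact that $\sigma_\calT \sqsubseteq \alpha$ for every $\alpha \in \prodasm{\mathcal{T}}$; this is what makes the ``union with the seed'' operation in the minus-seed definition collapse to a no-op on genuine producible assemblies and to seed-completion on the partial subassemblies that seed-first-simulation is permitted to produce.
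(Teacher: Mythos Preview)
Your proposal is correct and follows essentially the same approach as the paper: both isolate the only difference between the two simulation notions to the production-equivalence clause and then observe that assemblies mapping to subassemblies of $\sigma_\calT$, as well as assemblies mapping to full producibles of $\calT$ (which necessarily contain $\sigma_\calT$), satisfy the ``minus $\sigma_\calT$'' conditions. Your write-up is simply a more explicit clause-by-clause verification of what the paper states in one sentence.
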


Corollary \ref{cor:seed-first-implies-seed-growth} follows immediately from the fact that the only difference between the two types of simulation is that seed-first-simulation requires equivalent productions modulo $\sigma$, while seed-growth-simulation requires equivalent productions minus $\sigma$. Seed-first-simulation means that assemblies that map strictly to subassemblies of the seed are first produced, and all of these are valid under equivalent productions minus $\sigma$, and from there all producible assemblies contain the full seed and must map to producible assemblies in $\calT$, and all such assemblies are also valid under equivalent productions minus $\sigma$.

\section{Basic Results and Observations About Seeds}\label{sec:basic}

In this section, we provide a set of relatively basic results and observations that display the amount of information that can be contained in a seed, that it is uncomputable to know whether or not blocking may be caused by the tiles of a seed, the fact that single-tile seeds cannot always be used to replace multi-tile seeds due to the potential for blocking, and that some shapes would require infinite-sized seeds to self-assemble in the aTAM.

\subsection{Encoding information in a seed structure}\label{sec:seed-encoding-GoE}

An aTAM system is defined as a triple. For example, system $\calT = (T,\sigma,\tau)$. $T$ is the tile set and can be encoded using an amount of information proportional to the number of tile types, $t = |T|$, multiplied by the amount of information required to represent each tile type, which given the number of glue types $g$ and the maximum glue strength $\tau$, is $O(\log(g)\log(\tau))$ bits. The seed structure $\sigma$ can be encoded using an amount of information proportional to the number of locations it contains $|\sigma|$, multiplied by the amount of information required to represent which tile type is at each location, which is $O(\log(t))$, for a total of $O(|\sigma|\log(t))$ bits. 
While the number of tile types in $T$ impacts the number of bits required to represent each seed location, a constant number of tile types can clearly be used to define an infinite number of seeds.
Figure \ref{fig:garden-of-eden} gives an example of a tile set which can be used to form an infinite number of seeds that can encode any binary number and are stable at temperature $\tau = 2$. It also gives an example of a seed structure which may encode information in its geometry. 

\begin{figure}
    \centering
    \includegraphics[width=6.0in]{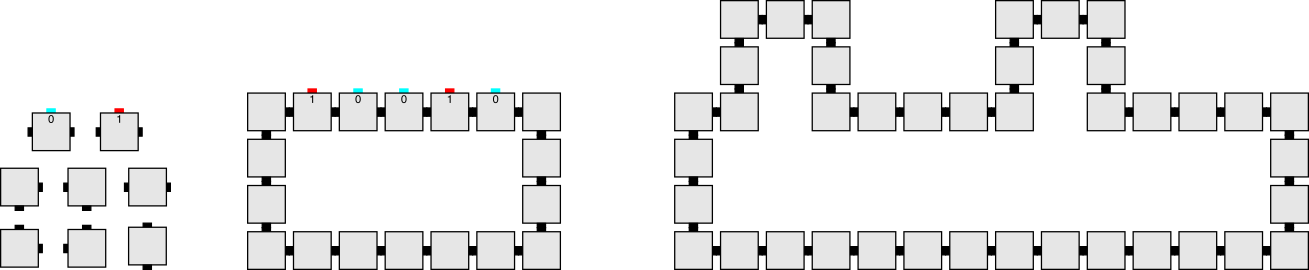}
    \caption{(Left) An example of a constant-sized tile set which can be used to form an infinite number of stable (at temperature 2) assemblies in a ``Garden of Eden'' manner (i.e. they are not stable if even a single tile is missing, and thus there is no valid growth process in which they form). (Middle) An example assembly encoding a binary number via its north-facing glues. (Right) An example assembly potentially encoding information in its geometry (i.e. information that could be ``read'' by paths of tiles that may or may not crash into tiles in specific locations).}
    \label{fig:garden-of-eden}
\end{figure}

\subsection{Uncomputability of blocking}

\begin{observation}\label{obs:uncomp-blocking}
Given an arbitrary aTAM system $\calT = (T,\sigma,\tau)$, it is uncomputable to know if some locations in the seed assembly $\sigma$ can block tiles (in one or more valid assembly sequences).
\end{observation}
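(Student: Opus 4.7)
The plan is a many-one reduction from the Turing machine halting problem to the blocking decision problem. Since the aTAM at temperature $\tau \geq 2$ is computationally universal via standard zig-zag simulation, given any Turing machine $M$ and input $w$ I would computably produce an aTAM system $\calT_{M,w} = (T, \sigma, \tau)$ whose seed $\sigma$ contains a distinguished ``target'' tile $t_b$ at a location $\vec{l}$, arranged so that $t_b$ at $\vec{l}$ blocks a tile (in the sense of Definition \ref{def:blocking}) if and only if $M$ halts on $w$. Undecidability of halting then immediately yields Observation \ref{obs:uncomp-blocking}.

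The construction of $\sigma$ has two parts. The first is the single tile $t_b$ placed at a fixed location $\vec{l}$ with all four edges carrying null glues, so $t_b$'s presence is invisible to any later tile attachment. The second is a short seed row far from $\vec{l}$ that launches a standard Turing-machine-simulating subassembly encoding the computation of $M$ on input $w$. I would augment the gadget so that if and only if the simulation ever enters a halting state, it emits a ``return wire'' -- a path of tile attachments -- that routes around the computation region and approaches $\vec{l}$ from the outside, placing tiles in positions adjacent to $\vec{l}$ whose glues toward $\vec{l}$ sum to $\tau$ and match a newly introduced tile type $t' \in T$ with $t' \neq t_b$. No tile of this wire uses any glue of $t_b$ (vacuously, since $t_b$ has only null glues).

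Correctness of the reduction is then immediate. If $M$ halts on $w$, the return wire is produced in some valid assembly sequence; its tiles were attached without requiring any glue of $t_b$, and $t'$ could bind at $\vec{l}$ with strength $\tau$ if $t_b$ were removed, so by Definition \ref{def:blocking} the seed location $\vec{l}$ blocks. Conversely, if $M$ does not halt, the return wire is never produced, and by construction no other producible tile attachment approaches $\vec{l}$, so no seed location blocks. Hence any algorithm deciding whether some seed location of an arbitrary aTAM system can block would decide halting, a contradiction.

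The main obstacle will be engineering the return wire and the Turing-simulating subassembly so that (i) growth toward $\vec{l}$ occurs exactly on the halting branch and never spuriously, (ii) no other seed location blocks in any producible assembly (so that $\vec{l}$ is the unique candidate), and (iii) the glues presented to $\vec{l}$ by the return wire match $t'$ with total strength at least $\tau$ while remaining independent of $t_b$. All three requirements should be routine to satisfy with the standard deterministic zig-zag Turing gadgets: the halting state can trigger a dedicated output glue that seeds a single directed path, the path's geometry can be laid out around the computation region so as to avoid the rest of $\sigma$, and at $\tau = 2$ the wire can simply place two tiles adjacent to $\vec{l}$, each carrying a single strength-1 glue that cooperatively binds $t'$.
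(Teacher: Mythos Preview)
Your approach is essentially the same as the paper's---a reduction from Halting via a zig-zag Turing-machine simulation that, upon entering a halting state, sends a path back to crash into a designated seed tile. There is, however, a genuine technical gap in your construction as stated: the seed assembly you describe is not $\tau$-stable. You place $t_b$ with all four edges carrying null glues, and you put the Turing-machine seed row ``far from $\vec{l}$''. But the paper requires $\sigma$ to be a finite $\tau$-stable assembly, meaning every cut of its binding graph has strength at least $\tau$; an isolated tile (or one abutting the rest only via null glues) yields a cut of strength $0$, so your $\sigma$ is not a legal seed.

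The fix is easy and is precisely what the paper does: rather than isolating the target tile, make it the end tile of the seed row itself. In the paper's construction the seed is a single connected row encoding the input, padded with a few extra tiles on the left that have no northern glues; the halting branch grows leftward past the row and then straight down, so the crashing column meets the leftmost seed tile from above without using any of its glues. This simultaneously keeps $\sigma$ $\tau$-stable, ensures the blocking tile's relevant side carries a null glue (so the crashing path never depends on it), and makes your condition (ii) automatic, since every other seed tile's only neighbors outside the seed are the first TM row, which attaches \emph{using} those seed tiles' northern glues and hence does not witness blocking.
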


\begin{figure}
    \centering
    \includegraphics[width=6.0in]{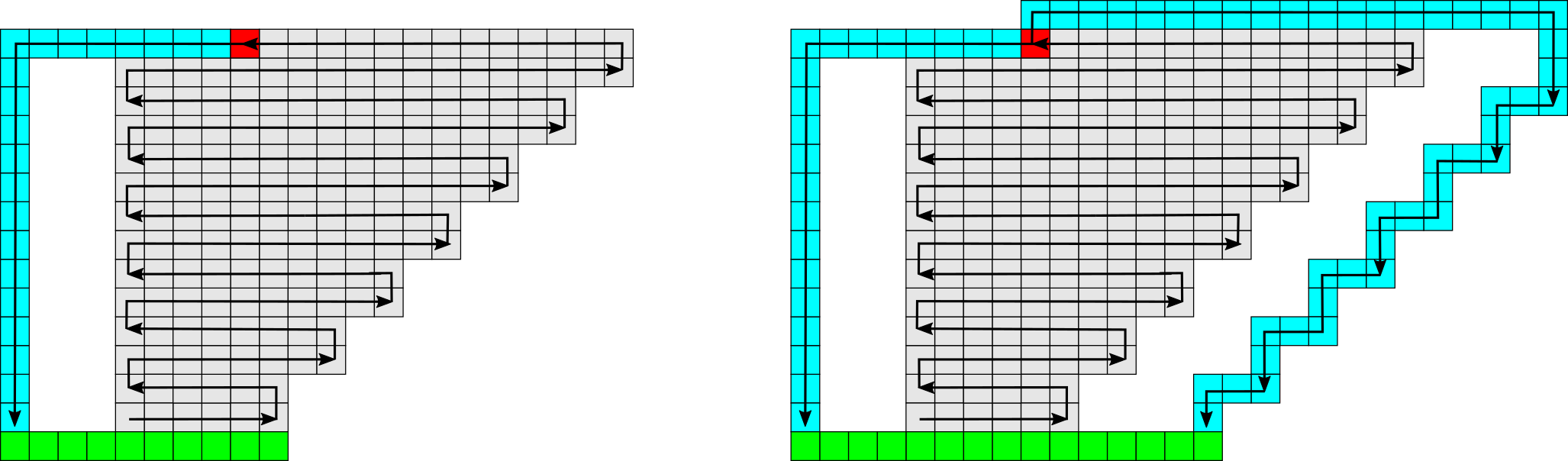}
    \caption{(Left) Schematic example of an assembly simulating a Turing machine and showing the uncomputability of blocking. The seed row (green) encodes the input to the Turing machine (TM) via exposed glues on the north and initiates growth of an assembly where the rows grow in a zig-zag pattern and simulate the TM in a standard way (i.e. each pair of rows represents the configuration of the TM at the time step following that of the pair below it). If and only if the TM halts on the input (the halt state is depicted by the red tile), a row of blue tiles grows to the left and then a column of tiles of the same type grows downward. This column will crash into the seed. Since the crashing column only grows if the TM halts, it is uncomputable to know if growth will crash into the seed.
    (Right) Slight modification to the construction on the left in which, upon halting, the TM initiates a crashing column on the left and another on the right side. This makes it uncomputable to know if the leftmost and/or rightmost seed locations will block and impossible to start from a single seed tile yet guarantee sufficient growth of the seed to provide whatever blocking may be needed.}
    \label{fig:TM-crash}
\end{figure}

\begin{proof}
We prove Observation \ref{obs:uncomp-blocking} by providing a simple reduction to the Halting problem. We'll call the language that consists of binary strings representing aTAM systems that have some seed location that can block the growth of a tile the \texttt{BLOCKING} language. Assume that \texttt{BLOCKING} is computable. Then, there exists some Turing machine, say $M_{block}$, which decides the language. We now construct a new Turing machine, $M_{halt}$ as follows: On input $\langle M,b \rangle$, where $M$ is an encoding of an arbitrary Turing machine and $b$ is an arbitrary binary string, $M_{halt}$ first constructs aTAM system $\calT_M = (T_M, \sigma_b, 2)$ following the design of the system on the left of Figure \ref{fig:TM-crash}. The seed $\sigma_b$ is the green row which encodes $b$ in its northern glues with an additional 4 tiles to the left that have no northern glues. The tile set $T_M$ consists of the seed tile types, plus a set of tiles that simulate a Turing machine in a standard manner (e.g. see \cite{DirectedNotIU} for a definition of zig-zag Turing machines and Figure \ref{fig:TM-crash} for a schematic depiction), plus a small constant-sized set (depicted by the blue tiles) that grows to the left from the tile representing the halting state (if it appears in the assembly, as it does if and only if $M(b)$ halts), past the left edge by four tiles, and then a single tile type that can repeatedly attach to copies of itself in a downward growing column. Note that this column can only form if $M$ halts on $b$, and in this case the downward growing column crashes into the leftmost seed tile, which blocks its growth. With $\calT_M$ thus constructed, $M_{halt}$ gives $\calT_M$ to $M_{block}$ as input. $M_{block}(\calT_M)$ accepts if some location of the seed blocks, which occurs if and only if $M$ halts on $b$, and rejects otherwise. $M_{halt}$ simply outputs the answer given by $M_{block}$. Thus, $M_{halt}$ solves the halting problem, which we know is uncomputable. Therefore, \texttt{BLOCKING} must also be uncomputable.
\end{proof}

\subsection{Impossible simulation at scale factor 1}\label{sec:imp-scale-1}

\begin{figure}
    \centering
    \includegraphics[width=2.5in]{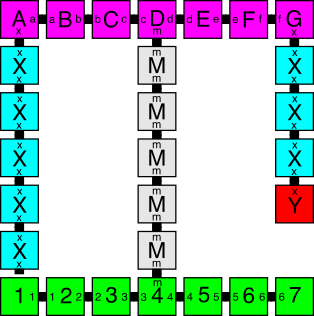}
    \caption{A simple nondeterministic system with a multi-tile seed (green) that can be neither seed-growth-simulated, nor even shape-simulated, by any system with a single-tile seed at scale factor 1. This is because the center column grows upward to a nondeterministic height before possibly growing the pink row that initiates the downward growing blue columns. Each blue column may nondeterministically terminate with a red tile at any height, or continue until crashing into the seed. Any system attempting to simulate it at scale factor 1 must be able to grow a single side of the green row before growing rows that can crash into either side - or continue downward through and beyond the unfilled location of an incomplete side, which would lead to an invalid simulation and/or shape. (Note that the spacing of two empty tile locations between pairs of columns prevents the use of fuzz from enabling a simulation at scale factor 1, and the nondeeterministic ability of blue columns to terminate with red tiles before crashing prevents shape simulation.)}
    \label{fig:simple-scale-1}
\end{figure}

Although it is uncomputable to know if the leftmost seed tile will block growth in the system on the left in Figure \ref{fig:TM-crash}, it is still possible to simulate that system at scale factor 1 with a system utilizing a single-tile seed because the leftmost tile of the multi-tile seed could be used as the single seed tile of the simulating system. The seed could then grow to the right, and if the simulated TM ever halts and a column grows downward, it is guaranteed that the blocking tile of the seed must already be there. (Note that this simulation is only seed-growth simulation if the first row of the zig-zag TM growth starts from the right side and grows right-to-left, since only then can the seed be guaranteed to be fully grown before other growth begins. Otherwise, it would be a seed-growth-simulation.)

However, the system depicted on the right of Figure \ref{fig:TM-crash} is different in this respect. Since there are two locations on opposite ends of the seed which have the potential to block growth, it turns out to be uncomputable to determine whether or not that system can be simulated at scale factor 1 by a system with a single-tile seed. This is because any green location which may be chosen as the location of the single seed tile in the simulator, which we'll call $\mathcal{S}$, has the potential to result in an assembly in which a seed tile that would block growth in the original system, say $\calT$, isn't yet placed when the growth to be blocked in $\mathcal{S}$ arrives. Since the seed of $\calT$ is a single-tile-wide row, the glues must be $\tau$-strength between tiles in the green locations so growth can begin from the seed and directly proceed through the shortest sequence which completes placement of all tiles that serve as input to the TM simulation. At this point, it must be the case that one of the green end tiles (the leftmost or rightmost) has not been placed. From here, a valid assembly sequence is one which places no other tiles in the green locations but instead grows the TM simulations for an arbitrary amount of time. If both TMs halt, it is possible for the assembly sequence to proceed to grow both of the blue paths back down to the seed and for the one on the incomplete side to grow through the seed location.
Thus, it is only possible to always block one side, and it is uncomputable to know if both sides need to be blocked, and therefore uncomputable to know whether or not the simulation will fail.

We now use Figure \ref{fig:simple-scale-1} to present a much simpler system and give intuition as to why even the relaxed notion of shape-simulation (i.e. only needing to generate terminal assemblies with the same shapes as the original system, without needing to follow the same dynamics) using single-tile seeds is impossible for certain systems at scale factor 1. Note that properly accounting for all technicalities, such as growth in fuzz locations, requires more complexities and in-depth analysis such as that which is provided for the proof of Theorem \ref{thm:multi-imposs} (which includes the impossibility of shape-simulation at scale factor 1, among others), and here we will ignore issues that arise with fuzz, etc. to present a high-level sketch that gives the general idea.


We will refer to the system depicted in Figure \ref{fig:simple-scale-1} as $\calT$. In  $\calT$, the green row is the seed assembly. A column grows upward, nondeterministically attaching one of two tile types at each step. The first has the same glue on the north and south and thus allows continued growth, and the second has no glue on the north but has glues on the east and west. Thus, this column can grow arbitrarily high before stopping upward growth. Once a tile of the second type attaches, a row of three tiles grows to each side. Then, a tile of either of two tile types can attach to form a downward growing column on each side.
One of them has the same north and south glue and allows for the column to grow arbitrarily long, and the other (labeled `Y') has no southern glue so stops the growth of the column. The growth of each downward column is nondeterministically either stopped by the attachment of a Y tile, or continues until it is eventually blocked by a seed tile. System $\calT$ makes an infinite number of terminal assembly shapes, which include assemblies with center columns of all heights, and for each of those, assemblies with all combinations of left and right columns of all lengths from 2 to that of the center column.

To show that $\calT$ cannot be shape-simulated at scale factor 1, we'll assume that aTAM system $\mathcal{S}$ is a system with a single-tile seed which attempts to do so.  First, we note that we will make (relatively trivial) use of the Window Movie Lemma from \cite{temp1notIU} (which we include in Section \ref{sec:wml} for completeness), that essentially says that as a constant-width path of tiles becomes arbitrarily long, it must eventually repeat the glues placed along multiple cuts (a.k.a. windows), as well as their ordering (a.k.a. window movies). This means that the segment of the path between those repeated cuts can be ``pumped'' (i.e. there is a valid assembly sequence in which that segment can be repeated an arbitrary number of times). The grey path growing upward and the blue paths which grow down to crash into the seed can be arbitrarily long. Therefore, no matter how many tile types are in $\mathcal{S}$ it is possible for the columns to be tall enough that they must be pumpable.

We now consider all possible locations for the single seed tile of $\mathcal{S}$, which we'll call $\vec{s}$. We'll call the leftmost green location $\vec{l}$, the middle $\vec{m}$, and the rightmost $\vec{r}$. If $\vec{s}$ is located within a green location, it must be possible for the tile that fills location $\vec{m}$ to initiate upward growth of the grey column since that is the only connection between the seed and an infinite number of assemblies whose blue columns stop before crashing into the seed. There must be a valid assembly sequence that grows a row directly from the seed location to place a tile at $\vec{m}$ (noting that if $\vec{m}$ is the seed tile location, no tiles need to be added), at which point an arbitrarily tall grey column must be able to grow upward. It must be possible for this growth, after reaching the distance necessary for pumping, to place the pink tile which initiates the growth to the sides and the downward growing blue columns. Since the blue columns must also be tall enough to pump, it must be possible to pump them until they reach the seed. Since the only portion of the green row to have grown was that directly from the seed location to the grey tile, there must be at least one of the locations $\vec{l}$ or $\vec{r}$ without a tile before the blue column arrives, allowing the blue column to continue pumping past the green row. This creates a shape which is invalid (i.e. it cannot be made in $\calT$), so the simulator fails. Trivial case analysis shows that if the seed location is in any other location, rather than one of the green locations, it is still impossible to ensure that pumpable columns will be blocked, and thus $\mathcal{S}$ must fail.

These scenarios display the difficulty associated with simulating systems with multi-tile seeds by those with single-tile seeds. Given an arbitrary system, it is uncomputable to know which seed locations, if any, can block growth. If multiple seed locations exist which can block growth, it is necessary that there be the ability to grow dependence paths from each to the portion of the assembly that could initiate the crashing growth. Without increasing the scale factor, it is not always possible to do so.


    


\subsection{Shapes requiring infinite-sized seeds}

Trivially, any finite shape (where we define a shape as a connected subset of $\mathbb{Z}^2$) can self-assemble in the aTAM. For instance, given a target shape $S$, for every point in $S$ a unique tile type can be created which, for every adjacent tile when positioned correctly in $S$, has a strength-1 glue that is unique to that pair of adjacent tiles on their abutting sides. Thus, the size of the tile set is the number of points in $S$. The seed can be any single tile from the tile set, and the temperature $\tau=1$. This system will be directed and self-assemble a single terminal assembly of shape $S$.

Therefore, we consider infinite shapes. The aTAM requires that any seed assembly be finite, that is, a seed can contain only a finite number of tiles. Because of this, there are shapes that cannot self-assemble in any system in the aTAM. It has previously been shown that there exists an infinite class of shapes that cannot self-assemble in the aTAM \cite{jSSADST,jSADSSF} (the Sierpinski triangle and many other discrete self-similar fractals), and it would be easy to show that these could self-assemble if allowed infinite seeds (e.g. the seeds could trivially be the entire shapes). However, for completeness in our discussion of how seeds impact self-assembly in the aTAM, here we present a simpler example of a class of shapes which would also require infinite seeds to self-assemble in the aTAM and provide a short proof. An example can be seen in Figure \ref{fig:infinite-seed-shape}.

\begin{figure}
    \centering
    \includegraphics[width=2.0in]{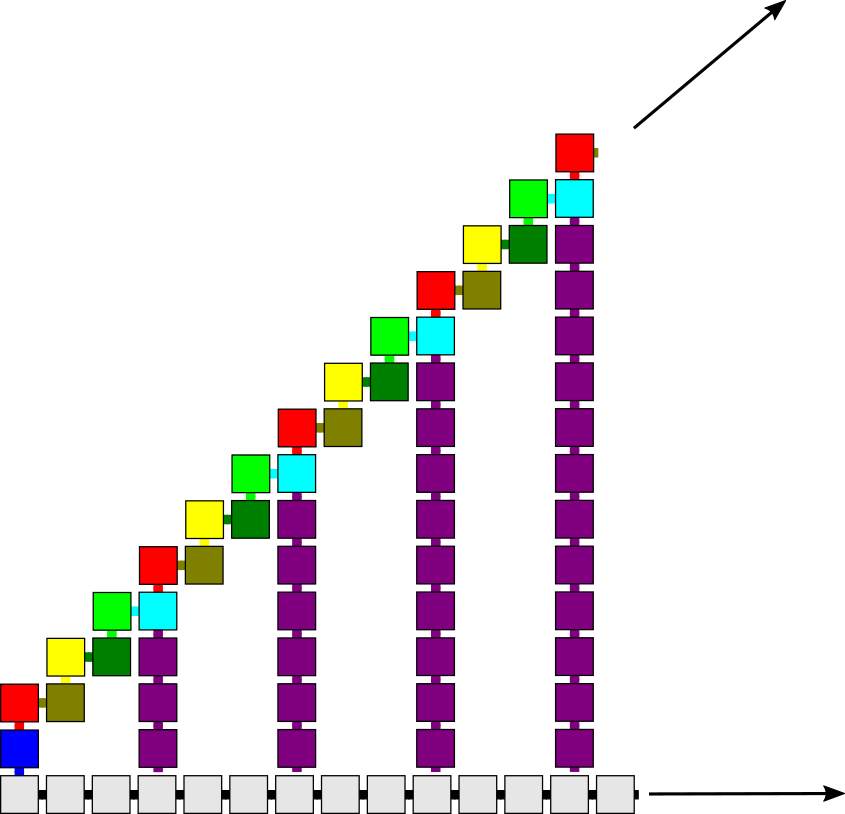}
    \caption{The left side of an infinite shape which could only self-assemble in the aTAM if seeds of infinite size were allowed. The full shape consists of an infinite horizontal line (along the bottom), an infinite diagonal line up and to the right, and vertical lines between those two in every third column.}
    \label{fig:infinite-seed-shape}
\end{figure}

\begin{observation}\label{obs:infinite-seed}
There exist shapes which could only self-assemble in the aTAM if seeds with an infinite number of tiles were allowed.
\end{observation}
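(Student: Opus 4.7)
The plan is to argue by contradiction: suppose some aTAM system $\calT = (T, \sigma, \tau)$ with finite seed $\sigma$ self-assembles the shape $S$ shown in Figure \ref{fig:infinite-seed-shape}. A first easy observation is that, since $\calT$ self-assembles $S$, no producible assembly of $\calT$ may contain a tile at a position outside $S$: tiles are never removed, so any such tile would persist into the terminal assembly and violate $\termdom{T} = \{S\}$. Hence it suffices to exhibit a producible assembly of $\calT$ containing at least one tile at a position not in $S$.

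The structural feature of $S$ that does the work is that $S$ contains an infinite family of 1-tile-wide vertical columns $C_1, C_2, \ldots$, at $x$-coordinates $3, 6, 9, \ldots$, whose heights grow without bound in $k$. By the description of $S$, the interior tiles of $C_k$ have no neighbors in $S$ to the east or west, so every such interior tile must bind using only its north and/or south glues in any valid assembly sequence of $\calT$.

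The main step is a pumping argument on a single sufficiently tall column. Let $B$ be a bounding box containing $\sigma$, and choose $k$ large enough that $C_k$ lies strictly outside $B$ and the interior of $C_k$ contains more than $|T|$ tiles. Fix any valid assembly sequence of $\calT$ producing a terminal assembly, and restrict attention to the attachments placing the interior tiles of $C_k$; by pigeonhole, two distinct interior positions $(3k, y_1)$ and $(3k, y_2)$, $y_1 < y_2$, receive tiles of the same type. The plan is then to splice the segment of attachments that took the column from $y_1$ to $y_2$ into the assembly sequence just after $(3k, y_2)$: this produces a valid sequence that extends $C_k$ by $y_2 - y_1$ tiles past its intended top. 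Iterating the splice as many times as desired yields a producible assembly of $\calT$ containing tiles at positions $(3k, y)$ with $y$ strictly greater than the height of $C_k$ in $S$, which lie outside $S$. This contradicts the observation above.

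The main technical obstacle will be making the splice rigorous when $\tau \geq 2$ and the interior tiles of $C_k$ bind cooperatively via both a north and a south strength-$1$ bond, so the column cannot in general be grown as a simple chain from either end and the naive bottom-up splice is not immediately valid. I would handle this uniformly by applying the Window Movie Lemma (Section \ref{sec:wml}) to 1-tile-wide windows sliding horizontally across $C_k$: for $C_k$ tall enough, two such windows must record identical window movies, and the column segment between them can then be pumped to extend $C_k$ arbitrarily, regardless of the precise growth order that $\calT$ actually uses, which is enough to produce the needed tile outside $S$.
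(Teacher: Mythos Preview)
Your argument has a genuine gap at the pumping step. You fix an assembly sequence that produces a \emph{terminal} assembly and then try to pump the column $C_k$ upward past its intended top. But the top endpoint of $C_k$ lies on the diagonal line of $S$, and in any terminal assembly that position is filled. Nothing in your argument rules out that this top tile (and the diagonal approaching it) is placed \emph{before} the interior column tiles you want to repeat; if it is, the pumped column simply crashes into it and you never place a tile outside $S$. The paper's proof handles exactly this point: it does not work with an arbitrary terminal sequence but instead explicitly builds a producible assembly in which the bottom endpoint of the chosen column is reached while the top endpoint is still empty (using that the bottom-row path and the diagonal path to those two endpoints are disjoint one-tile-wide paths, so neither depends on the other). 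Only then does it pump the column upward past the vacant top position.

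Your fallback to the Window Movie Lemma does not close the gap as stated. A ``1-tile-wide window sliding horizontally across $C_k$'' is a single grid edge, which is not a cut-set of the infinite grid graph over $\mathbb{Z}^2$, so the WML does not apply to it. If instead you take the window to be a full horizontal line, then the window movie records every glue crossing that line---including those along the diagonal and in every other column---and a pigeonhole on tile types inside $C_k$ alone gives you no control over those. Even granting a valid pumping, the WML produces a new assembly by translating one side of the cut; you still have to argue that the translated piece does not collide with what is already present, which again requires knowing the top of $C_k$ is empty. The missing ingredient is precisely the paper's careful choice of a partial assembly sequence that leaves one end of the column unoccupied before pumping.
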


\begin{proof}
To prove Observation \ref{obs:infinite-seed}, we refer to the shape depicted in Figure \ref{fig:infinite-seed-shape}, which we'll call $S$. $S$ consists of two infinite paths. One is horizontal and goes infinitely far to the right. The other begins at the left end of the first and goes diagonally upward and to the right infinitely far. At every third $x$-coordinate, a vertical column exists between the two lines. Our proof will be by contradiction, so assume that $\mathcal{S} = (T,\sigma,\tau)$ is a system with a finite seed (i.e. $|\sigma| < \infty$) which self-assembles $S$. Let $x_{max}$ be the $x$-coordinate of the easternmost tile of $\sigma$. Since $S$ is infinite to the right, each of the two main paths extends infinitely far to the right of $x_{max}$, and there are an infinite number of vertical columns to the right of $x_{max}$, each taller than the one to its left.  Clearly, regardless of the size of $\mathcal{S}$'s tile set, each column eventually becomes tall enough to be pumpable. Let that height be denoted by $p$. Let $c_{2p}$ be the first column of height $2p$ which is to the east of any point in $\sigma$. We note that, in order for $\mathcal{S}$ to correctly create an assembly of shape $S$, it must be possible for growth to proceed from $\sigma$ to at least one of the bottom or top tiles of $c_{2p}$, and for that tile to initiate growth of at least half of column $c_{2p}$. (If that is not possible, then either the assembly never gets to column $c_{2p}$, or neither the top nor the bottom can initiate growth that gets to the middle, so column $c_{2p}$ isn't completed and either way $\mathcal{S}$ fails to make shape $S$.) Without loss of generality, assume that $\sigma$ can grow until placing the bottom tile of $c_{2p}$, which is then able to initiate growth upward at least half the height of that column, namely distance $p$. (If instead it is only the top tile that can grow this far, simply flip the directions in the rest of the proof.) 
There must exist some valid, minimal assembly sequence which grows from $\sigma$ to place the bottom tile of $c_{2p}$. Let $c_{2p-1}$ denote the column which is the first to the left of $c_{2p}$. 
Note that it must be possible to grow the path from the bottom of $c_{2p-1}$ to the bottom of $c_{2p}$ without growing the path from the top of $c_{2p-1}$ to the top of $c_{2p}$, since they are disjoint and each one-tile-wide and thus there cannot be a dependency that prevents this.
Now, in $\mathcal{S}$ grow column $c_{2p}$ upward to the mid-point, which it is known to be able to do. However, since the length of this path is the pumping length, it must also be possible to increase the length of this path arbitrarily. Therefore, do so and pump the growth of the column beyond the height $2p$, which must be possible since the top tile of $c_{2p}$ is not in place to block. This places at least one tile outside of shape $S$ and this is a contradiction to the claim that $\mathcal{S}$ self-assembles shape $S$. The only assumption made of $\mathcal{S}$ was that it had a finite seed, thus $S$ cannot be self-assembled within any aTAM system with a finite seed. However, if the seed was allowed to be the infinite bottom row, for example, a system using that seed could easily self-assemble $S$ by growing the diagonal row upward and to the right, and initiating all column growth downward from that. Each column would be guaranteed to crash into the seed row, so $S$ would be correctly self-assembled and Observation \ref{obs:infinite-seed} is proven.
\end{proof}

It is worth noting that this shape actually cannot self-assemble at any scale factor. As the columns get arbitrarily long, any fixed-width provided by a constant scale factor would eventually be unable to prevent pumping, and it must always be the case that either the top or the bottom could grow after the other, and thus the pumping arm could go past that boundary. There exist an infinite set of similar shapes, each differing by just the spacing between the vertical columns and each similarly impossible to self-assemble in the aTAM with a finite seed. Thus, this exhibits an infinite set of shapes, each of which cannot self-assemble in the aTAM at any scale factor.

\section{Limits of Single-Tile Seeds}\label{sec:single-tile-limits}


While we previously provided a relatively simple example to show that some systems with multi-tile seeds cannot be simulated at scale factor 1 by systems with single-tile seeds, in this section we maximize the scale factor for which that is true (since Theorems \ref{thm:scale3-sim} and \ref{thm:scale4-sim} prove that simulation is possible above the bounds shown). Namely, we present a result, Theorem \ref{thm:multi-imposs}, showing that there are systems with multi-tile seeds which require at least scale factor 3 plus the use of cheating fuzz, or scale factor 4 (without cheating fuzz), to shape-simulate using systems with single-tile seeds. Note that since seed-first simulation implies shape-simulation, this also proves that seed-first-simulation requires such scaling for some shapes. This section contains a brief description of the proof, and the full details can be found in Section~\ref{sec:multi-imposs-proof}.

\begin{theorem}\label{thm:multi-imposs}
There exist an infinite number of aTAM systems with multi-tile seeds that cannot be shape-simulated by any aTAM system with a single-tile seed at (1) scale factors 1 or 2, or (2) scale factor 3 without using cheating fuzz.
\end{theorem}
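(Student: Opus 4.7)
The plan is to generalize the system $\mathcal{T}$ depicted in Figure~\ref{fig:simple-scale-1} into a parameterized family $\{\mathcal{T}_k\}_{k \geq 2}$, where $k$ controls the width of the empty horizontal spacing between the central pumping column and each downward-growing blue column (and hence between the corresponding blocking locations $\vec{l}$, $\vec{m}$, $\vec{r}$ in the green seed row). Each $\mathcal{T}_k$ retains the structure of $\mathcal{T}$: a single-row seed, a grey central column that nondeterministically pumps to arbitrary height, a pink row that then initiates left and right blue columns, and blue columns that each nondeterministically terminate or continue crashing into the seed. Choosing $k$ large enough guarantees that for any fixed scale factor $c \in \{1,2,3\}$ the scaled-up gap between the macrotile columns representing any two blue columns is wider than any legal-fuzz region emitted from either side, and that under the no-cheating-fuzz restriction no tiles at all may appear in the macrotile regions mapping to the empty gap. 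This construction automatically gives the ``infinite number'' clause of the theorem.

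Fix $c \in \{1,2,3\}$, a sufficiently large $k$, and any putative single-tile-seed aTAM system $\mathcal{S}$ that shape-simulates $\mathcal{T}_k$ under representation $R$ (with cheating fuzz forbidden when $c = 3$). The first step is to observe that both the grey central column of supertiles and each blue column of supertiles in $\mathcal{S}$ are pumpable once they are long enough. Because the corresponding simulated columns in $\mathcal{T}_k$ can grow arbitrarily tall, arbitrarily long constant-width simulator paths must be traversable; the Window Movie Lemma of \cite{temp1notIU} then yields two $c$-wide horizontal windows with identical window movies across such a path, and the intermediate segment can be inserted any number of times in a valid assembly sequence.

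Next comes case analysis on the location $\vec{s}$ of the seed tile of $\mathcal{S}$. The key dichotomy is between (A) $\vec{s}$ lies inside a macrotile region representing some tile of the green seed row, and (B) $\vec{s}$ lies outside. In case (A), I examine any valid assembly sequence from $\vec{s}$ that eventually places a tile initiating the upward pump of the grey column; by Lemma~\ref{lem:depend-path}, there is a dependence path from $\vec{s}$ to that initiator. Since the green seed row is one-dimensional and $\mathcal{S}$ has only one seed tile, one can schedule the assembly sequence so that only the green macrotiles on one chosen side of $\vec{s}$ have been filled before the grey column fires. Then, on the still-uncompleted side, I pump the grey column arbitrarily high, let the pink row grow, and pump the corresponding blue column straight through the empty macrotile region where the missing green blocking tile ought to have been. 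The resulting terminal shape protrudes below the green row, so its domain is not in $\termdom{\mathcal{T}_k}$, contradicting shape-simulation.

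In case (B), any assembly sequence that produces tiles in a green macrotile location must do so along a dependence path from $\vec{s}$ entering the green row through the legal fuzz channels adjacent to it. At scale $c \leq 3$, and with the no-cheating-fuzz restriction when $c = 3$, the number of independent entry channels into the green row from outside is too small to simultaneously commit both $\vec{l}$ and $\vec{r}$ blocker-macrotiles before the grey column becomes pumpable; one can always schedule an assembly sequence leaving at least one blocker macrotile unfilled, after which the same pumping argument produces an invalid terminal shape. The main obstacle will be making the fuzz bookkeeping in case (B) airtight, particularly at $c = 3$, where the $3 \times 3$ macrotile geometry maximizes the simulator's freedom; the prohibition on cheating fuzz is precisely what prevents $\mathcal{S}$ from routing growth through the empty inter-column macrotile regions to build an alternative seed-representation pathway, which is what distinguishes this lower bound from the matching upper bound of Theorem~\ref{thm:scale3-sim}.
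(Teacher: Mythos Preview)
Your proposal has a genuine gap: the parameterized two-column system $\mathcal{T}_k$ is too weak to establish the impossibility at scale factors $2$ and $3$. In case (A) you assert that ``one can schedule the assembly sequence so that only the green macrotiles on one chosen side of $\vec{s}$ have been filled before the grey column fires,'' but this is false in general. A scale-$2$ simulator with seed in the $\vec{m}$ macrotile can grow leftward along the bottom row of the green representation to $\vec{l}$ and rightward to $\vec{r}$, then bring return paths back along the top row from each end toward $\vec{m}$; the tile that exposes the northward glue initiating the grey column can be designed to attach only by \emph{cooperation} between the left-return and right-return paths. Under this design the grey column is strictly dependent on both $\vec{l}$ and $\vec{r}$, so no scheduling can fire grey with one blocker missing, and your pumping contradiction never materializes. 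The same synchronization trick works at scale $3$ without cheating fuzz, since three rows give even more routing room. Your case (B) inherits the same defect: the issue is not a shortage of ``entry channels'' but the simulator's freedom to gate activation via cooperation.

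This is exactly why the paper abandons the simple system after the scale-$1$ discussion and builds a five-arm seed in which every arm can launch paths that crash into \emph{both} of its neighbours. The resulting cyclic web of strict dependencies (each arm's pumpable crashing path strictly depends on the neighbouring arm's blocker) forces, via a counting argument (Lemma~\ref{lem:4-paths}), at least four directed dependence paths through the minimal cut of some arm. That cut spans one tiled macrotile plus one fuzz macrotile; at scale $1$ this is two tiles, at scale $3$ with cheating fuzz forbidden the fuzz macrotile contributes nothing (the adjacent seed tiles expose no glues there), leaving three tiles---both strictly below four. At scale $2$ the cut has exactly four tiles, and a further argument about the red-outlined fuzz macrotile (it must be fully occupied by the four paths yet must map to empty space, which then cannot represent the tile that path~7 legitimately places there in $\calT$) finishes the job. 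To repair your approach you would need to replace $\mathcal{T}_k$ with a system whose dependency graph has enough cycles to defeat cooperative synchronization at the target scales; simply widening the gaps in the two-column system does not achieve this.
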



\begin{figure}
    \centering
    \includegraphics[width=3.0in]{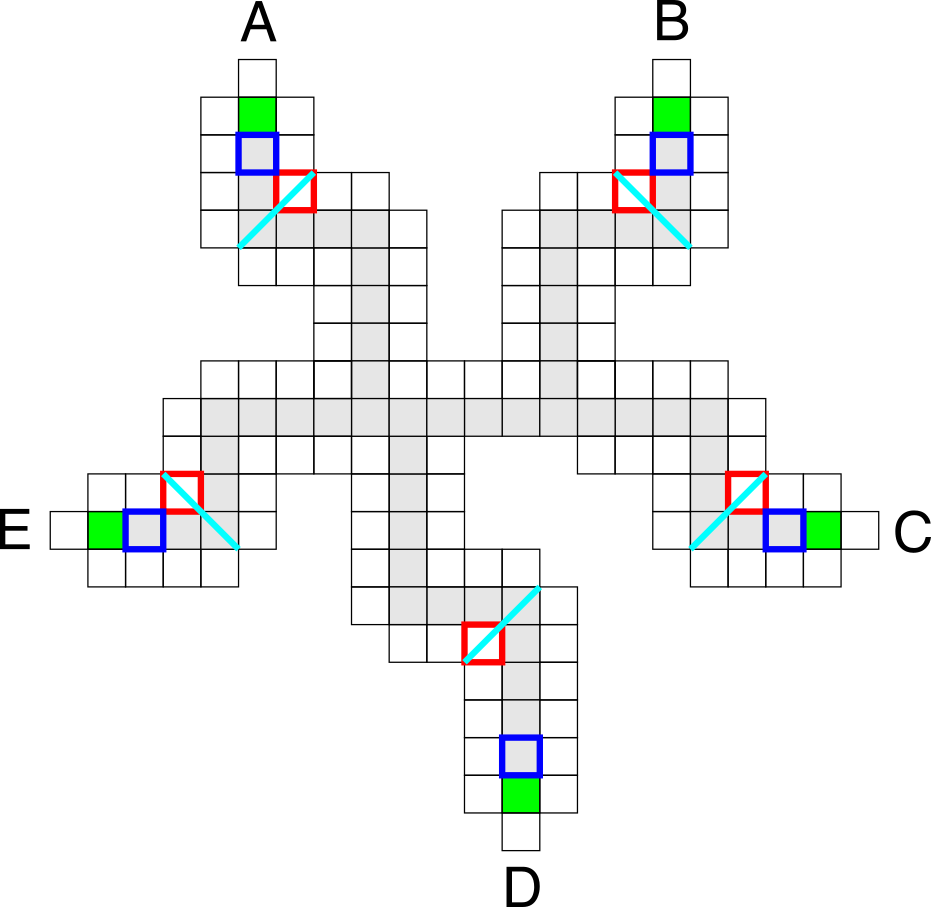}
    \caption{(left) The seed of aTAM system $\calT$ from the proof of Theorem \ref{thm:multi-imposs}. Grey and green locations represent tiles, white positions outlined in black represent potentially valid fuzz locations, tile positions outlined in blue and fuzz positions outlined in red are referenced in the proof. Green locations indicate the only seed tiles which have exterior-facing glues capable of initiating growth. The light blue lines show minimal cuts across each arm and its fuzz (each crossing two locations). The ``arms'' of the seed are labeled A-E for convenience.}
    \label{fig:multi-scale-system}
\end{figure}

\begin{figure}
    \centering
    \includegraphics[width=4.0in]{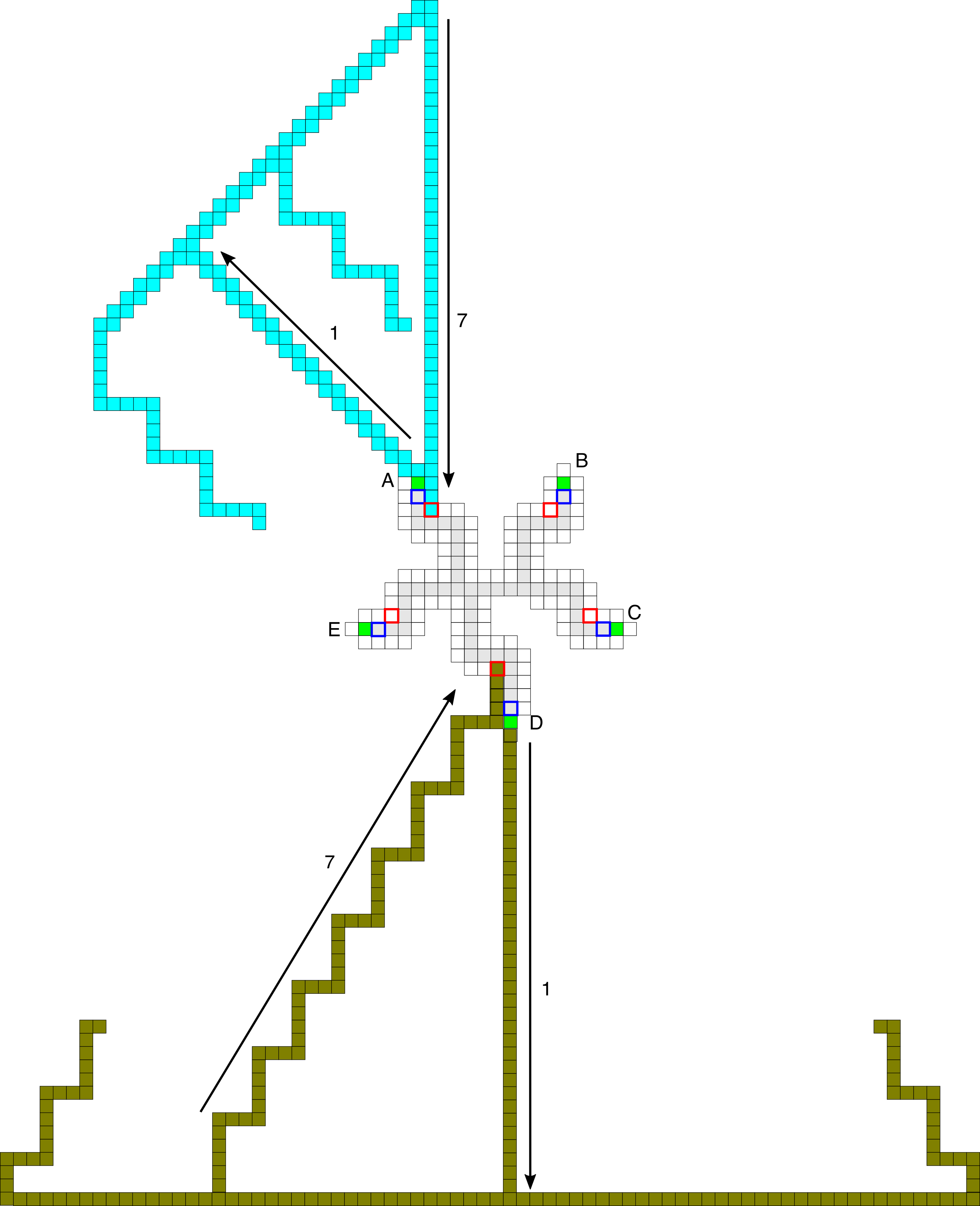}
    \caption{Depiction of an assembly producible in $\calT$ in the proof of Theorem \ref{thm:multi-imposs}. (The seed alone is shown in Figure \ref{fig:multi-scale-system}.) From each arm of the seed, paths can grow away from the seed, and two such (each labeled 1) are depicted here. The paths labeled 1 can grow arbitrarily far away from the seed and initiate the growth of paths back toward the seed. For example, the blue path 7 from arm A, and gold path 7 from arm D, have pumped and crashed into the red locations of those arms.
    }
    \label{fig:empty-crash}
\end{figure}

To prove Theorem \ref{thm:multi-imposs}, we present one system which can't be shape-simulated by any aTAM system with a single-tile seed at (1) scale factors 1 or 2, or (2) scale factor 3 without using cheating fuzz, and note how an infinite set of such systems can easily be derived from it.
Let $\calT = (T,\sigma,1)$, the seed of which is shown in Figure \ref{fig:multi-scale-system} and a producible assembly of which is depicted in Figure \ref{fig:empty-crash}, be such a system.
In $\calT$, there are five locations on the perimeter of the seed in which glues are exposed and to which tiles can attach. They are depicted in green.
Each arm can grow a uniquely colored subassembly, each of which uses tile and glue types unique to that subassembly.
An infinite number of unique terminal assemblies, and assembly shapes, are possible in $\calT$, since, for instance, each path labeled 1 can be of arbitrary length. The intuition behind the proof is that it is possible for arbitrarily long paths that start at the ends of the seed's arms to grow away from the seed assembly, and then to grow paths that come back toward the seed, where they can crash into other arms. In order for a system with a single-tile seed to simulate this system, it would have to ensure that there are tiles in place in each arm to provide blocking before any arm is able to grow crashing paths. However, at the small scale factors allowed there is not enough room for all necessary dependence paths to grow between all arms, so simulation fails.
\section{Single-tile Seeds via Seed-First-Simulation at Small Scales}\label{sec:min-scaled-sims}

Given that we have shown the impossibility of shape-simulation (the broadest form of simulation via singly seeding) at scale factors 1 and 2, a question that naturally follows is: do systems exist which are able to shape-simulate just above the proven lower scale bounds?
The answer is yes - when allowing cheating fuzz, this is possible at scale factor 3 (Section~\ref{sec:scale3-sim-fuzz}), and restricting ourselves to no cheating fuzz allows for simulation at scale factor 4 (Section~\ref{sec:scale4-sim}).
Both constructions utilize seed-first-simulation.

\subsection{Seed-First-Simulation at Scale Factor 4}\label{sec:scale4-sim}
We begin by introducing the construction for seed-first-simulation at scale 4 without cheating fuzz.
Although it is not the smallest possible scale factor for simulation, scale factor 4 has favorable characteristics that allow us to build an overall process for seed-first-simulation in a relatively straightforward manner that can be re-utilized in our scale factor 3 construction. 

\begin{theorem}\label{thm:scale4-sim}
Given an arbitrary aTAM system $\mathcal{T} = (T,\sigma,\tau)$, there exists an aTAM system $\mathcal{T}_4 = (T_4, \sigma_0, \tau_4= \max(2, \tau))$ which seed-first-simulates $\mathcal{T}$ at scale factor 4 and does not use cheating fuzz, where $|\sigma_0| = 1$ and $|T_4| \leq 28s + 16g + 6t$ given that $s = |\sigma|$, $t = |T|$, and $g$ is the total number of unique glue/strength combinations in $T$ and $\sigma$.
\end{theorem}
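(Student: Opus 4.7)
The plan is to build $\calT_4$ in two cleanly-separated phases, with the separation enforced by a temperature-$2$ cooperation gate so that seed-first discipline is automatic. First I would fix a spanning tree $\mathcal{H}$ of the adjacency graph of $\sigma$, rooted at the macrotile location designated for the single seed tile $\sigma_0$. From $\sigma_0$ a hard-coded ``skeleton'' of tiles threads outward along $\mathcal{H}$, placing one constant-size subassembly inside each visited macrotile. Each skeleton subassembly is labeled to encode (i) which tile of $T$ that macrotile represents and (ii) which of its $\mathcal{H}$-neighbor directions are children. Growth of the skeleton at scale $4$ is deterministic and is confined to a narrow interior channel that never occupies the outermost cells on any side of a macrotile; this is what scale $4$ buys us and is the first place where a smaller scale would break down.

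Next I would add a completion wave. Once the skeleton reaches a leaf of $\mathcal{H}$, a distinct ``wave'' tile attaches there; wave tiles then propagate backward toward the root along the skeleton, each requiring a $\tau_4$-strength cooperation between an existing skeleton tile and the previous wave tile. When the wave passes through a macrotile, it licenses the attachment of boundary tiles that expose, on each outward-facing side of that seed macrotile, a scaled copy of the glue that the corresponding tile of $\sigma$ presents in $\calT$. Because no boundary tile can attach before the wave visits its macrotile and the wave can only ignite at a leaf, no glue representing a glue of $\calT$ is ever exposed until the scaled seed is fully present, which is precisely the seed-first requirement.

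For non-seed positions I would reuse a standard macrotile template: the $4\times 4$ block has four side strips that receive incoming glue representations, a central ``voting'' region whose temperature-$\max(2,\tau)$ cooperation selects at most one tile type of $T$ whose glues match the received inputs with total strength $\ge \tau$, and boundary-placing tiles for each of the up to four glues the selected tile exposes. The key invariant is that the boundary-placing tile on side $d$ is only included in the machinery of tile type $u \in T$ when $u$ has a non-null glue on side $d$, so cheating fuzz is ruled out by construction. Taking $R$ to read each voting region (and to send skeleton-only or incomplete blocks to empty space), the three clauses of Definition~\ref{def-equiv-prod-mod}, the follows relation $\calT \dashv^{\sigma_\calT}_R \calT_4$, and the models condition $\calT_4 \models_R \calT$ then all follow by induction on assembly sequences, once one checks that the skeleton-plus-wave prefix of every producible assembly is unique and completes before any simulation-phase tile attaches.

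Finally, I would count tile types: each seed macrotile contributes a constant number of skeleton and wave tile types indexed by entry direction, set of child directions, and the represented tile label, summing to at most $28s$ after sharing common roles; each of the $g$ unique glue/strength combinations contributes at most $16$ boundary tile types across directions and the seed-exposing/simulation-exposing roles, giving $16g$; and the simulation template contributes at most $6$ interior tile types per element of $T$, giving $6t$. The main obstacle is precisely the cheating-fuzz restriction: the skeleton channel, the completion wave, and the simulation boundary machinery must all coexist inside a single $4\times 4$ block without ever attaching a tile on a side where the represented simulated tile has no glue. The reason $4$ suffices is that one may reserve the outermost ring of null-glue sides and keep the skeleton and wave strictly inside a $2$-cell-wide interior lane; verifying that this lane-geometry can always be laid out consistently across all spanning-tree patterns of macrotiles is the most delicate step of the argument, and the step I would expect to consume the bulk of the formal write-up.
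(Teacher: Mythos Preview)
Your overall architecture is close to the paper's, but the seed-first mechanism as you describe it does not work. The issue is asynchrony along different branches of your spanning tree $\mathcal{H}$. Suppose $\mathcal{H}$ has two leaves $L_1$ and $L_2$. In the aTAM there is a valid assembly sequence in which the skeleton grows from the root all the way to $L_1$ while the branch toward $L_2$ has not yet grown at all. Your wave then ignites at $L_1$ and propagates back; by your own description, ``when the wave passes through a macrotile, it licenses the attachment of boundary tiles.'' Hence external glues are exposed at the macrotiles between $L_1$ and the root while the $L_2$ branch of the scaled seed is still empty. Any growth initiated by those glues produces an assembly of $\calT_4$ that maps, under $R^*$, to something containing tiles outside $\sigma$ yet missing tiles of $\sigma$: this lies in neither $\prodasm{\calT}$ nor $\{\alpha \mid \alpha \sqsubseteq \sigma\}$, so condition (1) of Definition~\ref{def-equiv-prod-mod} fails. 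Your sentence ``no glue representing a glue of $\calT$ is ever exposed until the scaled seed is fully present'' conflates ``the skeleton has reached this macrotile'' with ``the skeleton has reached every macrotile''; the wave ignition condition guarantees only the former.

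The paper sidesteps exactly this synchronization problem by exploiting the fact that any scale-$2$ shape admits a Hamiltonian cycle. Each scale-$4$ supertile contains an interior $2\times 2$ block of ``core tiles,'' and the core path is a single linear dependence path (not a branching tree) that visits every supertile of the scaled seed and returns to the origin supertile. Only the last tile of this linear path carries the strength-$\tau_4$ glue that initiates the perimeter path; since a linear dependence path has no branches, its terminus can attach only after every core tile is in place, which is what forces seed-first behavior. A second issue you do not address, and which the paper handles explicitly, is that once some perimeter glues are exposed, non-seed supertile growth may collide with seed macrotiles whose perimeter path has not yet arrived; the paper equips the incoming ``output path'' tiles with auxiliary strength-$1$ $p$ glues so that a blocked perimeter path can resume by cooperating with the intruding growth.
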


The full proof of Theorem~\ref{thm:scale4-sim} is located in Section~\ref{sec:scale4-sim-proof}, and here we provide a brief overview.
The construction of a simulating system $\mathcal{T}_4$ consists of two main parts.
First, we create a tileset $T_\sigma$ which assembles a scale-4 version of the seed of $\calT$, $\sigma$.
The generation of $T_\sigma$ comprises the brunt of the construction.
The new assembly that grows to represent the seed $\sigma$ can be logically thought of as having 2 parts: the \emph{core path} and the \emph{perimeter path}. Each scale 4 supertile of $T_\sigma$ is shown to contain an interior scale 2 square which can be easily connected to its neighboring scale 2 squares; these 4 tiles are called \emph{core tiles}.
The core path is a dependence path which represents a Hamiltonian path through the core tiles, allowing for all supertiles in the seed to be connected to the remainder of the seed assembly, enabling seed-first simulation.
By focusing on the connectivity of the core tiles within the supertiles of the seed assembly, the core path is generated utilizing the fact that all shapes with scale factor $2$ contain a Hamiltonian cycle (see proof in \cite{SummersTemp}).
We generate both the core path and the perimeter path from a set of scale factor 4 template tiles which take advantage of the presences of a Hamiltonian cycle in the core tiles.
One of the supertiles represented as a template is the \emph{origin supertile} which contains the single-tile seed $\sigma_0$, and the beginning of the perimeter path.
We connect the end of the core path to the beginning of the perimeter path; this is the key part of the construction which allows for seed-first simulation.
On the edges of tiles of the perimeter path, exterior glues allow for the attachment of tiles.
A high level visualization of the steps of generating the new tileset $\calT_4$ is found in Figure~\ref{fig:S4-4_spanning_tree}.

\begin{figure}
    \centering
    \includegraphics[width=0.6\textwidth]{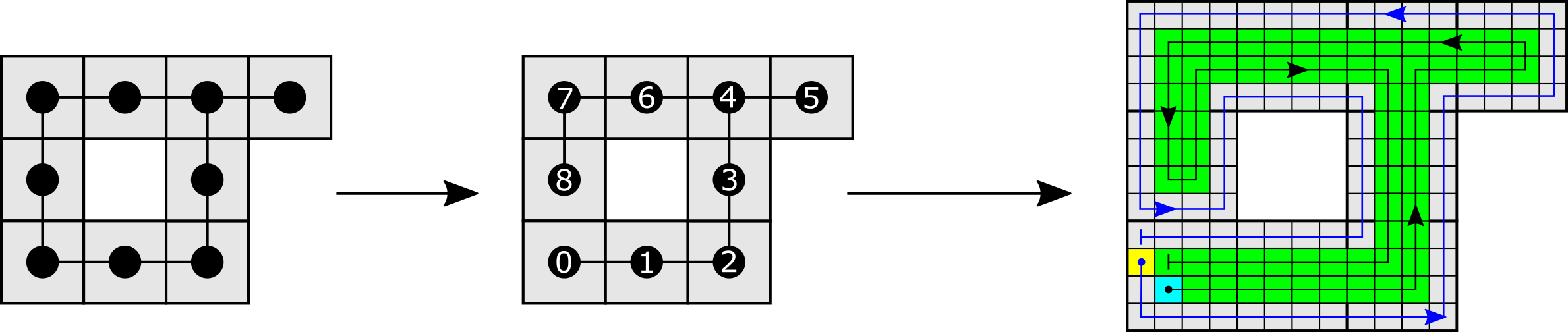}
    \caption{The process of creating the tiles for a simulation at scale 4 with a single-tile seed. Beginning with the binding graph of a seed assembly $\sigma$ (left), we generate a spanning tree from the westernmost tile of the southernmost row (middle). From this spanning tree, we replace each vertex with a tile from the template provided by Figure~\ref{fig:S4-2_seed_path_macrotiles}, leading to the tileset which comprises the scale factor 4 simulation of our initial multi-tile seed (right). The new single-tile seed of $\calT_4$ is shown as a teal tile.}
    \label{fig:S4-4_spanning_tree}
\end{figure}

\begin{figure}
    \centering
    \includegraphics[width=0.25\textwidth]{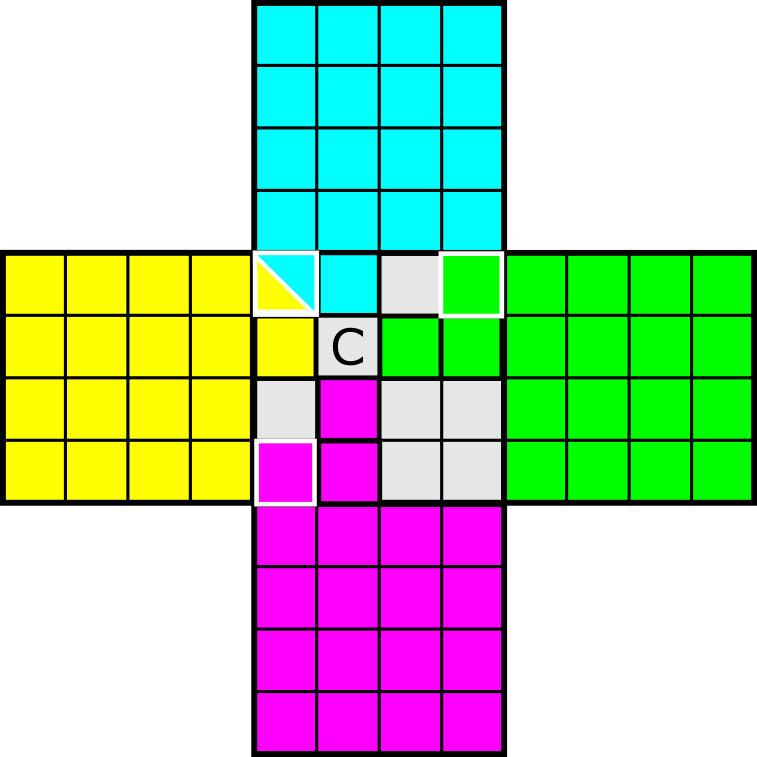}
    \caption{High-level scheme for scale-4 supertiles, showing the point of competition/cooperation. Consider the fully colored $4 \times 4$ supertiles that are filled in to already be representing tiles of the simulated system, and the central $4 \times 4$ supertile to not yet be filled in enough to represent a tile. Locations in the central supertile with the same colors as neighboring supertile locations depict the paths by which tiles from the filled-in supertiles grow into an empty neighboring supertile. The location labeled 'C' is the point of competition/cooperation where tile placement decides the identify of the supertile.}
    \label{fig:S4-8_scale_4_template}
\end{figure}

Second, we create scale 4 versions of the tileset $T$.
Before creating the scaled versions of $T$, we additionally need to prevent the re-growth of portions of the seed which attempt to re-grow the perimeter path. 
Re-growth is prevented by generating $T_{IO}$, an expansion of $T$, which identifies all possible $\tau$ strength combinations which can allow a tile to attach (serving as its ``input'' glues) and then generating new tiles with specific `inward' and `outward' glues. (This is a standard technique in tile assembly constructions, and can be found in the construction of \cite{Versus} and others).
Finally, we generate the scale 4 representations of $T_{IO}$ based upon a single point of cooperation and/or competition between the supertiles.
(See Figure \ref{fig:S4-8_scale_4_template} for a high-level depiction.)
In addition to assigning tiles which grow into legal fuzz regions, we must take into account collisions of an arbitrary supertile with the seed assembly and add glues which allow for the continuation of the dependence path.

\subsection{Seed-First-Simulation at Scale Factor 3}\label{sec:scale3-sim-fuzz}

We demonstrate that an arbitrary aTAM system is able to be seed-first-simulated at scale factor 3 utilizing cheating fuzz.

\begin{theorem}\label{thm:scale3-sim}
Given an arbitrary aTAM system $\mathcal{T} = (T,\sigma,\tau)$, there exists an aTAM system $\mathcal{T}_3 = (T_3, \sigma_0, \tau_3= \max(2, \tau))$ which seed-first-simulates $\mathcal{T}$ at scale factor 3 utilizing cheating fuzz, where $|\sigma_0| = 1$ and $|T_3| \leq 20s + 16g + 6t$ given that $s = |\sigma|$, $t = |T|$, and $g$ is the total number of unique glue/strength combinations in $T$ and $\sigma$.
\end{theorem}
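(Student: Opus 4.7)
The plan is to adapt the scale-4 construction of Theorem~\ref{thm:scale4-sim} to $3 \times 3$ macrotiles, using cheating fuzz to compensate for the reduced interior space. As before, the simulating tileset $T_3$ decomposes as $T_\sigma \cup T_{IO}$, where $T_\sigma$ grows a scale-3 representation of $\sigma$ from a single seed tile $\sigma_0$ via a dependence-path traversal of a spanning tree of $\sigma$'s binding graph, and $T_{IO}$ consists of scale-3 templates of an input/output expansion of $T$ that handle the ``outward'' simulation after the seed macro-assembly is complete. The single-tile seed $\sigma_0$ is placed in a designated interior cell of the macrotile representing the westernmost tile of the southernmost row of $\sigma$, and the spanning tree is rooted there as in Figure~\ref{fig:S4-4_spanning_tree}.

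For the seed-growing phase, I would define a small family of scale-3 template macrotiles, one per (tree-edge pattern, simulated glue pattern) combination. Each template routes a \emph{core path} that enters through a fixed entry cell on one side, visits a designated central hub cell, and exits through fixed exit cells on the children sides, while a \emph{perimeter path} is triggered after the core path has visited every macrotile and wraps around the exposed edges to install the glues of $\sigma$. Because only nine cells are available per macrotile, the core path cannot always be confined to the interior: at turns where two children sides must be served, I would let the path momentarily leak into an adjacent macrotile location that is guaranteed never to correspond to a tile of $\calT$ (either because that location lies outside the eventual simulated assembly, or because it lies in a fuzz region where no glue of $\calT$ is ever presented). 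This leakage is exactly cheating fuzz in the sense of Section~\ref{sec:simulation_def}, and because it carries only dead-end strength-$1$ glues, it does not affect the representation function $R$.

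The outward simulation of $T$ is carried out by $T_{IO}$ templates arranged so that all cooperation and competition between neighboring macrotiles funnels through a single designated cell, in the spirit of Figure~\ref{fig:S4-8_scale_4_template}. Input-carrier tiles are launched from the boundary midpoints of each macrotile and race inward along fixed tracks, committing a simulated tile identity at the competition cell exactly when a valid $\tau$-strength combination of inputs arrives; once the identity is fixed, output-carrier tiles propagate the appropriate glues to the four boundary midpoints. The seed-first condition is enforced by ensuring that the glues on the outward side of the perimeter path are the \emph{only} source of input carriers for neighboring $T_{IO}$ macrotiles, so no $T_{IO}$ tile can commit an identity before the perimeter path has completed at the relevant boundary. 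For collisions with the already-built seed assembly, I would additionally allow cheating-fuzz input carriers to enter seed macrotile locations from the outside but stop at strength-$1$ dead ends, so that blocking in $\calT$ is faithfully reproduced without producing any seed-incompatible tile.

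The main obstacle will be reconciling the space constraints of a $3 \times 3$ macrotile with the need to simultaneously support (i) a core path through every seed macrotile, (ii) a perimeter path sitting on the boundary cells of the seed macrotiles, and (iii) the competition/cooperation track of $T_{IO}$, without any two of these interfering or two adjacent macrotiles committing inconsistent identities. I expect to resolve this by carefully scheduling the leakage directions of the core path so that cheating-fuzz cells always lie in directions where the adjacent macrotile either is not in $\sigma$ or has not yet started its own core path; a tree-ordering argument on the rooted spanning tree shows that a consistent choice exists. Once the construction is in place, verifying the conditions of Definitions~\ref{def-equiv-prod-mod}, \ref{def-t-follows-s}, and \ref{def-s-models-t} is routine and mirrors the scale-4 proof. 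The tile count $|T_3| \leq 20s + 16g + 6t$ follows by bookkeeping: at most $20$ template tiles per seed macrotile (fewer than the scale-4 count of $28$ because cheating fuzz absorbs several turn templates), $16$ carrier tiles per unique glue/strength combination, and $6$ interior tiles per simulated tile type in $T$.
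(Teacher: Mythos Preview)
Your proposal misses the central technical obstacle that distinguishes scale~3 from scale~4: \emph{internal cavities} in $\sigma$. At scale~4 the perimeter path can be routed through the outer ring of every supertile, so it automatically reaches every boundary cell of the seed-representing assembly, including the boundaries of holes. At scale~3 there is no such ring; the core path can consume enough of a supertile that the perimeter path is blocked from reaching an enclosed cavity (see the paper's Figure~\ref{fig:S3-6_cavity_enclosure}). If the perimeter path never reaches a cavity, the glues that $\sigma$ exposes into that cavity are never installed, and seed-first-simulation fails outright. Nothing in your spanning-tree or tree-ordering argument addresses this, and the ``leakage'' you propose for the core path does not help, since the problem is reachability of the perimeter path, not routability of the core path. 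The paper resolves this by preprocessing the binding graph of $\sigma$: it classifies cavities as cyclic or non-cyclic, merges non-cyclic cavities through the one-tile-wide channels guaranteed at their corners, and then deletes a carefully chosen column of edges above each remaining cyclic cavity so that a width-one corridor connects every cavity to the exterior. Special ``cavity connector'' supertile templates (Figure~\ref{fig:S3-13_n_to_e_conversion_tiles}) are then needed along these corridors to shift the dependence path aside and leave room for the perimeter path.

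A secondary issue is the role you assign to cheating fuzz. You use it to let the \emph{core} path leak at turns, but the paper shows (Lemma~\ref{lem:S3-supertile-template-connectivity} and the eight templates of Figure~\ref{fig:S3-2_seed_path_macrotiles}) that the dependence path can always be kept inside the $3\times 3$ supertiles; cheating fuzz is instead required by the \emph{perimeter} path when it rounds convex corners of the seed. Your leakage scheme is also underspecified: the claim that an adjacent macrotile location is ``guaranteed never to correspond to a tile of $\calT$'' is not generally true for locations bordering $\sigma$, since growth of $\calT$ may later place tiles there.
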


\begin{figure}
    \centering
    \includegraphics[width=0.25\textwidth]{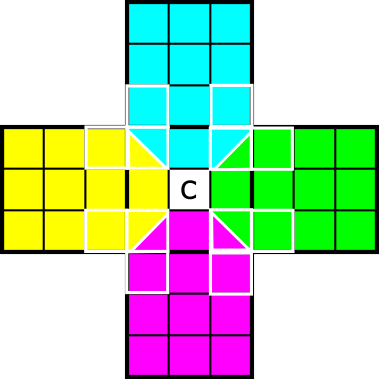}
    \caption{For a scale 3 supertile, the point of cooperativity/competition resides in the center of the $3 \times 3$ square (denoted with `C').}
    \label{fig:S3-7_cooperativity_scale3}
\end{figure}

The full proof of Theorem~\ref{thm:scale3-sim} is found in Section~\ref{sec:scale3-sim-proof}.
This proof is by construction which, while similar to that presented in the construction of Theorem~\ref{thm:scale4-sim}, requires extra care in the routing of the both the dependence path and perimeter path through the seed-representing assembly to prevent diagonal fuzz by the tiles in $T_\sigma$.
In a scale 4 supertile, the perimeter path contains tiles in every supertile, regardless of its location in $\sigma$.
This is not possible in scale 3 seed assemblies, specifically with regards to supertiles which are not on the perimeter of a structure.
This potentially inhibits the placement of perimeter tiles in supertiles adjacent to \emph{cavities} (locations in $\mathbb{Z}^2$ which do not contain a tile but are surrounded by seed tiles) which may be present in a multi-tile seed.
This is necessary for seed-first-simulation.
To ensure the perimeter path may reach these internal cavities, we provide a modification to the connectivity of the seed prior to assigning supertiles from a set of supertile templates  which leverages the changes in connectivity.
For the generation of the remaining tiles of $\calT$, we utilize the the same techniques of Theorem~\ref{thm:scale4-sim} to develop a $T_{IO}$, and the design for the point of competition/cooperation in the $3 \times 3$ supertiles is slightly modified, as shown in Figure \ref{fig:S3-7_cooperativity_scale3}.

\section{Optimal Tile Complexity for Seed-First-Simulation}\label{sec:scaled-sim}

In this section, we give an overview of a universal construction that takes as input a program $P_\calT$, where $P_\calT$ outputs any arbitrary aTAM system, say $\calT$, and the construction outputs an aTAM system $\mathcal{S}$ with a single-tile seed that seed-first-simulates $\calT$. Technical details can be found in Section \ref{sec:scaled-sim-append}. The tile complexity of $\mathcal{S}$ is asymptotically related to the length of the input program. Thus, the tile complexity is $O(\frac{K(\calT)}{\log(K(\calT)})$, where $K(\calT)$ is the Kolmogorov complexity of $\calT$, i.e. the length of the shortest program that outputs $\calT$.

\begin{figure}
    \centering
    \includegraphics[width=4.0in]{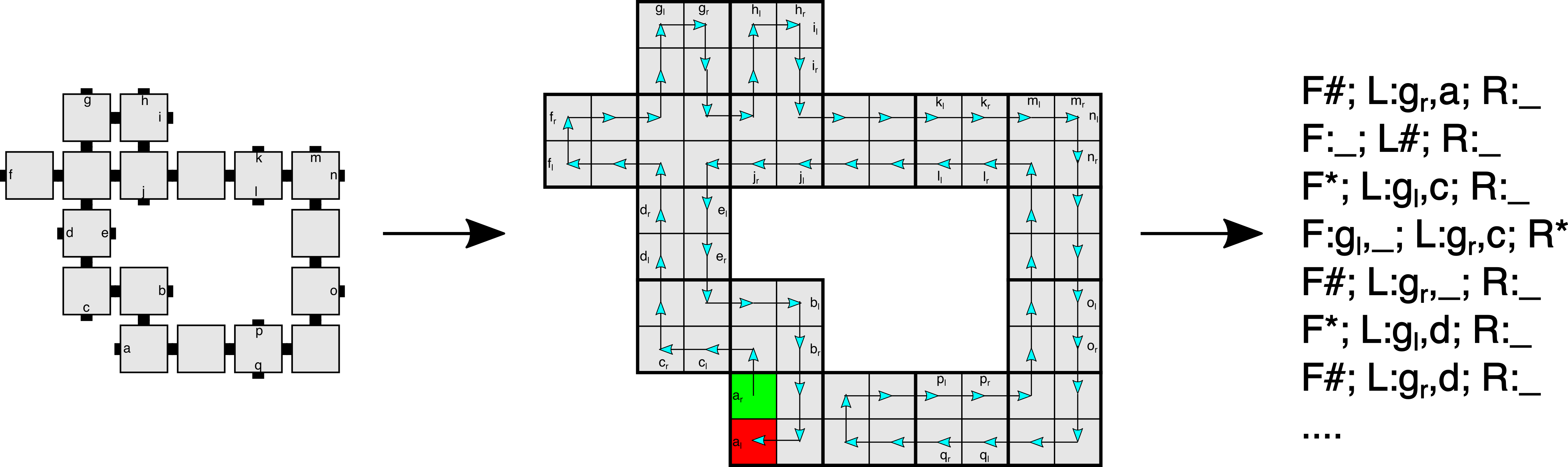}
    \caption{(Left) The seed $\sigma$ of $\calT$, (Middle) $\sigma^2$ with a Hamiltonian cycle through it (beginning in the green location and ending in the red location), (Right) A shorthand representation of the sections representing the first 7 locations along the Hamiltonian cycle in the string encoding it.}
    \label{fig:ham-cycle-encoding}
\end{figure}

\begin{theorem}\label{thm:scaled-simulation}
Given an arbitrary aTAM system $\calT$, there exists $c \in \mathbb{Z}^+$ and aTAM system $\mathcal{S}_\calT = (T_\calT,\sigma_\calT,2)$ such that $|\sigma_\calT| = 1$, $|T_\calT| = O(\frac{K(\calT)}{\log(K(\calT)})$, and $\mathcal{S}_\calT$ seed-first-simulates $\calT$ at scale factor $c$.
\end{theorem}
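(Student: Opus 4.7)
The plan is to combine three ingredients: (i) the information-theoretically optimal encoding of a binary string of length $n$ into an aTAM assembly using $\Theta(n/\log n)$ tile types, as developed in \cite{RotWin00,AdChGoHu01} and refined in \cite{SolWin07}; (ii) a constant-sized tile set that simulates a universal Turing machine in the standard zig-zag manner; and (iii) the intrinsically universal aTAM tile set from \cite{IUSA}, combined with the seed-first dependence-path technique developed in the proofs of Theorems \ref{thm:scale4-sim} and \ref{thm:scale3-sim}.

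First I would fix a shortest self-delimiting program $P_\calT$ whose output is a description of $\calT = (T,\sigma,\tau)$, so $|P_\calT| = K(\calT)$. Starting from the single seed tile $\sigma_\calT$, the optimal-encoding construction grows a compact block of tiles whose exposed glues spell out $P_\calT$ in binary; this is the only module that depends on $\calT$ and it contributes $O(K(\calT)/\log K(\calT))$ tile types. A constant-sized zig-zag Turing machine module then reads $P_\calT$ and executes it, eventually writing a binary description of the full triple $(T,\sigma,\tau)$ on a designated output row. A second constant-sized controller consumes this description and, using the universal IU tile set of \cite{IUSA}, builds the portion of the IU-simulator seed that represents $\sigma$ --- that is, a scaled supertile image of $\sigma$ whose individual supertiles encode the types and glues of the corresponding tiles of $T$. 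Because the TM, the IU set, and the layout logic each contribute only $O(1)$ tile types, the total tile complexity remains $O(K(\calT)/\log K(\calT))$, matching the information-theoretic lower bound.

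To obtain seed-first-simulation rather than merely seed-growth-simulation, I would adapt the core-path/perimeter-path construction of Section \ref{sec:min-scaled-sims}: the layout controller visits the supertiles representing $\sigma$ along a single Hamiltonian-style dependence path (using the Hamiltonian-cycle fact for scaled shapes from \cite{SummersTemp}), and only the terminal tile of this path exposes the activation glues along outward-facing edges that permit the IU machinery to simulate attachments of tiles of $T$ to $\sigma$. In this way every outward IU attachment strictly depends on the entire scaled image of $\sigma$ being in place, which is exactly the seed-first condition. The overall scale factor $c$ is the product of the constant IU scale, the scale needed for the seed-layout routing, and a scale large enough to embed the $P_\calT$-encoding block together with the TM's tape and scratch space inside the macrotile region representing $\sigma$; $c$ is finite for each $\calT$ but may be enormous, growing with the space usage of the decoder on $P_\calT$.

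The main obstacle is this last coordination step: guaranteeing that no glue capable of initiating IU-simulated growth outside of $\sigma$ is ever exposed before the scaled-seed representation is complete, while still holding the decoding and layout modules to $O(1)$ tile types. This is handled by routing every outward activation glue through the tail of the single dependence path above, and by placing the TM computation entirely interior to the scaled-seed region so that its transient outputs never reach a macrotile boundary. Once this routing is in place, verifying the three conditions $\mathcal{S}_\calT \Leftrightarrow^{\sigma_\calT}_R \calT$, $\calT \dashv^{\sigma_\calT}_R \mathcal{S}_\calT$, and $\mathcal{S}_\calT \models_R \calT$ that together define seed-first-simulation reduces, after the seed-growth phase, to the already-established correctness of the IU construction of \cite{IUSA}, applied to the configuration that the TM and layout modules have just installed.
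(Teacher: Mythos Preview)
Your proposal is essentially correct and follows the same architecture as the paper: encode a shortest program for $\calT$ optimally using the base-conversion trick of \cite{AdChGoHu01}, run a fixed Turing machine on it with a constant-sized zig-zag tile set, lay out the scaled image of $\sigma$ along a Hamiltonian cycle through its scale-2 blow-up (via \cite{SummersTemp}), and hand off to the IU tile set of \cite{IUSA} once the seed image is complete.

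Two points where the paper is more explicit than your sketch are worth noting. First, the scale factor $c$ cannot be fixed in advance by any constant-sized tile set, so the paper overlays a binary counter on the zig-zag TM simulation that records the running time of $D$ on $P_\calT$; this counter value is then carried along the Hamiltonian cycle and read by the generic ``shifter'' and ``copier'' modules so that every macrotile knows how far to grow. Your phrase ``a scale large enough to embed the TM's tape and scratch space'' hides this bootstrapping issue. Second, your seed-first mechanism (``only the terminal tile of this path exposes the activation glues'') cannot literally be the perimeter-path of Section~\ref{sec:min-scaled-sims}, since that path uses $\Theta(|\sigma|)$ distinct glues. The paper instead has the last macrotile on the cycle trigger a constant-tile \emph{pre-activation row} that circumnavigates the entire perimeter by cooperating with the already-placed superside encodings, and only after that row closes does an \emph{activation row} expose the IU-compatible glues on every superside; this also handles the race where IU growth from an already-activated side crashes back into a not-yet-activated segment. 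These are refinements of exactly the idea you outline rather than a different strategy.
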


At a high level, the construction works by taking as input the program $P_\calT$ and creating a hard-coded set of tiles (the first of which will be the single seed tile) that self-assemble a row whose north glues encode a compact version of that program that is first ``unpacked'' by a set of tiles that present the full program $P_\calT$ along the northern glues of a row of tiles. A tile set that simulates a Turing machine then uses that program as input, runs that program to obtain the description of aTAM system $\calT$, then uses that description to run a variety of subprograms. One subprogram executes the algorithm from the intrinsic universality construction of \cite{IUSA} which computes a string encoding information about $\calT$ in a format that can be used by that construction to simulate the tiles of $\calT$ as large supertiles with that information on their exteriors.
Another creates a $2 \times 2$ scaling of the seed assembly of $\calT$ and computes a Hamiltonian cycle through it. (An example can be seen in Figure \ref{fig:ham-cycle-encoding}.) It then creates a string containing entries for every stop along the Hamiltonian cycle that contain the information to be put on each output side. During this computation, a binary counter tile set keeps track of the number of computational steps so that the scale factor is computed and utilized.
Then, the information is moved along the cycle and the information copied to the exterior sides of the assembly representing the scaled version of the seed of $\calT$.
Once that structure is complete, the IU construction of \cite{IUSA} (only minimally changed) takes over and manages the simulation of the rest of steps in the simulation.

\subsection{Simultaneous simulation}\label{sec:simult-sim}

In this section, we describe a relatively simple modification to the construction of Theorem \ref{thm:scaled-simulation} that results in a single aTAM system that simultaneously and in parallel simulates every aTAM system.
At a high level, this construction works by the (standard aTAM) assumption that there are an infinite number of copies of the seed, which in this case is a single tile. From each copy, an arbitrarily long row encoding an arbitrary binary number nondeterministically grows. Each row terminates at a random point when a special tile attaches to ``cap'' the growth of the row. At this point, the row encodes a random binary number $n$ in its northern glues (and every binary number has some chance of being represented). Each such assembly encoding some number $n$ then proceeds to grow into a simulation of the $n$th tile assembly system in the aTAM using the construction of Theorem \ref{thm:scaled-simulation}. Thus, in parallel, the infinite set of seeds grows into an infinite set of assemblies, each simulating one aTAM system from the infinite set, and each aTAM system is simulated (with some non-zero probability)\footnote{An alternative, but also common, interpretation of the aTAM model is that, similar to the behavior of nondeterministic versions of automata such as Turing machines, whenever a nondeterministic choice occurs the system splits into a separate instance to follow each option. In this interpretation, it is considered that there is only a single copy of the seed, and then for each assembly to which more than one tile can attach, a new instance of the assembly is created for each possible attachment. Thus (possibly in the limit) all assemblies which can form from a seed assembly do so. In this interpretation, this construction also validly simulates all systems in parallel.}

In order to discuss such a simulator, we must define a new type of representation function and slightly modify the definition of simulation which allows multiple scale factors to be utilized in parallel.
This is because there are a countably infinite number aTAM systems and therefore for any scale factor $m \in \mathbb{Z}^+$, there must exist an infinite number of aTAM systems, with increasingly large tile sets to represent, that require scale factor greater than $m$ to simulate. 

We will call this new notion of simulation \emph{mixed-scale-simulation}. For use with our result, we will base it off of the definition of seed-first-simulation with the following modification.

    
Rather than the standard definition of a representation function, due to the fact that different scale factors must be allowed for different simulations occurring simultaneously in the same system, we define an \emph{adaptive representation function} as one which is not confined to a grid of fixed-size squares, but which instead is allowed to inspect any locations of an assembly and only be restricted by the requirement that it is able to inspect only a finite portion of an assembly before reading enough information to (1) compute the scale factor of the simulation being carried out, and (2) compute the mapping from supertiles in that assembly to tiles from the simulated system, and then (3) to correctly identify supertiles at that scale.

\begin{theorem}\label{thm:simult-sim}
There exists an aTAM system $U_\infty$ that mixed-scale-simulates all aTAM systems, simultaneously and in parallel.
\end{theorem}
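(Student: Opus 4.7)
The plan is to fix a canonical enumeration $\calT_1,\calT_2,\ldots$ of all aTAM systems (say, by sorting their binary encodings under some fixed encoding scheme for tile sets, glue functions, seed assemblies, and temperatures) and to modify the construction of Theorem~\ref{thm:scaled-simulation} so that a single universal tile set $U_\infty$ with a single-tile seed nondeterministically selects an index $n$ in each growth branch and then grows the simulation of $\calT_n$ around that branch. The construction of Theorem~\ref{thm:scaled-simulation} already accepts a program $P_\calT$ printed along a row and converts it into a simulator for $\calT$; the only thing needed here is to replace the hard-coded prefix that prints $P_\calT$ by a nondeterministic row whose bits are interpreted as the index $n$, together with a constant-size interpreter $P_U$ that, on input $n$, outputs (an encoding of) $\calT_n$.

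The first step would be to design a small ``index gadget'' attached to the single seed tile: two tile types compete to attach eastward at each column, one carrying bit~$0$ and one carrying bit~$1$ on its northern glue, and a third ``cap'' tile that nondeterministically terminates the row. Because the aTAM allows infinite copies of the seed (equivalently, because every nondeterministic choice spawns a parallel branch), for every finite binary string~$n$ at least one assembly sequence produces a row whose northern glues spell~$n$ followed by the cap. A constant-size adapter tileset concatenates these glues with the fixed bits of $P_U$ and feeds the combined row into an unmodified copy of the Theorem~\ref{thm:scaled-simulation} machinery, which then runs the interpreter on $n$, computes $\calT_n$, and launches the seed-first-simulation of $\calT_n$ at its prescribed scale factor $c_n$. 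The total tile complexity of $U_\infty$ is a universal constant, since every component is fixed independent of any particular simulated system.

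Next I would set up the formal framework for mixed-scale-simulation. Following the wording preceding the theorem, I would define an adaptive representation function $R_\infty$ as follows: given any assembly $\alpha' \in \prodasm{U_\infty}$, locate each connected component containing a seed tile; walk east along the index gadget until either the cap is read (recovering a value $n$) or the gadget is incomplete (in which case the component maps to a subassembly of $\sigma_{\calT_n}$ under the appropriate restriction, or to empty if $n$ has not yet been determined); once $n$ is known, invoke the scale factor $c_n$ and the $c_n$-block representation function of the Theorem~\ref{thm:scaled-simulation} simulation of $\calT_n$ to interpret the remainder of the component. This function is ``adaptive'' in exactly the permitted sense: only a finite prefix of each component needs to be inspected to determine both $c_n$ and the interpretation rule. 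I would then state mixed-scale-simulation as the conjunction, over all $n$, of the seed-first-simulation conditions of Definition~\ref{def-s-sf-simulates-t} applied to the subclass of $\prodasm{U_\infty}$ whose index gadget evaluates to $n$, interpreted under $R_\infty$ restricted to that subclass.

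Verification then reduces to two directions. For completeness, for every $n$ and every $\alpha \in \prodasm{\calT_n}$, there is a branch that first grows the index gadget for $n$, then grows the interpreter and Theorem~\ref{thm:scaled-simulation} machinery (which is already known to seed-first-simulate $\calT_n$), producing an assembly mapping to $\alpha$. For soundness, any producible assembly lies in exactly one index class $n$ and, once its gadget and machinery are complete, its subsequent growth is governed entirely by the Theorem~\ref{thm:scaled-simulation} construction for $\calT_n$, whose correctness carries over. The main obstacle I anticipate is not the dynamics of individual branches, which are inherited from Theorem~\ref{thm:scaled-simulation}, but the bookkeeping that keeps separate branches from interfering: I need to argue that two branches encoding different indices can never fuse into a single connected component that $R_\infty$ would misidentify. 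This should follow from the fact that the simulating machinery for each $n$ can be made to occupy a half-plane strictly north of its index row (and can be fenced off with per-branch unique glues at the adapter boundary at no asymptotic cost), so distinct branches are separated by the unbounded empty regions surrounding each independent seed tile in the standard aTAM semantics. Once that non-interference is established, the simulation-correctness proofs from Theorem~\ref{thm:scaled-simulation} apply branchwise to yield the theorem.
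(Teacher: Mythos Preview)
Your proposal is correct and follows essentially the same approach as the paper: nondeterministically grow a binary index row from the single seed tile, cap it, feed the index to a fixed interpreter/enumerator that outputs $\calT_n$, and then hand off to the Theorem~\ref{thm:scaled-simulation} machinery, with an adaptive representation function that first reads the index to recover the scale factor before applying the per-system block map. Your final worry about distinct branches fusing is unnecessary---in standard aTAM semantics each producible assembly grows from a single seed in its own copy of $\mathbb{Z}^2$ and never interacts with any other assembly---so the paper simply omits that discussion, but this does not affect the correctness of your plan.
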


\begin{proof}

We prove Theorem \ref{thm:simult-sim} by construction. Therefore, we present aTAM system $\mathcal{U}_\infty = (U,\sigma,2)$ and discuss how it is constructed and how it behaves.

Let $D$ be the Turing machine defined in the proof of Theorem \ref{thm:scaled-simulation}. We now define a new Turing machine $D'$ which takes as input a binary string $b \in 1(0 \cup 1)^*$ (i.e. any binary string beginning with a 1) that is immediately followed by the letter `x', and does the following:

\begin{enumerate}
    \item $D'$ enumerates over the set of all aTAM systems following some enumeration\footnote{The set of all aTAM systems is countably infinite since it is clearly infinite (e.g. for every $i \in \mathbb{N}$ there exists an aTAM system with $i$ tile types that has a single-tile seed and self-assembles an $i \times 1$ line at temperature 1), but every component of an aTAM system must be finite by definition of the aTAM, and therefore the set of systems must be countably infinite. Since any countably infinite set can be enumerated, there exists some enumeration of all aTAM systems.}, and stops after printing the $b$th aTAM system (where the binary string $b$ is interpreted as a positive integer). Let $\langle \calT \rangle$ be the string representing the $b$th aTAM system.
    
    \item $D'$ creates a Turing machine, which we'll call $P_\calT$, that takes no input and simply prints $\langle \calT \rangle$ and halts (by having the characters of $\langle \calT \rangle$ hard-coded into its transition rules).
    
    \item $D'$ runs $D(P_\calT)$ and outputs what $D(P_\calT)$ outputs.
\end{enumerate}

\begin{figure}
    \centering
    \includegraphics[width=3.0in]{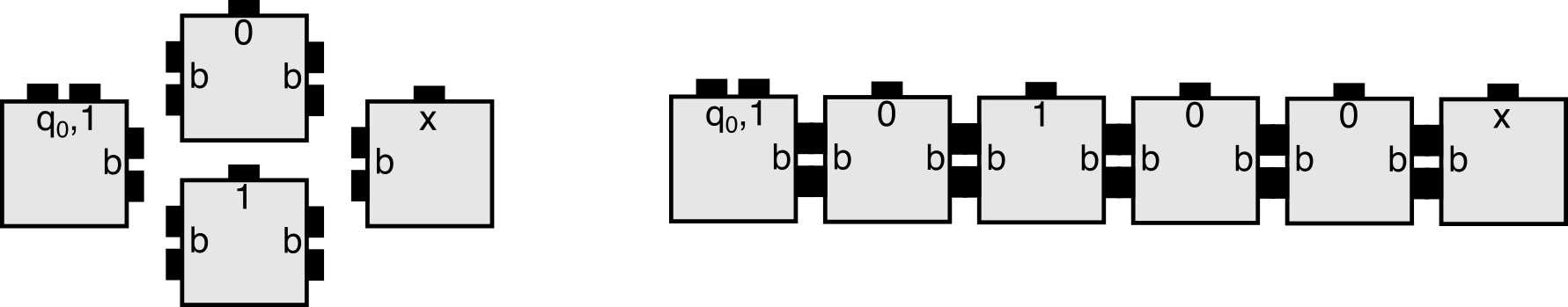}
    \caption{(Left) Assuming that $q_0$ is the start state of Turing machine $D'$, using the tile on the left (which we'll refer to as $b_{seed}$) as the seed tile, this set of tiles can self-assemble to represent any binary string of the form $1(0 \cup 1)^*$ with a terminating $x$ and that assembly can be used as input to $D'$. (Right) An example assembly encoding the binary string ``10100''.}
    \label{fig:binary-string-tiles}
\end{figure}

We now define tile set $U_\infty$ as the tile set $T_\calT$ from the proof of Theorem \ref{thm:scaled-simulation} minus the tile sets $T_\sigma$ and $T_D$. Instead of the tiles of $T_\sigma$ which encode some specific program, we add the set of tiles $T_b$ pictured on the left in Figure \ref{fig:binary-string-tiles}. Then, instead of the tiles $T_D$ we add tiles that simulate $D'$ in the same way that the tiles of $T_D$ simulated $D$ (i.e. zig-zag growth, an embedded counter, etc.).

We can now fully define $\mathcal{U}_\infty = (U_\infty, \sigma, 2)$, where $\sigma$ is one copy of a tile of type $b_{seed}$ located at $(0,0)$.
$\mathcal{U}_\infty$ does the following. For every $i \in \mathbb{Z}^+$, there exists a producible assembly that self-assembles from the tiles of $T_b$ that represents $i$ in binary, followed by the character `x'. Each such assembly initiates the growth of an assembly that simulates $D'$ on that value of $i$. This causes $D'$ to print out the $i$th aTAM system and then input that to $D$, which will then cause the tiles from the construction of Theorem \ref{thm:scaled-simulation} to use that assembly to grow in such a way that it correctly seed-first-simulates the $i$th aTAM system.

The adaptive representation function $R^*$ works as follows. Given an arbitrary assembly, it begins at location $(0,0)$, which is the location of the seed tile, $b_{seed}$. 
It reads the number encoded by the tiles to the right of that tile until encountering a tile encoding an `x' and thus the end of the binary number. If no such tile is found, the assembly maps to empty space as it doesn't (yet) simulate any system. For those that do encode a complete number, $R^*$ runs Turing machine $D'$ with that number as input to get the full specification of the system being simulated by that assembly (which we'll call $\calT$), and from that it is also able to compute the scale factor of the simulation, $m$. $R^*$ is then able to inspect the $m \times m$ squares and match the information encoded in sufficiently completed supertiles to tiles in $\calT$.

Since, for every aTAM system, there exists an assembly which simulates it, and there exists an adaptive representation function capable of identifying the system being simulated by each assembly and correctly mapping the supertiles to tiles of that simulated system, $\mathcal{U}_\infty$ mixed-scale-simulates all aTAM systems, simultaneously and in parallel.

\end{proof}

\vspace{-20pt}
\subsection{Cross-model intrinsic universality}

In the following we claim that, given some model $M$, if for any arbitrary aTAM system $\calT$, there exists a system in $M$ that simulates it, then there is a tile set in model $M$ that is intrinsically universal for the aTAM (using seed-growth-simulation). This means that a single tile set in $M$ is capable of being used to simulate any aTAM system. First, we note that our use of the term ``model'' in this case applies not only to the complete set of systems within a model, but also to what is commonly referred to as a ``class of systems within a model''. For instance, the aTAM is a model and so is the 2-Handed Assembly Model, so referring directly to one of those models would mean the full set of systems within those models. Alternatively, the subset of systems within the aTAM which are directed is often referred to as the class of directed aTAM systems. The following result applies to both models and classes of systems within models.

While it may seem like the existence of a one-to-one relationship between systems in some model $A$ and systems in another model $B$ that can simulate them may imply the existence of a tile set in $B$ which is intrinsically universal for model $A$, this is not the case. A counterexample occurs with the class of systems in the aTAM which are directed. Trivially, using the identity function as the representation function and scale factor 1, for every system in the directed aTAM there exists a system within the directed aTAM which simulates it. However, as proven in \cite{DirectedNotIU}, there exists no tile set which can be used within the directed class of aTAM systems to simulate all directed aTAM systems.

\begin{claim}\label{clm:cross-model-IU}
Let $M$ be a model of tile-based self-assembly such that, for every system $\calT$ in the aTAM, there exists a system $\calT' \in M$ that simulates system $\calT$ of the aTAM. Then, there exists a tile set in $M$ which is intrinsically universal for the aTAM by performing seed-first-simulations.
\end{claim}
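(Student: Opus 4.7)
The plan is to combine Theorem~\ref{thm:simult-sim} with a nested simulation argument. I would first invoke Theorem~\ref{thm:simult-sim} to obtain the aTAM system $\mathcal{U}_\infty$, a fixed aTAM system which (mixed-scale-)simulates every aTAM system in parallel from its single-tile seed. Since $\mathcal{U}_\infty$ is itself an aTAM system, the hypothesis of the claim provides a system $\mathcal{U}'_\infty \in M$ that simulates $\mathcal{U}_\infty$. Let $U'$ denote the tile set of $\mathcal{U}'_\infty$. My claim is that $U'$ is the intrinsically universal tile set we need: for every aTAM system $\calT$, we can construct an appropriate seed assembly over $U'$ (in model $M$) such that the resulting system in $M$ seed-first-simulates $\calT$.

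To simulate an arbitrary aTAM system $\calT$, recall from the proof of Theorem~\ref{thm:simult-sim} that $\calT$ corresponds to some integer $i_\calT$ in the enumeration used by $\mathcal{U}_\infty$, and that there is a producible assembly of $\mathcal{U}_\infty$ which (a) encodes $i_\calT$ in binary, (b) runs the Turing machine $D'$ to obtain the description of $\calT$, and (c) eventually assembles a full representation of $\sigma_\calT$ before simulating the remainder of $\calT$. Let $\alpha_\calT^{(\sigma)}$ be the producible assembly of $\mathcal{U}_\infty$ at the moment where that representation of $\sigma_\calT$ has just been completed and no further outward growth has begun; because of the seed-first structure of the construction of Theorem~\ref{thm:simult-sim}, such an $\alpha_\calT^{(\sigma)}$ exists. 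Because $\mathcal{U}'_\infty$ simulates $\mathcal{U}_\infty$, there exists a producible assembly $\beta_\calT$ of $\mathcal{U}'_\infty$ that maps to $\alpha_\calT^{(\sigma)}$ under $\mathcal{U}'_\infty$'s representation function $R_2$. I would then define the simulating system in $M$ to be $\calS_\calT = (U', \beta_\calT, \tau_{\mathcal{U}'_\infty})$, i.e.\ $\mathcal{U}'_\infty$ with the seed replaced by $\beta_\calT$, and let the representation function be the composition of $R_2$ with the adaptive representation function $R^*$ for $\mathcal{U}_\infty$'s simulation of $\calT$.

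The main obstacle, and the place to spend the most care, is verifying that this composition yields a bona fide seed-first-simulation rather than merely a producible-shape correspondence. Three things need to be checked: (i) $\beta_\calT$ is a finite, stable assembly that is legal as a seed in $M$, which follows from its being a producible assembly of $\mathcal{U}'_\infty$; (ii) $\beta_\calT$ maps under the composed representation function exactly to $\sigma_\calT$, which is the defining property of $\alpha_\calT^{(\sigma)}$ together with $R_2$'s faithfulness on producible assemblies; and (iii) every producible assembly of $\calS_\calT$ and every step it can make corresponds, under the composed representation, to a producible assembly and a legal step of $\calT$, and conversely every step of $\calT$ is realizable in $\calS_\calT$. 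Point (iii) reduces to the compositionality of the two underlying simulations: the producible extensions of $\beta_\calT$ in $\mathcal{U}'_\infty$ correspond, via $R_2$, to the producible extensions of $\alpha_\calT^{(\sigma)}$ in $\mathcal{U}_\infty$, which in turn correspond, via the adaptive $R^*$, to the producible extensions of $\sigma_\calT$ in $\calT$. The contrast with the directed-aTAM counterexample noted before the claim is precisely that no analogue of $\mathcal{U}_\infty$ exists in that restricted class, whereas here the existence of $\mathcal{U}_\infty$ (Theorem~\ref{thm:simult-sim}) is what lets the hypothesis be leveraged exactly once to extract a single IU tile set.
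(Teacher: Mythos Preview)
Your approach and the paper's are both ``nested simulation'' arguments and are essentially the same idea, but you route through $\mathcal{U}_\infty$ while the paper routes through the deterministic variant $\mathcal{U}_b = (U,\sigma_b,2)$, i.e.\ the same tile set $U$ but with a pre-built seed encoding a fixed binary number $b$. The paper then lets $M$ simulate $\mathcal{U}_b$ to obtain $T_M$, $c$, and $R$, and argues (informally, since this is stated as a Claim rather than a Theorem) that because $b$ was arbitrary and the seed $\sigma_b$ is just a row of bit-macrotiles, the \emph{same} $T_M$, $c$, $R$ can be reused to build seeds in $M$ that map to $\sigma_i$ for every $i$. The composed representation function is obtained by running a Turing machine $M^*$ that simulates $(U,\sigma_i,2)$ long enough to learn the inner scale factor $c'$ and inner representation function $R'$, and then returning $R'(R(\cdot))$ at scale $c\cdot c'$.

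The trade-offs are as follows. Your choice of $\mathcal{U}_\infty$ avoids the paper's somewhat delicate ``reuse the bit macrotiles'' step, since a single simulation in $M$ already contains, as producible assemblies, representatives for every $\calT$. On the other hand, your seed $\beta_\calT$ is enormous (it encodes the entire TM run and the full constructed seed of $\calT$), whereas the paper's seed in $M$ encodes only the integer $i_\calT$ and lets the growth in $M$ do the rest---so the paper's version is genuinely ``seed-first'' in the interesting sense, while yours degenerates to ordinary simulation (which is still a special case of seed-first). Both arguments share the same unproven assumption about $M$: that re-seeding a system in $M$ with a chosen assembly over $T_M$ yields the expected producible assemblies. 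Neither version removes that gap; the paper simply flags it as a claim.
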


To support Claim \ref{clm:cross-model-IU}, we now define aTAM system $\mathcal{U}_b$ as follows. Use the tile set $U$ from $\mathcal{U}_\infty$ of the proof of Theorem \ref{thm:simult-sim}, set the temperature to 2, select an arbitrary binary number $b$ and create a seed assembly, $\sigma_b$, that is pre-built from the tiles of $T_b$ (starting with $b_{seed}$) to represent $b$ with an `x' after the last bit (i.e., rather than letting such assemblies nondeterministically form, the seed is now a single, pre-selected number encoded into a seed assembly).

Let $M$ be a model of tile-based self-assembly such that, for every aTAM system $\calT$, there exists a system $\calT'$ in $M$ that simulates $\calT$. If this is true, then there must be a system in $M$, which we'll call $\mathcal{M}_b$, that simulates $\mathcal{U}_b$. Let $T_M$ be the tile set used by $\mathcal{M}_b$, $c$ be the scale factor of the simulation, and $R$ the representation function. 
We now claim that, since a seed assembly using $T_M$, $c$, and $R$ was made for an arbitrary value of $b$, it should be possible to use the same $T_M$, $c$ and $R$ to make seed assemblies encoding all values $i \in \mathbb{Z}^+$. The intuition behind this claim is that it must be possible to reuse the macrotiles representing tiles of individual bit values, since that would have to be the case if $b$ was large enough, so it must be possible to combine them to form any desired value. Assuming that claim, 
it is possible to generate system $M_i$, using $T_M$, $R$, and $c$ to create an assembly over $T_M$ that maps to $\sigma_i$ so that $M_i$ simulates $\mathcal{U}_i = (U, \sigma_i, 2)$ for arbitrary $i$, and note that not only does $M_i$ simulate $U_i$ in the standard way (i.e. with a pre-built seed structure for a multi-tile seed), using $R$ and $c$, but we can also determine the representation function $R''$ and scale factor $c''$ such that we can interpret the assemblies in $\mathcal{M}_i$ as assemblies in $\calT_i$, which $\mathcal{M}_i$ is also seed-first-simulating via $R''$ at scale factor $c''$.

Let Turing machine $M^*$ be a Turing machine that takes as input an assembly $\sigma_i$ and runs an aTAM simulator on the system $(U,\sigma_i,2)$ until the it completes the simulation of $D'(i)$ (where $D'$ is the TM defined in the proof of Theorem \ref{thm:simult-sim}).  
At that point, $M^*$ will be able to compute the scale factor $c'$ at which $(U,\sigma_i,2)$ simulates aTAM system $\calT_i$, as well as the representation function $R'$ for that simulation.

Define $R''$ as follows: On input $\alpha'$ which is a producible assembly in $\mathcal{M}_i$ (i.e. an assembly over the tiles of $T_M$),
\begin{enumerate}
    \item Run $M^*$ on input assembly $\sigma_i$ to obtain $c'$ and $R'$. (Note that the assembly $\sigma_i$ is hard-coded into $R''$ so that it is specific for the simulation of aTAM system $i$.)
    \item Compute $c'' = c * c'$
    \item Return $R'(R(\alpha')$ (i.e. the result is an assembly in $\calT_i$)
\end{enumerate}

Since we claim it's possible, for an arbitrary $i \in \mathbb{Z}^+$, to create an assembly from the tiles of $T_M$ that represents $\sigma_i$ under $R$ at scale factor $c$, and that such an assembly seed-first-simulates the $i$th aTAM system, it would then be the case that $T_M$ is intrinsically universal for the aTAM by performing seed-first-simulations.


\section{Technical Appendix}\label{sec:append}


In this section we provide technical definitions and  details of proofs of the results.

\subsection{Window Movie Lemma}\label{sec:wml}

Here we include definitions related to the Window Movie Lemma from \cite{temp1notIU}, as well as restate that lemma, for completeness.

A \emph{window} $w$ is a set of edges forming a cut-set in the infinite grid graph over $\mathbb{Z}^2$.
Given a window $w$ and an assembly $\alpha$, a window that {\em intersects} $\alpha$ is a partioning of $\alpha$ into two  configurations (i.e.\ after being split into two parts, each part may or may not be  disconnected).
In this case we say that the window $w$ cuts the assembly $\alpha$ into two configurations $\alpha_L$ and~$\alpha_R$, where $\alpha = \alpha_L \cup \alpha_R$.
Given a window $w$, its translation by a vector $\vec{c}$,  written $ w + \vec{c}$ is simply the translation of each of $w$'s elements (edges) by~$\vec{c}$.

Given an assembly sequence $\vec{\alpha}$ and a window $w$, the associated {\em window movie} is the maximal sequence $M_{\vec{\alpha},w} = (v_{0}, g_{0}) , (v_{1}, g_{1}), (v_{2}, g_{2}), \ldots$ of pairs of grid graph vertices $v_i$ and glues $g_i$, given by the order of the appearance of the glues along window $w$ in the assembly sequence $\vec{\alpha}$.
Furthermore, if $k$ glues appear along $w$ at the same instant (this happens upon placement of a tile which has multiple  sides  touching $w$) then these $k$ glues appear contiguously and are listed in lexicographical order of the unit vectors describing their orientation in $M_{\vec{\alpha},w}$.

\begin{lemma}[Window movie lemma]
\label{lem:windowmovie}
Let $\vec{\alpha} = (\alpha_i \mid 0 \leq i < l)$ and $\vec{\beta} = (\beta_i \mid 0 \leq i < m)$, with
$l,m\in\Z^+ \cup \{\infty\}$,
be assembly sequences in $\mathcal{T}$ with results $\alpha$ and $\beta$, respectively.
Let $w$ be a window that partitions~$\alpha$ into two configurations~$\alpha_L$ and $\alpha_R$, and $w' = w + \vec{c}$ be a translation of $w$ that partitions~$\beta$ into two configurations $\beta_L$ and $\beta_R$.
Furthermore, define $M_{\vec{\alpha},w}$, $M_{\vec{\beta},w'}$ to be the respective window movies for $\vec{\alpha},w$ and $\vec{\beta},w'$, and define $\alpha_L$, $\beta_L$ to be the subconfigurations of $\alpha$ and $\beta$ containing the seed tiles of $\alpha$ and $\beta$, respectively.
Then if $M_{\vec{\alpha},w} = M_{\vec{\beta},w'}$, it is the case that  the following two assemblies are also producible:
(1) the assembly $\alpha_L \beta'_R = \alpha_L \cup \beta'_R$ and
(2) the assembly $\beta'_L \alpha_R = \beta'_L \cup \alpha_R$, where $\beta'_L=\beta_L-\vec{c}$ and $\beta'_R=\beta_R-\vec{c}$.
\end{lemma}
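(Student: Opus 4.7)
The plan is to construct, for each of the two target assemblies, an explicit valid assembly sequence in $\mathcal{T}$ by interleaving the ``L-side'' placements from $\vec{\alpha}$ with the ``R-side'' placements from $\vec{\beta}$, choreographed by the common window movie. Without loss of generality I translate $\vec{\beta}$ by $-\vec{c}$ so that $w' = w$; the two movies then record the same ordered events along the same cut, and each event $(v_i, g_i)$ pins down a specific grid vertex $v_i$ lying on a specific side of $w$.

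For assembly (1), $\alpha_L \cup \beta'_R$, I begin from the seed of $\mathcal{T}$, which lies in $\alpha_L$. Let $\vec{\alpha}^L$ denote the subsequence of $\vec{\alpha}$ that places tiles of $\alpha_L$, in its original order, and let $\vec{\beta}^R$ denote the subsequence of $\vec{\beta}$ that places tiles of $\beta'_R$. I walk the common movie $M = M_{\vec{\alpha},w} = M_{\vec{\beta},w'}$ entry by entry: between two consecutive entries I append every tile that either $\vec{\alpha}^L$ or $\vec{\beta}^R$ placed while contributing no glue to $w$ (in either relative order, since neither touches the other's side of the cut); when an entry is reached I append the tile placed at vertex $v_i$ drawn from whichever of $\vec{\alpha}^L$ or $\vec{\beta}^R$ naturally contains it, the choice being forced by the side of $w$ on which $v_i$ sits. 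I iterate until $\vec{\alpha}^L$ and $\vec{\beta}^R$ are exhausted, and assembly (2) is built symmetrically starting from the translated seed in $\beta'_L$.

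To verify validity I argue inductively that each appended tile attaches $\tau$-stably at the moment it is appended. A tile drawn from $\vec{\alpha}^L$ has its $\alpha_L$-side neighbors already in place because the relative order of $\vec{\alpha}^L$ is preserved; any input glues it needs across $w$ correspond to entries of $M$ strictly preceding this one, and by equality of movies the matching glues have been deposited by $\vec{\beta}^R$-tiles already appended to the interleaved sequence, so exactly the same cut-strength bound that validated the attachment inside $\vec{\alpha}$ still applies. The symmetric claim covers $\vec{\beta}^R$-tiles. Since $w$ cuts $\mathbb{Z}^2$ into two components respectively occupied by $\alpha_L$ and $\beta'_R$, the two sub-assemblies have disjoint domains and no location conflicts can arise.

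The principal technical obstacle is the bookkeeping for tiles touching $w$ along more than one edge, which contribute several consecutive entries to $M$ in a single placement; the statement's convention that simultaneous entries appear contiguously and in a fixed lexicographic order of their orientation vectors makes ``the tile responsible for this contiguous block'' unambiguous, so the interleaving appends that single tile intact rather than splitting it across events. A secondary subtlety is the seed position in assembly (2), where the translated seed of $\beta'_L$ sits at the original seed location minus $\vec{c}$, so the resulting producibility is understood up to the standard translation-invariance of the TAS. The remainder of the argument is a routine induction maintaining the invariant that after the $i$-th entry of $M$ has been processed, the interleaved sequence has produced exactly the union of the $\alpha_L$-prefix of $\vec{\alpha}$ through event $i$ and the $\beta'_R$-prefix of $\vec{\beta}$ through event $i$, which at termination gives $\alpha_L \cup \beta'_R$ as required.
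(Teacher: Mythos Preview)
The paper does not prove this lemma; it is quoted from \cite{temp1notIU} and included only ``for completeness,'' followed by a one-paragraph intuitive summary but no argument. Your proposal is essentially the standard proof from that reference: translate so the windows coincide, interleave the $L$-subsequence of $\vec{\alpha}$ with the $R$-subsequence of $\vec{\beta}$ synchronized on the common movie, and verify inductively that each appended tile still binds $\tau$-stably because its same-side predecessors are already present (relative order preserved) while its cross-window input glues are fixed by earlier movie entries that the other subsequence has already reproduced identically. Your handling of multi-edge tiles via the lexicographic-ordering convention and your remark about translation for assembly~(2) are the right points to flag; the argument is correct and there is nothing further in the paper to compare against.
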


Essentially, the Window Movie Lemma states that if the same window movie occurs in two different assembly sequences of some TAS $\calT$, but in different locations, valid producible assemblies in $\calT$ include (1) an assembly with the ``left'' half created by the first sequence and the ``right'' half created by the second sequence, and (2) an assembly with the ``left'' half created by the second sequence and the ``right'' half created by the first sequence. Typical use of this lemma includes showing that some portion of a sufficiently large growing assembly must ``pump'', i.e. have repetitive structure exhibited by repetition of identical window movies. When this occurs, there must exist valid assembly sequences in $\calT$ in which the portions of the assembly that grow after each occurrence of that window movie can be swapped, and/or the subassembly between the identical window movies can be repeated (a.k.a. \emph{pumped}) an arbitrary number of times.

\subsection{Proof of Lemma \ref{lem:depend-path}}\label{sec:depend-path-proof}

Restatement of Lemma \ref{lem:depend-path} (Dependence paths):
Given a singly-seeded aTAM system $\calT$, producible assembly $\alpha \in \prodasm{\calT}$, and sets of locations $L_1,L_2 \subset \dom{\alpha}$, if $L_2$ strictly depends upon $L_1$, then in each valid assembly sequence of $\calT$ there must be a dependence path from some  $l_1 \in L_1$ to some $l_2 \in L_2$.


\begin{proof}
We prove Lemma \ref{lem:depend-path} by contradiction. Therefore, assume that, given singly-seeded aTAM system $\calT = (T,\sigma, \tau)$, producible assembly $\alpha \in \prodasm{\calT}$, and sets of locations $L_1,L_2 \subset \dom{\alpha}$, that $L_2$ strictly depends upon $L_1$. However, for the sake of contradiction assume that there exists some valid assembly sequence in $\calT$ in which there is not a dependence path from any point $l_1 \in L_1$ to any of the points $l_2 \in L_2$.

By the definition of strict dependence, we know that a tile is placed in $L_1$ before a tile is placed in $L_2$. Therefore, we have two cases to consider: (1) $\dom{\sigma} \in L_1$, i.e. $L_1$ contains the location of the seed of $\calT$, or (2) $L_1$ does not contain the seed location. We'll show that case (1) can't hold by induction. The induction hypothesis is that, given a producible assembly $\alpha$ in which every tile has a dependency path from the seed (which is in $L_1$) to it, then any tile which binds has a dependency path to the seed. The base case is the first tile attachment, which must be directly to the seed and therefore that tile has a dependency path to the seed. We prove the induction hypothesis simply by noting that if every tile of $\alpha$ has a dependency path to the seed and a new tile attaches to $\alpha$, the dependency path of any tile to which it binds is simply extended by 1 to be a dependency path from the location of the seed to the new tile. Therefore, there is a dependency path from the seed to any location in a producible assembly, and thus case (1) cannot hold.

Now, consider case (2). Assume that $L_2$ strictly depends upon $L_1$, but that there is no dependency path from any location in $L_1$ to any of the locations in $L_2$. Let $\vec{\alpha}$ be any assembly sequence which places tiles in $L_1$ and $L_2$. Since $L_2$ strictly depends upon $L_1$, $\vec{\alpha}$ must place a tile in $L_1$ before $L_2$ (by definition of strict dependence). However, we can now use $\vec{\alpha}$ to make a new assembly sequence $\vec{\alpha}'$ as follows. Step through $\vec{\alpha}$ one tile placement at a time. Add all tile placements from $\vec{\alpha}$ to $\vec{\alpha}'$ until any tile placement in $L_1$. Do not add that tile placement to $\vec{\alpha}'$, and from that point add all tile placements from $\vec{\alpha}$ which are not in $L_1$ and the tiles are able to attach to the assembly growing from $\vec{\alpha}'$ to that sequence as well, until and including the first tile placement in $L_2$. By the assumption that there is no dependency path from $L_1$ to $L_2$, it must be possible for $\vec{\alpha}'$ to also place the tile in $L_2$. Then, in $\vec{\alpha}'$, place the same first tile in $L_1$ which $\vec{\alpha}$ places there. This must be possible because no tile has been placed in that location in $\vec{\alpha}'$ and whichever tiles bordered that location in $\vec{\alpha}$ to allow a tile to attach there also border it in $\vec{\alpha}'$. However, this would mean that in $\vec{\alpha}'$, a tile is placed in $L_2$ before $L_1$, which is a contradiction. Therefore, there must be a dependence path from $L_1$ to $L_2$ in any valid assembly sequence in $\calT$ and Lemma \ref{lem:depend-path} is proven.

\end{proof}

\subsection{Proof of Theorem~\ref{thm:multi-imposs}}\label{sec:multi-imposs-proof}


Restatement of Theorem \ref{thm:multi-imposs}:
There exist an infinite number of aTAM systems with multi-tile seeds that cannot be shape-simulated by any aTAM system with a single-tile seed at (1) scale factors 1 or 2, or (2) scale factor 3 without using cheating fuzz.

\begin{proof}

To prove Theorem \ref{thm:multi-imposs}, we present one system which can't be shape-simulated by any aTAM system with a single-tile seed at (1) scale factors 1 or 2, or (2) scale factor 3 without using cheating fuzz, and note how an infinite set of such systems can easily be derived from it. Let $\calT = (T,\sigma,1)$, whose seed is depicted in Figure \ref{fig:multi-scale-system}, be such a system. (An infinite set of similar systems can be derived by increasing the lengths of the arms of the seed to arbitrary values and adjusting the tile set as necessary to accommodate the growth that will be described.) In $\calT$, there are five locations on the perimeter of the seed in which glues are exposed and to which tiles can attach. They are depicted in green. Each arm can grow a uniquely colored subassembly, each of which uses tile and glue types unique to that subassembly. An example assembly is depicted in Figure \ref{fig:pumped-paths-terminal}.
An infinite number of unique terminal assemblies, and assembly shapes, are possible in $\calT$, since, for instance, each path labeled 1 can be of arbitrary length. 

\begin{figure}
    \centering
    \includegraphics[width=5.0in]{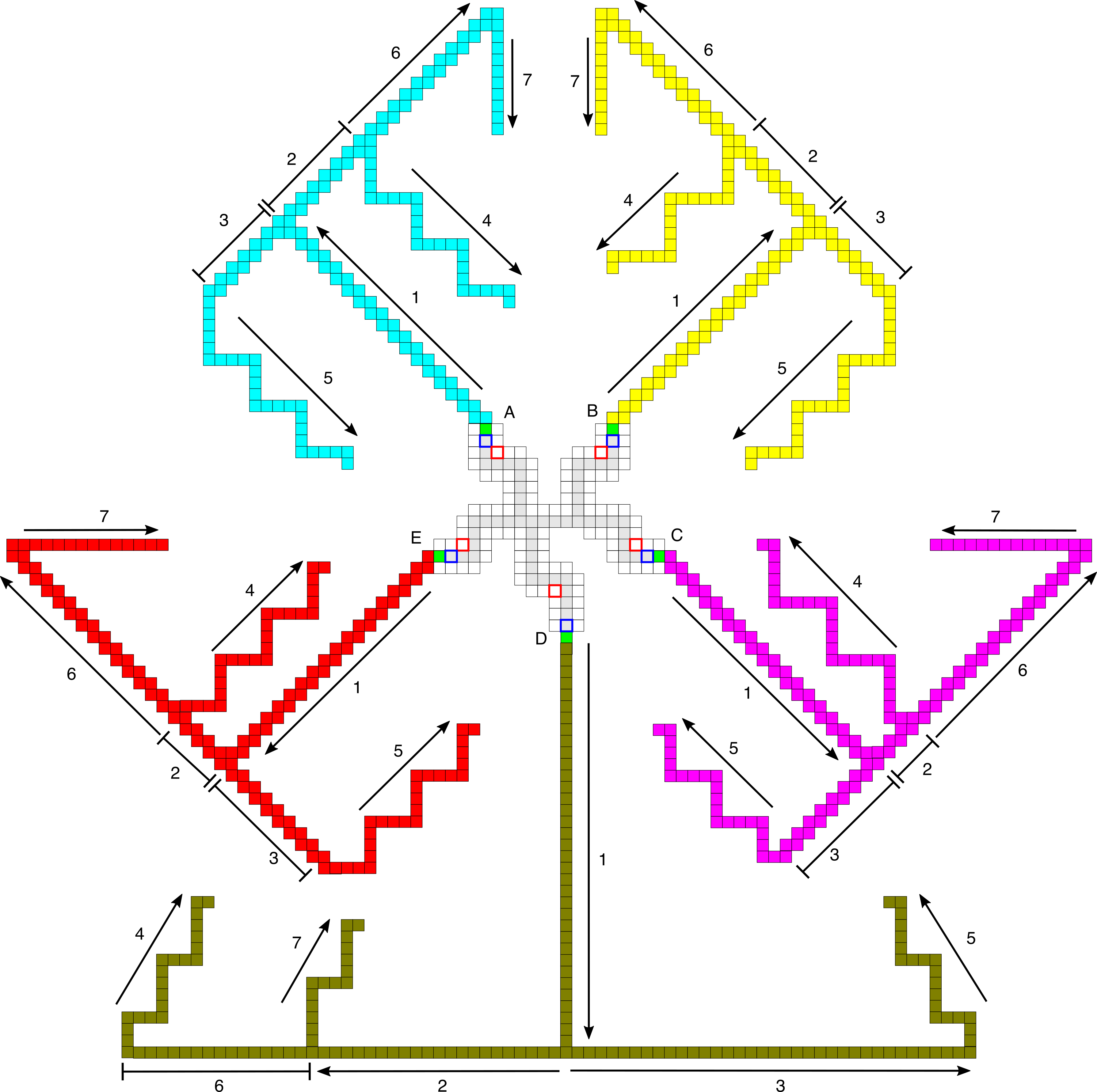}
    \caption{Depiction of an example assembly in $\calT$ (from the proof of Theorem \ref{thm:multi-imposs}) in which paths have grown from each arm of the seed. In order to show all paths, their lengths are greatly reduced, but all path segments marked with an arrow are assumed to be of length greater than the pumping length $p$. The blue, yellow, pink, and red sets of paths, which grow from seed arms A, B, C, and E, respectively, are similar modulo rotation/reflection and the lengths of the repeating periods of some of the paths labeled 4 and 5 (i.e. some of those paths travel distance 5 horizontally and 5 vertically before repeating, and some travel 6 horizontally and 6 vertically before repeating). Path segments marked with capped lines (i.e. those marked 2 and 3 in the blue yellow, pink, and red, and that marked 6 in the gold) are fixed lengths. Note that the slopes of the lines marked 7, 8, and 9 in the gold section are $\pm 6/4$, and those of the pink are red sections are all $\pm 1/1$, so all pumpable paths can be greater than the pumping length without colliding with each other and still be aimed at the appropriate seed locations.}
    \label{fig:pumped-paths-terminal}
\end{figure}

Our proof will be by contradiction, so assume that there exists singly-seeded aTAM system $\mathcal{S} = (S, \sigma_\mathcal{S}, \tau)$ such that $\mathcal{S}$ shape-simulates $\calT$ at either (1) scale factor 1 or 2, or (2) scale factor 3 but does not use cheating fuzz. Let $m$ be the scale factor used by $\mathcal{S}$ to shape-simulate $\calT$, and let $g$ be the number of unique glues types on the tile types of $\mathcal{S}$. We define $p$ to be the ``pumping length'' of a simulated one-tile-wide path in $\mathcal{S}$ and set $p = ((g+1)^{6m}\cdot (6m)!+1)\cdot 6 + 1$. This value is derived such that we can apply the Window Movie Lemma of \cite{temp1notIU}. Let $R: B^T_m \dashrightarrow S$ be the representation function that maps $m$-block supertiles over $S$ to tiles of $T$.

Figure \ref{fig:pumped-paths-terminal} shows an example of a producible, terminal assembly $\alpha \in \termasm{\calT}$ (with the pumpable path segments shortened to fit on the page). From each green location of the seed a path grows outward from the seed (these paths are labeled 1 in the figure). These paths can be arbitrarily long, and we will only consider lengths significantly longer than the pumping length of $\mathcal{S}$, $p$. Each path labeled 1 then splits, ultimately resulting in 3 paths which grow back toward the seed. Since the paths labeled 1 are significantly longer than $p$, so can all of these others be, and we consider an $\alpha$ in which this is the case.
    
We'll now refer to the blue paths originating from the seed point labeled A, and note that the same arguments hold for the yellow, pink, and red sections up to rotation. The gold section has a different geometry, but the same general principles hold.
In this assembly $\alpha$, all paths terminate, but only after growing longer than length $p$. Each has a periodic section which can repeat an arbitrary number of times consecutively, but for the paths labeled 4, 5, and 7, nondeterministically there is also a chance, after each repetition, for a ``capping'' tile to attach causing the path to terminate. However, note that the shape and slope of path 4 would allow it, if it were not capped, to extend via continued repetitions until it crashed into the green location of arm B (i.e. it would be able to place a tile in that location if there wasn't already a tile there). Similarly, the path labeled 5 could crash into the green location of arm E. Each colored section has the potential to grow paths that can crash into the green locations of the two neighboring arms.

    
    
\begin{figure}
    \centering
    \includegraphics[width=4.0in]{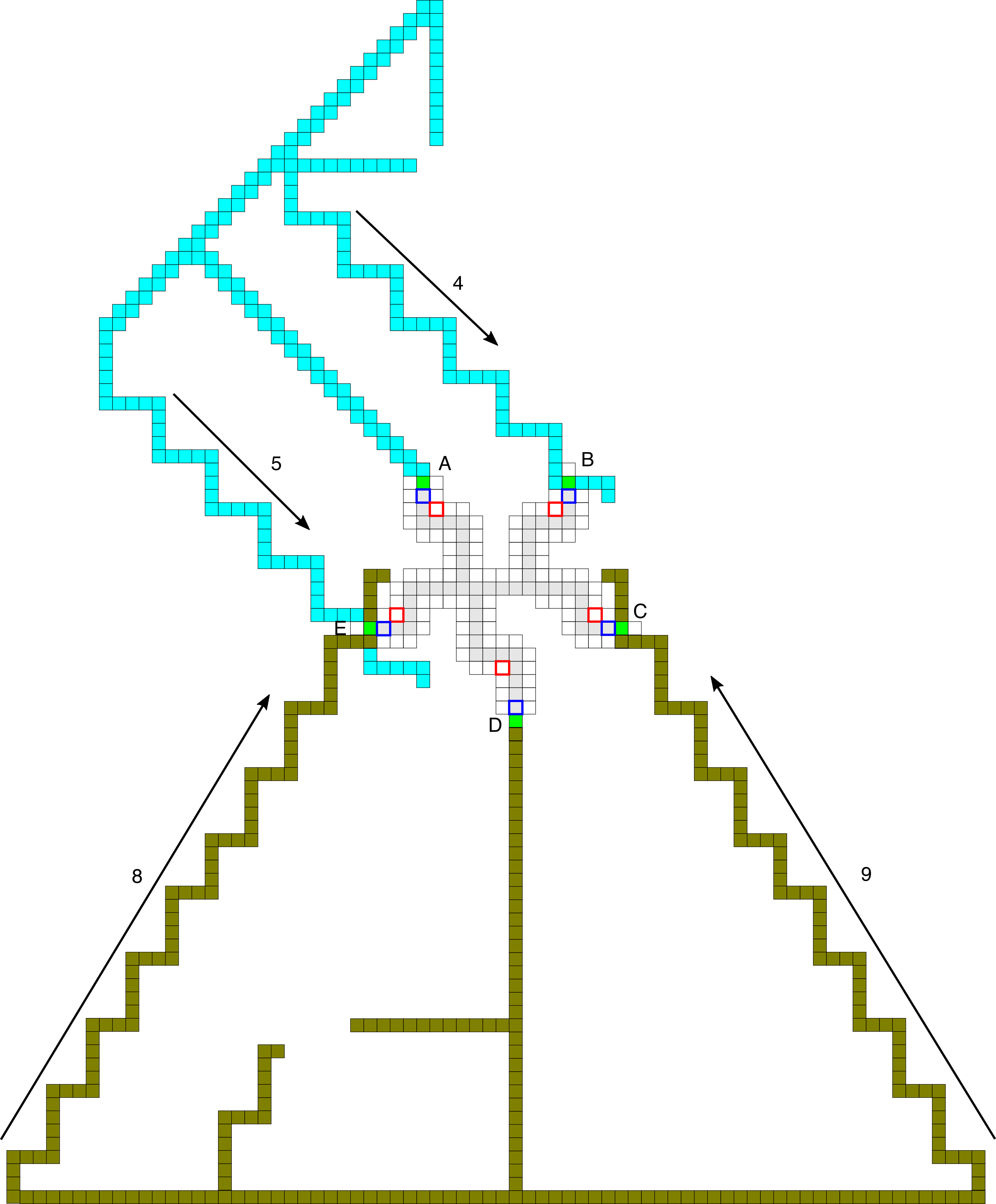}
    \caption{Depiction of an assembly (derived from $\alpha$ in the proof of Theorem \ref{thm:multi-imposs}) in which blue paths from arm A have pumped and crashed into the green locations of arms B and E, and gold paths from arm D have pumped and crashed into the green locations of arms C and E. Note that such an assembly is not actually possible in $\calT$ since all tiles of the seed, including the green tiles, are present before any paths begin growth. Additionally, the figure depicts the blue and gold paths growing through each other in arm E, which is also impossible but depicted to show the trajectory of each path.}
    \label{fig:multi-scale-crashed-paths}
\end{figure}

Since $\mathcal{S}$ is assumed to shape-simulate $\calT$, there must be a terminal assembly $\beta \in \termasm{\mathcal{S}}$ that maps, via $R$, to an assembly with the same shape as $\alpha$. Thus, there must be an assembly sequence in $\mathcal{S}$ that creates $\beta$, which we'll call $\vec{\beta}$. By the Window Movie Lemma, since each of paths 4 and 5 are of lengths greater than $p$, there must also be valid assembly sequences in $\mathcal{S}$ that produce extended versions of each path (by pumping an earlier segment before the terminating ``capping'' tile attaches). We examine a few such possible extensions and potential collision now (and a depiction of an assembly impossible to form in $\calT$, but to which the assembly maps can be seen in Figure \ref{fig:multi-scale-crashed-paths}).
    
Let $l_B$ refer to the location of the blue highlighted seed tile in arm B of $\alpha$, $l_D$ to the blue highlighted location in arm $D$, $l_{p4}$ to a location on path 4 which is greater than distance $p$ from the beginning of path 4, and $l_{p5}$ a location on path 5 which is greater than distance $p$ from the beginning of path 5. Given some location $l \in \mathbb{Z}^2$, let $R^{-1}(l)$ refer to the set of locations in $\mathcal{S}$ which map to location $l$ in $\calT$ under $R$.
    
\begin{lemma}\label{lem:dep}
$R^{-1}(l_{p4})$ strictly depends upon $R^{-1}(l_B)$.
\end{lemma}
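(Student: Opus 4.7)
The plan is a proof by contradiction driven by the Window Movie Lemma. Suppose toward a contradiction that there exists a valid assembly sequence $\vec\beta$ of $\mathcal{S}$ which places a tile at some location of $R^{-1}(l_{p4})$ while no tile has yet been placed in any location of $R^{-1}(l_B)$. The goal is to extend $\vec\beta$ to a terminal assembly $\gamma' \in \termasm{\mathcal{S}}$ whose image $\dom R^*(\gamma')$ is not a terminal shape of $\calT$, violating Definition \ref{def-shape-sim} and contradicting the assumption that $\mathcal{S}$ shape-simulates $\calT$.

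First I would isolate from $\vec\beta$ the subsequence building the macrotile preimage of path 4 from arm A out to $l_{p4}$. Because $l_{p4}$ lies more than $p$ steps along path 4 from its start and $p = ((g+1)^{6m}(6m)!+1)\cdot 6 + 1$ was chosen precisely so that the number of possible window movies across any cut of width $\le 6m$ (with $m \le 3$ and $g$ glue types in $\mathcal{S}$) is strictly less than the number of cuts one can slide along this segment, a pigeonhole argument yields two parallel cuts $w$ and $w+\vec c$ across the macrotile preimage of path 4 with identical window movies. The Window Movie Lemma then gives a new valid assembly sequence $\vec\beta'$ of $\mathcal{S}$ in which the macrotile segment between those cuts is pumped an arbitrary number of times.

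The key geometric step, which I expect to be the main obstacle, is to show that the pumped extension is actually realizable inside $\mathcal{S}$ and forces tiles to appear at macrotile positions that map outside the shape of any assembly of $\calT$. By the design of the seed in Figure \ref{fig:multi-scale-system} and the straight, constant-slope trajectory of path 4 depicted in Figure \ref{fig:pumped-paths-terminal}, the extended path is aimed directly at $l_B$ and continues in a fixed direction beyond it. Since no tile of $R^{-1}(l_B)$ is present in the assembly of $\vec\beta$, there is no obstruction at $l_B$, and the arms of $\sigma$ have been placed with enough slope-separation that the straight pumped trajectory does not intersect the macrotile preimage of any other arm or path. Pumping enough times thus places macrotiles in positions $R^{-1}$-mapped strictly beyond $l_B$ along the line of path 4. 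I then let any remaining producible growth complete to obtain a terminal assembly $\gamma' \in \termasm{\mathcal{S}}$.

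Finally, in $\calT$ every realization of path 4 either halts at a capping tile before reaching $l_B$ or is blocked exactly at $l_B$, so no terminal assembly of $\calT$ has tiles strictly past $l_B$ along that trajectory. Hence $\dom R^*(\gamma') \notin \termdom{\calT}$, which contradicts Definition \ref{def-shape-sim}. The only assumption was the existence of an assembly sequence of $\mathcal{S}$ placing a tile in $R^{-1}(l_{p4})$ before any tile of $R^{-1}(l_B)$, so no such sequence exists; equivalently, $R^{-1}(l_{p4})$ strictly depends on $R^{-1}(l_B)$, as claimed.
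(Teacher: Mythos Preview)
Your overall approach matches the paper's: contradiction via the Window Movie Lemma, pumping the macrotile preimage of path 4 through the unoccupied block $R^{-1}(l_B)$, and then arguing that the resulting terminal shape is not in $\termdom{\calT}$. The pumping step and the observation that $R^{-1}(l_B)$ (and its would-be fuzz neighbours) are empty and hence cannot obstruct the extension are both fine and essentially what the paper does.

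The gap is in your last paragraph. From ``every realization of path 4 either halts before $l_B$ or is blocked at $l_B$'' you conclude ``no terminal assembly of $\calT$ has tiles strictly past $l_B$ along that trajectory.'' That inference is not valid: you have only ruled out the \emph{blue} path 4 as a source of tiles beyond $l_B$, but other subassemblies of $\calT$ live in that region --- in particular the yellow paths emanating from arm $B$ and the pink paths from arm $C$. To conclude that the pumped shape is not a terminal shape of $\calT$ you must argue that neither of these can produce the specific period-5 staircase pattern that the pumped extension writes just past $l_B$. The paper does exactly this: it observes that the green tile at $l_B$ initiates only paths that never return to the location immediately east of $l_B$, and that the pink paths which could approach that region have period $6$ rather than $5$, so no assembly of $\calT$ can realise the period-5 extension there. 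Without this ``period mismatch'' argument, the possibility remains that some terminal assembly of $\calT$ happens to have tiles at the pumped locations, and your contradiction would not close. Add that step and your proof is complete and coincides with the paper's.
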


\begin{proof}
Lemma \ref{lem:dep} says that all valid assembly sequences in $\mathcal{S}$ place a tile in the macrotile location mapping to $l_B$ before placing any tile in the macrotile location mapping to $l_{p4}$. We prove this by contradiction, so therefore assume that there exists an assembly sequence which places a tile in the location mapping to $l_{p4}$ before placing any tiles in the macrotile location mapping to $l_B$. Because the length of path 4 is greater than $p$, so is the portion of $\beta$ which maps to path 2. This means that, by the Window Movie Lemma, there must be a repeated window movie along cuts of the subassembly mapping to path 2, and therefore valid producible assemblies can also be made by pumping the repeated section of that subpath arbitrarily many times. (A schematic depiction of the pumping of a path can be seen in Figure \ref{fig:pumped-example}. Let $\beta'$ be an assembly, which must therefore be producible in $\mathcal{S}$, in which the subassembly mapping to path 2 is pumped so that it passes through the green location of arm B and places tiles which resolve to at least one more period of the path (i.e. it places the 5 tiles of each of a vertical and horizontal section of the path). This must be possible by the Window Movie Lemma (which allows the pumping) and the fact that no tiles can previously have been in locations occupied by the newly extended path since, if no tile has been placed in $l_B$, then there also cannot be tiles in the fuzz locations to its left and right (since those are not legal fuzz locations before $l_B$ has resolved to a tile in $\calT$, which it cannot since it's empty). Thus, the tiles mapping to the blue path as well as any allowable fuzz positions around it can be freely placed (in exact duplication of the pumpable ordering that previous copies must have been placed in for an earlier segment of the path). Since the macrotiles of the previous section of the path map to tiles of path 4, so must the newly pumped macrotiles since the representation function remains constant.

At this point we can note that, regardless of whatever additional tile additions may be made to $\beta'$, it must map to an assembly whose shape cannot be that of a terminal assembly of $\calT$. This is because there is no possible assembly sequence in $\calT$ in which a path with the repeating period (5 horizontal and 5 vertical positions) can extend from the right side of the green location of arm B. This is because the green tile of arm B can only initiate paths that would not be able to return to the location to its immediate east, and any path that could extend into that location from the pink growth would have a different period (i.e. 6 horizontal and 6 vertical positions). Therefore, $\mathcal{S}$ would fail to shape-simulate $\calT$ in this case. This is a contradiction to the assumption that $R^{-1}(l_{p4})$ does not strictly depend upon $R^{-1}(l_B)$, and thus Lemma \ref{lem:dep} is proven.

\end{proof}

\begin{figure}
    \centering
    \includegraphics[width=6.0in]{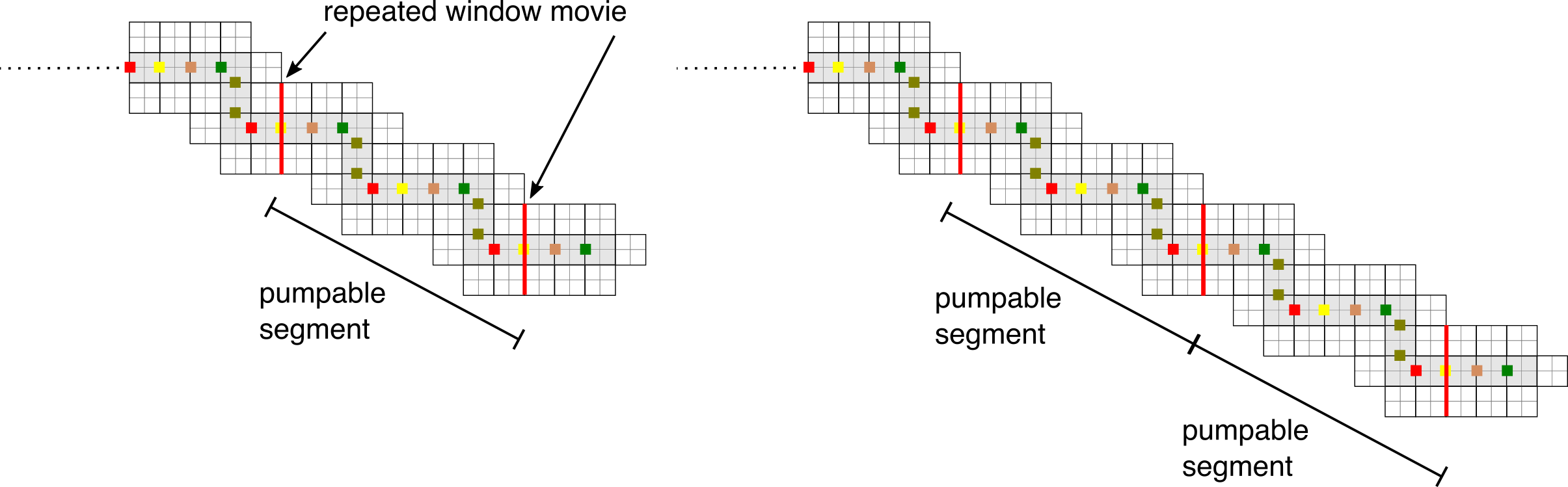}
    \caption{Example of a pumpable segment.}
    \label{fig:pumped-example}
\end{figure}

The same logic of Lemma \ref{lem:dep} can be applied to show that strict dependence occurs between two of the pumpable paths growing from each arm of the seed and the locations marked in blue in Figure \ref{fig:multi-scale-crashed-paths} of each of the two neighboring arms. By Lemma \ref{lem:depend-path}, this means that there is a dependence path going from a point in each blue-marked location, through each of the two neighboring arms, to the locations on those pumpable paths. In Figure \ref{fig:multi-scale-system}, the blue line across each arm shows a minimal cut across each. In $\alpha$ that cut crosses two locations (one tiled and one that could validly include the growth of fuzz). In $\beta$, the scaled version of each cut crosses two macrotile locations (one that must resolve to a tile under $R$, and one that is potentially allowed to contain growth of fuzz and also potentially able to eventually resolve into a tile). The scale factor $m$ of the simulation determines the size of each of these macrotiles.

\begin{figure}
    \centering
    \includegraphics[width=3.5in]{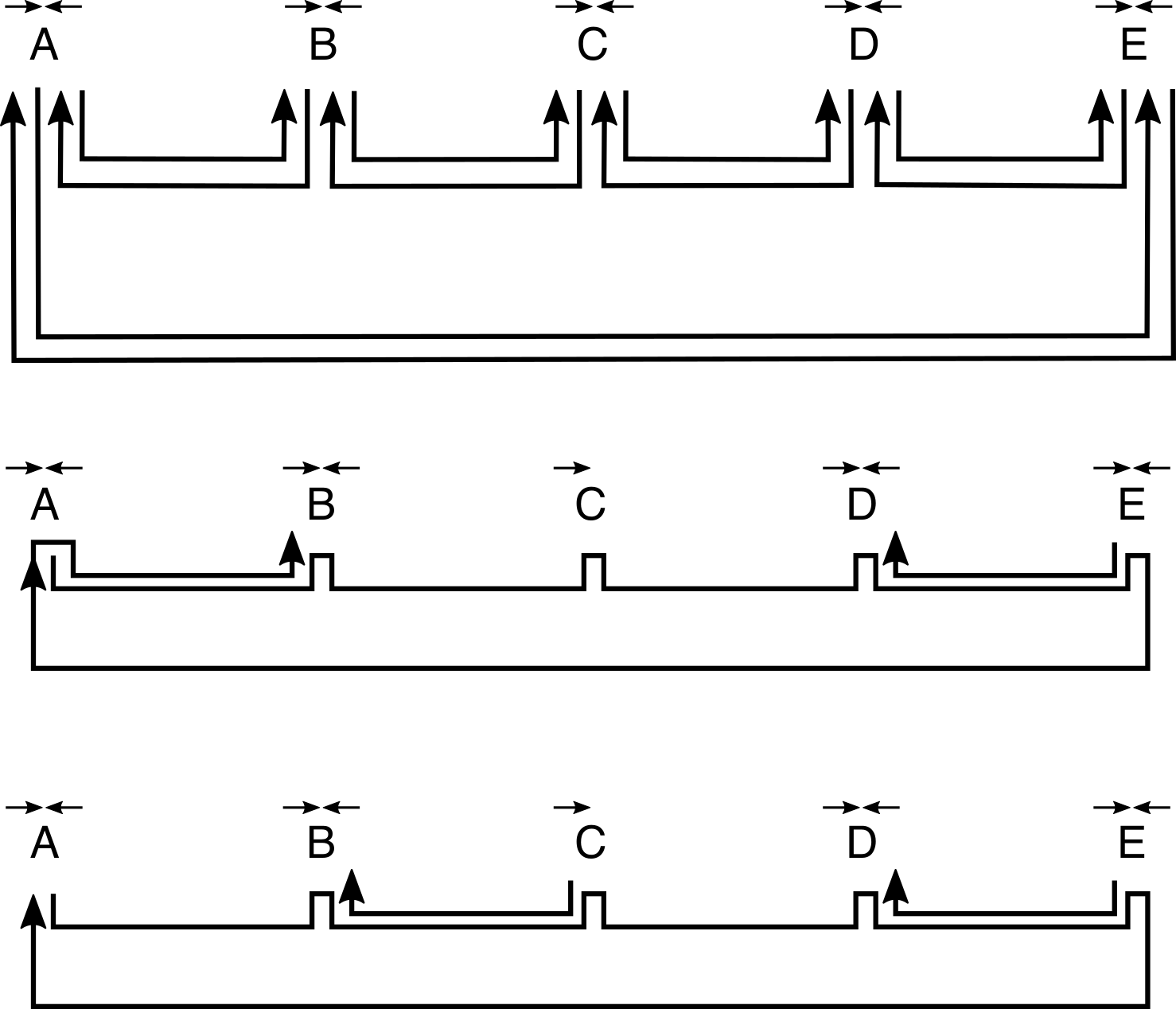}
    \caption{Schematic depiction of the dependence paths between the arms of $\beta$ from the proof of Lemma \ref{lem:4-paths}. Arrows above a label indicate which of the two neighbors has the required dependence path leading to that arm. (Top) Every dependence path is separate (i.e. formed by a disjoint set of tiles). This leads to all arms having minimal cuts across 4 tiles. (Middle) A single path originates in arm A, visits every other arm, then returns to arm B. That path satisfies the dependencies of arms A, B, and E. A separate path grows from E to D, satisfying the dependencies of D. At this point, the only ways to satisfy the missing dependency for arm C (i.e. a path from D), are to (1) continue the path from B (since it has passed through arm D) or to grow a new path directly from arm D. The first results in a minimal cut across 4 tiles in arm B, and the second results in a minimal cut across 4 tiles in arm D. (Bottom) Besides the circular path which begins and ends at arm A (satisfying dependencies for A and E), paths grow from arm C to arm B, and from arm E to arm D (satisfying the dependencies for B and D, respectively). At this point, the most succinct way to satisfy the missing dependency for arm C requires a path from D to C, which causes a minimal cut across 4 tiles in arm D. All other options either require increasing additional cut sizes and/or are identical modulo rotation.}
    \label{fig:4-paths}
\end{figure}

\begin{lemma}\label{lem:4-paths}
In assembly $\beta$ of $\mathcal{S}$ (as depicted in Figure \ref{fig:pumped-paths-terminal}), a minimal cut across the macrotiles of at least one arm (and its associated fuzz) must cross at least 4 tiles.
\end{lemma}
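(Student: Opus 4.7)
The plan is to count the dependence paths that each arm's minimal cut is forced to contain by Lemma~\ref{lem:dep} and Lemma~\ref{lem:depend-path}, and to do a case analysis (guided by Figure~\ref{fig:4-paths}) showing that no matter how those paths are routed in $\mathcal{S}$, at least one arm must have at least four dep-path crossings of its cut.

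First, I would enumerate the required dep paths. For each arm $X$ with cyclic neighbors $N_L(X)$ and $N_R(X)$, the pumpable paths of $N_L(X)$ and $N_R(X)$ that target $X$ strictly depend on $l_X$. By Lemma~\ref{lem:depend-path}, and as noted in the paragraph following Lemma~\ref{lem:dep}, this forces two physical dep paths starting at $l_X$ that each pass through one neighboring arm's tip to reach a pumpable-path location. Summed over the five arms, this gives ten required dep paths (two per arm).

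Second, I would relate these dep paths to each arm's minimal cut. For arm $X$, the cut separates its outer tip region (containing $l_X$) from the rest of the assembly; a dep path starting at $l_X$ and ending outside the tip crosses $X$'s cut exactly once, while a dep path merely passing through $X$'s tip en route elsewhere crosses $X$'s cut twice (once entering, once exiting). In the fully disjoint routing (Figure~\ref{fig:4-paths} Top), arm $X$ contributes two cut-crossings for its own outgoing pair plus two more from the pass-throughs of dep paths originating two arms away on each side, giving four distinct crossings per arm. Since the two outgoing dep paths from $l_X$ head to neighbors on opposite sides of $X$, they must diverge before or at the cut and hence use two distinct cut-edges.

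Third, I would analyze consolidation strategies. Consolidation fuses multiple required dep paths into a single long physical path that threads several tips; a consolidated path visiting $k$ tips in sequence uses $2k$ arm-cut edges in total (two per tip), so the total cut-crossings summed over all arms cannot decrease. The three representative routings in Figure~\ref{fig:4-paths} cover, up to the 5-cycle's rotational and reflective symmetry, all inequivalent consolidation strategies; for each I would verify by direct inspection (following the arrow annotations in the figure caption) that relieving some arm's crossing count necessarily pushes the count at a different arm to at least four.

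The hardest part will be ruling out subtle sharing of cut-edges between distinct required dep paths. To handle this I would use two facts: first, the two outgoing dep paths from each $l_X$ target geometrically opposite sides of $X$ and therefore cannot share a cut-edge of $X$'s cut; second, any consolidated physical path passing through some arm $Y$'s tip must enter and exit using two distinct cut-edges of $Y$, because the ``in'' and ``out'' macrotile locations along the cut must differ for the path to continue in its intended direction. Combined with the 5-cycle's rotational symmetry of the required dep-path endpoints, these two facts force the four-tile lower bound at some arm.
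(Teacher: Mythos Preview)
Your overall strategy---case analysis via Figure~\ref{fig:4-paths} using the directionality of dependence paths---is the same as the paper's. The gap is in the extra counting layer you add on top.

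Your ``fact 1'' does not hold. The two outgoing dependence paths from $l_X$ both cross $X$'s cut in the \emph{same} direction (tip toward body), so nothing prevents them from sharing a common prefix through the cut and only diverging afterward in the central body toward their respective neighbors. That the eventual targets lie on geometrically opposite sides does not force the divergence to occur at or before the cut. With fact~1 gone, your averaging claim that ``total cut-crossings summed over all arms cannot decrease'' also fails, since same-direction sharing genuinely reduces the sum.

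The paper does not attempt to rule out same-direction sharing. Its key observation is precisely your fact~2: two consecutive tiles of a dependence path cannot also carry a dependence path in the opposite direction, so an in-crossing and an out-crossing of the same arm can never share a tile. The three routings in Figure~\ref{fig:4-paths} are then checked directly, with the assertion that all other routings reduce to these modulo rotation. If you want your more systematic argument to go through, you need to replace fact~1 with an order-theoretic one: for instance, if every arm had only a single in-crossing tile, then each $l_X$ would have to be placed after both neighboring $l_{X'}$'s (since both incoming dependence paths funnel through that one tile before reaching $l_X$), giving a cyclic ``placed after'' relation among the five $l_X$'s. Tracking such placement-order cycles is what actually forces a fourth tile at some arm's cut; the purely geometric separation you invoke does not.
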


\begin{proof}
We prove Lemma \ref{lem:4-paths} by contradiction, so assume that there is a cut across at least one of the arms of $\beta$ that crosses fewer than 4 tiles. We first note that two consecutive tiles on a dependence path cannot also both be part of a dependence path traveling in the opposite direction. This is by the definition of a dependence path, since the $i$th tile of a dependence path must be placed before the $(i+1)$th tile, which only allows one direction for a dependence path across a pair of tiles.

We now refer to Figure \ref{fig:4-paths}, which gives a schematic depiction of the directed paths that must exist between each neighboring pairs of arms. The topmost figure depicts each path growing separately, and shows that in this case, there would need to be four paths traveling through each arm, resulting in minimal cuts across 4 tiles for every arm, so this cannot be the case. Since all dependencies must be resolved by dependence paths, the only alternative is to concatenate paths (which can allow dependencies to be resolved in a cyclic manner). Concatenating paths only serves to reduce the need for some separate path if a single path is created which visits all arms.  This is depicted in the middle and bottom figures of Figure \ref{fig:4-paths}, which show that it is impossible for all necessary dependence paths to exist without requiring at least 4 paths passing through each arm. Even though, in some cases, at different points the same path may pass into or out of an arm (e.g. when the same path travels through all arms and extends to revisit one or more arms), in each such case an additional path of tiles is required to maintain the dependencies of the full path and/or the correct directionality. Therefore, in all cases, at least one arm must have a minimal cut crossing at least 4 tiles. This is a contradiction to the assumption that some cut crosses less than 4 tiles, and proves Lemma \ref{lem:4-paths}.
\end{proof}

We now know by Lemma \ref{lem:4-paths} that the minimal cut across some arm must cross at least 4 tiles. This immediately means that the simulation scale factor of $\mathcal{S}$, $m$, must be greater than 1 since the shortest cut across each arm at scale factor one (depicted as blue lines in Figure \ref{fig:multi-scale-system}) crosses only one tile and one valid fuzz location. Additionally, for the case of $m = 3$ where cheating fuzz is not allowed, we note that the cuts represented by the blue lines in Figure \ref{fig:multi-scale-system} each cross only one macrotile location which can resolve to a tile, providing a maximum of three tiles to be crossed by the cut, and the fuzz macrotile location (outlined in red) is not allowed to receive any tiles. This is because no glues are exposed on the exterior edges of the seed tiles adjacent to those locations and since $\beta$ is terminal, no paths can crash into those locations. This means that those fuzz locations could never resolve to tiles or be in locations that map to locations adjacent to exposed glues, and thus represent cheating fuzz. Therefore, $\mathcal{S}$ must not be simulating $\calT$ at scale factor 3 without using cheating fuzz, since at least one of the dependence paths would have to place tiles in locations of cheating fuzz.

Thus, the only remaining case to consider is that in which the scale of the simulation is $m = 2$. In this case, the cuts marked in blue cross exactly 4 locations which can receive tiles. This provides enough locations for the dependence paths to grow, but requires that
if there are any unfilled locations within any of the fuzz location outlined in red, they must be contained between a pair of dependence paths and unreachable from the exterior, which is a fact we will utilize shortly.

We now have two cases to consider: (1) there exists even one assembly sequence that results in $\beta$ and which fills one of those locations in a way that the representation function $R$ maps to a tile in $\calT$, or (2) in all assembly sequences, such locations are filled so that $R$ maps them to empty space in $\calT$. In case (1), we follow such an assembly sequence and since $\beta$ is terminal, the domain of the assembly that it maps to in $\calT$ via $R$ is a domain that no terminal assembly in $\calT$ matches (since there is no chance for a tile to grow from the seed at that location and no path crashes into it) and thus $\mathcal{S}$ fails to shape-simulate $\calT$.
Therefore, case (1) cannot hold.
This leaves us with the final case to consider, that in which there is always at least one red outlined fuzz position that maps to empty space. Since any unfilled locations in such macrotiles are sealed off from the exterior, and we've already determined that no tiles can be placed within those cavities, there is no way to add tiles to those macrotile locations and thus they must always map to empty space. We can now simply note that the paths labeled 7 have the potential to grow so that they crash into each of those locations (examples seen as the two filled locations outlined in red in Figure \ref{fig:empty-crash}). This means that there are producible assemblies in $\calT$ that have tiles in those locations, but there are no producible assemblies in $\mathcal{S}$ that can map to assemblies that have tiles in those locations. Therefore, $\mathcal{S}$ also fails in the case, meaning that it fails in all cases. This is a contradiction that $\mathcal{S}$ shape-simulates $\calT$ at either (1) scale factor 1 or 2, or (2) scale factor 3 without using cheating fuzz. Since the only assumption made about $\mathcal{S}$ is that it is singly-seeded, it holds that no singly-seeded aTAM system can shape-simulate $\calT$ at (1) scale factors 1 or 2, or (2) scale factor 3 without using cheating fuzz. As previously mentioned, $\calT$ can be easily modified by changing the lengths of the seed arms (and appropriately modifying the path-growing tiles) through an infinite range, yielding an infinite set of aTAM systems with multi-tile seeds that cannot be shape-simulated by singly-seeded systems at these scale factors, and thus Theorem \ref{thm:multi-imposs} is proven.

\end{proof}

\subsection{Proof of Theorem~\ref{thm:scale4-sim}}\label{sec:scale4-sim-proof}


Restatement of Theorem \ref{thm:scale4-sim}:
Given an arbitrary aTAM system $\mathcal{T} = (T,\sigma,\tau)$, there exists an aTAM system $\mathcal{T}_4 = (T_4, \sigma_0, \tau_4=\max(2, \tau))$ which seed-first-simulates $\mathcal{T}$ at scale factor 4 and does not use cheating fuzz, where $|\sigma_0| = 1$ and $|T_4| \leq 28s + 16g + 6t$ given that $s = |\sigma|$, $t = |T|$, and $g$ is the total number of unique glue/strength combinations in $T$ and $\sigma$.

\begin{proof}
We prove Theorem~\ref{thm:scale4-sim} by construction. Let $\mathcal{T} = (T,\sigma,\tau)$ be an arbitrary aTAM system, and let $s = |\dom \sigma|$ be the number of tiles in the seed $\sigma$.

We define the system $\calT_4$ which seed-first-simulates $\calT$ by $\calT_4 = (T_4, \sigma_0, \tau_4)$. The temperature is defined by the function $\tau_4 = \max(2, \tau)$.
Temperature 1 systems can be trivially simulated by temperature 2 systems, and cooperative growth is utilized in the perimeter path to allow for correct seed-first-simulation. 
We define $T_4$ as the combination of two sets of tiles, $T_4 = T_\sigma \cup T_{IO}$.
$T_\sigma$ are the $16s$ tiles which will self-assemble the scaled version of $\sigma$.
$T_{IO}$ is the scaled expansion of the tileset $T$.


We begin by the generation of $T_\sigma$.

\begin{observation}\label{obs:scale4-contains-scale2}
Every scale 4 supertile contains a scale 2 square.
\end{observation}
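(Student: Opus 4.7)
My plan is to prove this observation by explicitly identifying a canonical scale 2 sub-square inside every scale 4 supertile. Recall that a scale 4 supertile is a partial function $\alpha : \Z_4^2 \dashrightarrow T$ whose positions lie in the $4 \times 4$ grid $\Z_4^2 = \{0,1,2,3\}^2$. Since $\{1,2\}^2 \subseteq \Z_4^2$, the $2 \times 2$ region centered within the supertile is a well-defined geometric sub-region, and every scale 4 supertile inherits it as its ``interior'' scale 2 square.

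The proof would proceed in two short steps. First, I would formalize what it means for a scale 4 supertile to contain a scale 2 square: namely, that there exist $(i,j)$ such that $\{i, i+1\} \times \{j, j+1\} \subseteq \Z_4^2$, so that the four positions $(i,j)$, $(i+1,j)$, $(i,j+1)$, $(i+1,j+1)$ form a scale 2 square inside the supertile. Since $(i,j) = (1,1)$ trivially satisfies this containment (and so do several other choices), the set-theoretic statement is immediate from the definition of $\Z_4^2$. Second, I would fix $(i,j) = (1,1)$ as the canonical choice, calling this the \emph{core} of the supertile, and observe that this choice is available uniformly for every supertile produced by $T_\sigma$ regardless of which tile of $\calT$ the supertile represents.

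There is no real difficulty in the geometric fact itself, but it is worth explicitly justifying why the central $2 \times 2$ is singled out rather than any other $2 \times 2$ sub-square. For any two scale 4 supertiles placed side by side in the scaled representation of $\sigma$, their respective central cores are separated by exactly a two-tile-wide corridor running along the shared supertile boundary. That uniform two-tile gap is precisely what the subsequent construction will exploit to route the core path between neighboring cores, connecting them into a single dependence path that reflects a Hamiltonian cycle through the scale 2 version of $\sigma$ (as cited from \cite{SummersTemp}). The main subtlety, then, is not the containment itself but the commitment to this specific sub-square on which the rest of the scale 4 construction depends.
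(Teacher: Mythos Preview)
Your proposal is correct and matches the paper's approach: the paper simply points to a figure showing the central $2\times 2$ block (which it names the \emph{core tiles}) inside each $4\times 4$ supertile, and notes that adjacent cores are connected by two intervening tiles. Your write-up is more formal than the paper's one-line appeal to the figure, but the content---selecting the centered $\{1,2\}^2$ sub-square as the canonical scale 2 square---is identical.
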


Observation~\ref{obs:scale4-contains-scale2} is demonstrated in Figure~\ref{fig:S4-7_scale_4_contains_scale_2}, and from this we define this scale 2 square as the \emph{core tiles} of a scale 4 supertile.
Figure~\ref{fig:S4-7_scale_4_contains_scale_2} also demonstrates that the core tiles of adjacent supertiles can be connected via 2 additional tiles in each supertile, demonstrated by the red lines.
The presence of a Hamiltonian cycle is proven to exist for shapes of scale factor 2, as shown in \cite{SummersTemp}.
This allows us to guarantee that a core path can be created which visits each supertile contained in the seed.
The generation of this Hamiltonian cycle follows the procedure developed in \cite{griffith2004growing} which consists of the following two steps.

\begin{figure}
    \centering
    \includegraphics[width=0.25\textwidth]{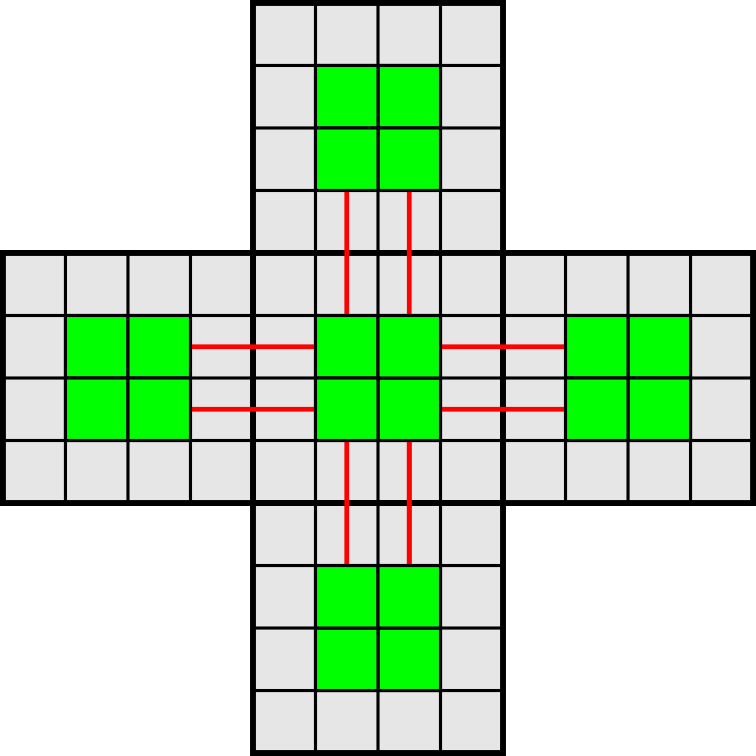}
    \caption{The scale 2 square contained inside the scale 4 supertiles is defined by the green tiles. Note that scale 2 squares of adjacent supertiles can be connected by 2 tiles, indicated by the tiles under red lines.}
    \label{fig:S4-7_scale_4_contains_scale_2}
\end{figure}

First, a spanning tree must be generated; this can be found utilizing breadth-first or depth-first search algorithms.
For convenience, we define the \emph{origin} supertile as the westernmost tile in the southernmost row of the seed.
The root of the spanning tree is set as the origin from which either breadth-first or depth-first search is carried out.

Second, we are able to replace each tile in the spanning tree with an associated supertile based upon its neighbors - we begin with the origin supertile.
If $s=1$, the system $\calT$ already contains a single-tile seed, and we utilize a specific origin tile; this is case 0).
Due to the location of the origin supertile, only 3 additional cases exist for how the origin tile is connected to the remainder of the seed: 1) the origin tile is connected to the remainder of the seed by its north edge only, 2) the origin tile is connected to the remainder of the seed by its east edge only, 3) the origin tile is connected along both its north and east edges.
Cases 0), 1), 2) and 3) are illustrated as $\sigma$, N, E and N+E in Figure~\ref{fig:S4-1_origin_macrotile}, respectively.
The new seed tile of our single-tile seed ($\sigma_0$) is defined by the teal tiles in Figure~\ref{fig:S4-1_origin_macrotile}; these are the tiles which encode the start of core path.
The remaining tiles of the spanning tree are represented by a set of five supertiles (adapted from \cite{griffith2004growing}), shown in Figure~\ref{fig:S4-2_seed_path_macrotiles}.
These five supertiles (and their rotations) allow for the core path to be connected to any supertile in the representation of a seed.
The tiles of the core path are assigned once all supertiles are assigned to the spanning tree.
The first tile of the core path is $\sigma_0$, and to ensure that the core path is a dependence path tiles are added with unique strength $\tau_4$ glues along adjacent edges.
Additionally, each tile in the core path assigns a unique strength $\tau_4 - 1$ glue on the edge facing the exterior of the supertile; this glue is utilized in the creation of the precedence path.


    \begin{figure}
        \centering
        \includegraphics[width=0.65\textwidth]{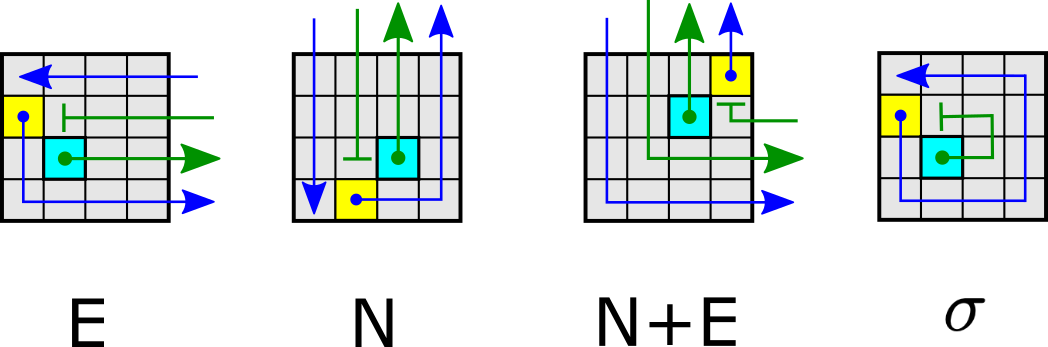}
        \caption{The four possible supertiles which allow for the new single-tile seed system to begin growth. The green arrows indicate the order of tile connections of the core path, and the blue arrows indicate the order of the perimeter path. The teal tile indicates the location of the new seed tile $\sigma_0$, the yellow tile indicates the first tile placed by the perimeter path. The yellow tile is connected by a strength 2 glue to the final tile of the core path. The rightmost origin tile is the case of system being simulated already containing a single-tile seed.}
        \label{fig:S4-1_origin_macrotile}
    \end{figure}
    
    \begin{figure}
        \centering
        \includegraphics[width=0.5\textwidth]{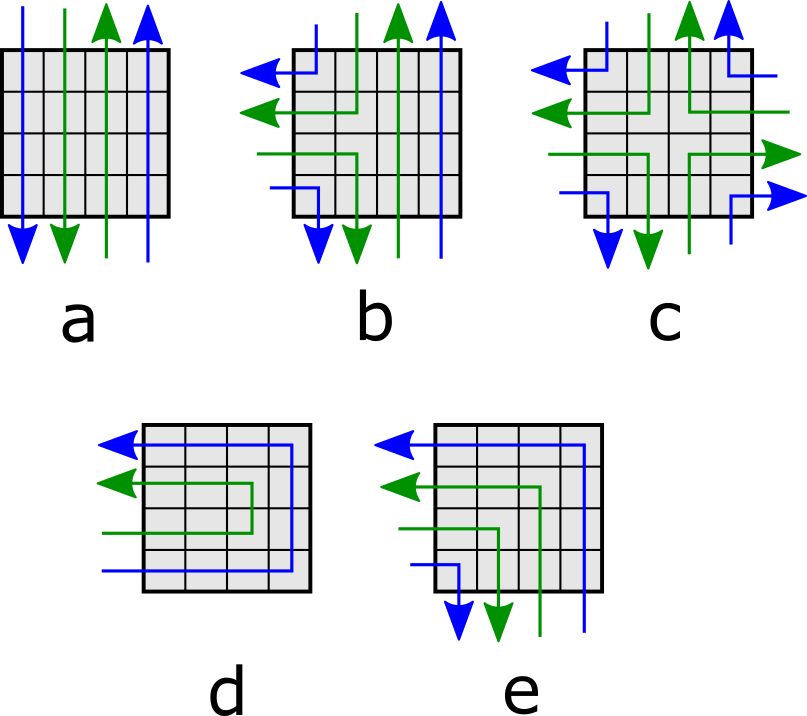}
        \caption{The five possible supertiles which allow for the core and perimeter path to be connected to each tile in the spanning tree of the seed. Green arrows indicate order in which the tiles of the core path are connected to each other. Blue arrows indicate the order in which the tiles of the perimeter path follow the core path. Note that these supertile templates may be rotated to connect adjacent supertiles as necessary.}
        \label{fig:S4-2_seed_path_macrotiles}
    \end{figure} %

The final process in the creation of the seed is creating the tiles of the \emph{perimeter path}.
This is the set of tiles that allow for the encoding of the glues which are present on the exterior of the seed.
The first tile of the perimeter path is the yellow tile in the origin supertiles of Figure~\ref{fig:S4-1_origin_macrotile}, and is connected via a unique strength $\tau$ glue to the last tile of the core path (indicated by the vertical bar).
The remaining tiles of the perimeter path then `follow' the direction of the core path.
For perimeter path tiles which share an edge with tiles of the core path, a unique strength $\tau_4 - 1$ glue is present along that edge for each supertile.
A strength 1 glue of type $p$ is present along the the edge the perimeter tile shared with its predecessor in the perimeter path.
Certain perimeter path tiles may not share an edge with a tile of the core path - this occurs in tiles of types d and e (on the corners) in Figure~\ref{fig:S4-2_seed_path_macrotiles} when the perimeter path takes a 90 degree turn.
In this case, the dependence path tiles which share two edges with other dependence path tiles have one edge with a unique strength $\tau$ glue connecting to the preceding tile in the dependence path.
The second edge contains a strength 1 glue of type $p$, allowing for the next tile in the dependence path to attach to the seed assembly using cooperative growth. 
On the exterior-facing edges of all perimeter path tiles, the glues which represent the glues of the simulation of the tile being simulated by the supertile are added to the tiles of the perimeter path following the encoding and point of competition demonstrated in Figure~\ref{fig:S4-8_scale_4_template}.
The glues exposed by the perimeter path function identically to that of the remaining scale 4 supertile representations, allowing for growth of fuzz into points of cooperation (the process by which will be outlined).

A visualization of the process of seed tile generation is presented in Figure~\ref{fig:S4-4_spanning_tree}, from the initial scale 1 seed to the creation of a perimeter path.

    
To complete the creation of the full tile set, we must create a supertile representation of each tile type in $T$ that can attach outside of the assembly representing the seed $\sigma$.
An interesting item to note here is that, for each tile type $t \in T$ for which one or more tiles of type $t$ appear in the seed assembly $\sigma$, there will be two logically different types of supertiles that represent $t$ in $\calT_4$.
One will be that which is grown via the tiles of $T_\sigma$ (and which contains the core and precedence paths), and one which is a ``standalone'' supertile that represents a copy of a tile of type $t$ outside of the seed.
The result of the transformation to make the $T_{IO}$ (a.k.a. ``inward/outward'') tile set is that it is never possible to regrow any portion of a version of a supertile that is specific to the seed in any location outside of that representing the seed, and to always instead allow supertiles of the standalone version for type $t$ to grow in those locations.


We generate this tile set in the manner similar to that outlined in \cite{Versus} in the sections ``minimal glue sets'' and ``inward-outward glues'' (and note that this technique is now relatively common in tile assembly results).
$T_{IO}$ is an expansion of the tiles of $T$ so that for each $t \in T$, we make a new tile for every subset of glues that share glue labels with the seed whose strengths sum to ``just barely'' $\tau_4$.
Each glue in $T$ is expanded to have N, E, S, W variants corresponding to the the direction they are pointing, also represented by the unit vectors $\{(0,1),\: (1,0),\:(0,-1),\:(-1,0)\}$.
For example, a glue of type $x$ would be represented by types $x_N,\: x_E,\: x_S,\: x_W$.
We say a glue is pointing \emph{inwards} if the the glue type has a direction opposite of the edge which it resides (e.g. $x_N$ on a south edge of a tile or $x_E$ on a west edge).
Alternatively, a glue is pointing \emph{outwards} if the glue type matches the edge which it resides (e.g., $x_N$ on a north edge of a tile or $x_E$ on an east edge).
For each tile $t \in T$, we consider every combination of that tile's glues.
We define a \emph{minimal glue set S} as a set of glues such that, if any glue is removed from S, the set falls below combined strength $\tau_4$.
For each minimal glue set, we generate a tile where the glue labels of the minimal set are such that they are pointing inwards (i.e. they act as ``input'' glues), and the remaining glue labels are pointing outwards (i.e. they act as ``output'' glues).
Finally, the only tiles of $T_\sigma$ which are modified are perimeter path tiles which represent exterior-facing glues. 
These perimeter path tiles are modified such that their glues labels are modified to be outward facing.
This transformation ensures the invariant that any tile which is designed to be part of a supertile that represents a tile in $\sigma$ appears only in the assembly mapping to $\sigma$, and prevents incorrect regrowth of any portion of the seed outside of that, since the seed supertiles are designed to only grow a full copy of the seed, rather than allowing individual supertiles to grow independently.






Next, we create the tiles to form the scale four supertiles representing each tile generated by $T_{IO}$. 
We begin by defining a point of cooperation and/or competition in each supertile which allows for supertiles of $T_{IO}$ to be mapped from assemblies of $\calT_4$ to $\calT$ by $R^*$.
Figure~\ref{fig:S4-8_scale_4_template} shows this cooperation point with the tile labeled `C'.
Given this point of cooperation, for each $t \in T_{IO}$ we provide a single tile with the minimum glue set of inward glue labels at the prescribed strength.
Whichever tile type attaches in location C of a supertile fully determines the tile type in $T$ that the supertile maps to under $R$.
Each neighboring supertile that has grown to represent a tile in $T$ grows a path toward the C locations in the neighboring supertiles that haven't yet filled in.
For those whose edges represent adjacent glues of strength $\tau$, this path, if unblocked, can grow into the neighboring supertile's C location without cooperation to determine the type of tile that that supertile will represent all by itself, which is by design since the $\tau$-strength glue being represented would be able to cause the neighboring tile to bind all by itself.
In this case, the location C acts as a point of ``competition'' where various growth paths may be competing to be the first to arrive and place a tile there.
For those that edges that represent glues of strength less than $\tau$, upon reaching the C location in a neighboring supertile the path stops and exposes a glue of strength less than $\tau_4$, which forces a tile placed in the C location to use cooperation if it is going to bind using that path.
This is analogous to the cooperation that would be required with the glue exposed by the tile which is represented in $\calT$, and in this case location C acts as a point of ``cooperation''.
As soon as the C location of a supertile receives a tile, that supertile represents the corresponding tile in $T$, and the output sides of the tile in location C each have $\tau_4$ strength glues which match with a set of 3 tiles that grow towards the cooperation point of an adjacent supertile.
We note that these paths are legal fuzz, since the supertile with the C location filled in now maps to a tile and those locations are thus (non-diagonally) adjacent to at least one mapping supertile.

\begin{figure}
    \centering
    \includegraphics[width=0.3\textwidth]{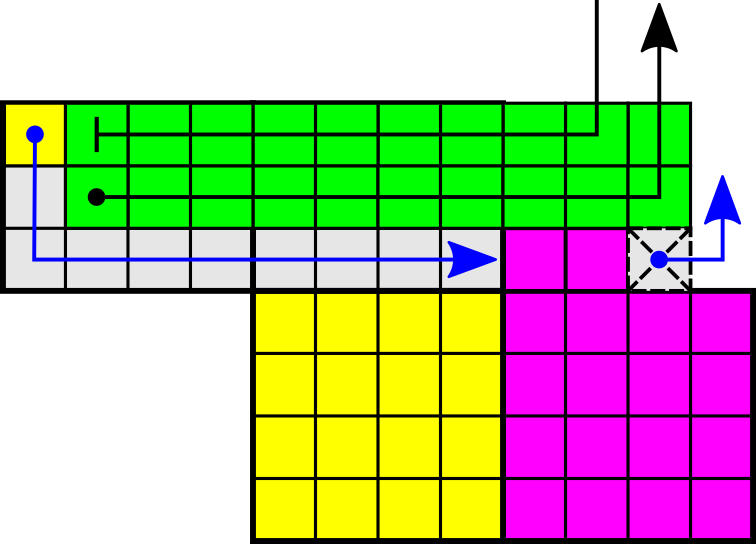}
    \caption{A possible case of the perimeter path growth around the supertiles representing $\sigma$ being blocked by growth of non-seed representing supertiles. The supertile represented by the fuchsia tiles, which itself is allowed to grow from the supertile represented by the yellow tiles, grows north to the point of cooperation. Further growth north of the fuchsia tiles is blocked by the presence of the core path, but the currently placed tiles of the perimeter path can no longer cooperate with the core path to allow for placement of additional perimeter path tiles. The next tile of the perimeter path is indicated by the grey tile with dashed edges - it cooperates with a strength 1 `p' glue provided the fuchsia tiles in the path along with the $\tau_4 - 1$ strength glue provided by the core path tile to continue the growth of the perimeter path.
    }
    \label{fig:S4-6_perim_path_collision}
\end{figure}

At this point, we've described both portions of the tile set of $\calT_4$: $T_\sigma$ which contains the tiles that grow the supertiles representing the seed $\sigma$, and $T_{IO}$ which contains the tiles that form the supertiles that represent all tiles not contained within the seed-representing assembly.
We've shown how the single seed tile in $\calT_4$ initiates growth of first a core path that goes through the center of every supertile of the seed-representing assembly (providing the correctness of the ``seed-first'' portion of the simulation, since these supertiles can then map to the full seed assembly before allowing any growth outside of the seed).
It then initiates the growth of the perimeter path, which circles the entire assembly and effectively ``activates'' each supertile by placing glues on their perimeters that allow them to initiate growth away from the supertile.
The growth away from the supertile is handled by the tile types in $T_{IO}$, which are able to correctly grow the $4 \times 4$ supertiles representing each tile type of $T$ by growing paths representing glues from neighboring supertiles to points of cooperation/competition that correctly handle the selection of valid supertiles to grow into, which in turn then grow the necessary output glue paths to neighbors.
Only in cases where the tile represented by a supertile has a glue in direction $d$ that wasn't used in its initial binding (i.e. an ``output'' glue) does that supertile grow a path into the neighboring supertile.
Since the supertile originating the output path maps to a tile in $T$ at this point, this represents legal fuzz, and since sides which do not have output glues do not grow such fuzz, cheating fuzz is never grown.
Furthermore, the careful design of $T_{IO}$ and the exterior glues of the seed-representing structure using the ``inward/outward'' glue conventions ensures that the tiles of $T_\sigma$ never appear outside of the seed-representing supertiles.

To complete the proof of correctness of the seed-first-simulation by $\calT_4$, the final case left to consider is that where a portion of the assembly representing the seed has completed its perimeter path, allowing growth to proceed away from the seed while the perimeter path hasn't yet completed for another portion of the seed-representing assembly.
This growth away from the seed could potentially eventually crash into a portion of the seed-representing assembly which hasn't yet completed growth of its perimeter path. Potentially this could obstruct the completion of the perimeter path and thus stall the completion of the seed structure.
More specifically, this may occur where the tiles attempting to grow an output path from a completed supertile into the location of a neighboring supertile (which happens to be one in the seed structure that hasn't yet completed its perimeter path), to the cooperation/competition point of that supertile. This may `obstruct' the normal growth of the perimeter path (and is demonstrated in Figure~\ref{fig:S4-6_perim_path_collision}).
To allow for the continued growth of the perimeter path, the tile of the output path growing into the seed supertile's location contains a strength 1 $p$ glue on exposed sides along the direction of the perimeter path. Thus in the case that growth of the perimeter path is `cut off'; east and west growing output tiles have a $p$ glue on their north and south edges, and north and south growing output tiles have a $p$ glue along their east and west edges.
Additionally, depending upon the direction from which the output path grows into the perimeter path, it may block the corners diagonal to the growth path.
The output path growth from each direction will place a tile in the diagonal corners of the supertiles, denoted by the locations highlighted in white in Figure~\ref{fig:S4-8_scale_4_template}.
In this way, it is ensured that a perimeter path can always complete and that growth away from the seed cannot cause incorrect or incomplete growth.
Note that in cases of these collisions, the tiles on both sides of the collision map to the correct tiles in $T$ under $R$, and these cases are representative of situations where there is blocking performed by a seed tile.

We investigate the overall tile complexity of $\calT_4$.
For the tile complexity of $\calT_4$, we define the function $C_4(s,t,g) : \mathbb{N} \rightarrow \mathbb{N}$ where $s = |\sigma|$, $t = |T|$, $g$ is the total number of unique glue/strength combinations in $T$ and $\sigma$.
We consider the worst possible upper bound for each of the items, and thus our tile complexity is a conservative estimate.
This function takes into account the number of tiles utilized by both $T_\sigma$ and $T_{IO}$.

For $T_\sigma$, each tile utilizes all $16s$ tiles to include the perimeter path and core path.
We then consider the fuzz required for each edge of a tile which contains an exterior facing glue.
At maximum, each tile in the seed can have 3 exterior facing glues; we take this as our upper bound.
For each edge with tiles will have up to 2 additional tiles required to grow to the point of cooperation from the perimeter.
This leads to an additional $3 \times s \times 2 = 6s$ tiles.

Let us consider the tiles of $T_{IO}$.
For output glues, fuzz can be shared between different tiles; it is the tile at the point of cooperation/competition which causes the output to be differentiated.
Glue has an input and output variant, and we assume this is present for all 4 sides.
4 tiles of fuzz are required per side to grow to adjacent points of cooperation/competition, leading to $2 \times 4 \times 4 \times g = 16g$.
Finally, we consider the $T_{IO}$ expansion.
From \cite{Versus}, the size of the minimal glue set is at most 6 (4 choose 2) for a tile at maximum, thus we require at most 6 tiles for each tile in $t$ for our expansion.
We additionally include tiles of the seed in this count - leading to an additional $6t + 6s$ tiles.

Thus, the conservative estimate for the tile complexity is $C_4 = 16s + 6s + 16g + 6t +6s = 28s + 16g + 6t$; as such, $T_4 \leq 28s + 16g + 6t$.

Thus, $\calT_4$ correctly seed-first-simulates an arbitrary aTAM system $\calT$ at scale factor 4 without making use of cheating fuzz using $T_4 \leq 28s + 16g + 6t$ tiles, and so Theorem~\ref{thm:scale4-sim} is proven.
\end{proof}

\subsection{Proof of Theorem~\ref{thm:scale3-sim}}\label{sec:scale3-sim-proof}


Restatement of Theorem \ref{thm:scale3-sim}:
Given an arbitrary aTAM system $\mathcal{T} = (T,\sigma,\tau)$, there exists an aTAM system $\mathcal{T}_3 = (T_3, \sigma_0, \tau_3= \max(2, \tau))$ which seed-first-simulates $\mathcal{T}$ at scale factor 3 utilizing cheating fuzz, where $|\sigma_0| = 1$ and $|T_3| \leq 20s + 16g + 6t$ given that $s = |\sigma|$, $t = |T|$, and $g$ is the total number of unique glue/strength combinations in $T$ and $\sigma$.

\begin{proof}
We prove Theorem~\ref{thm:scale3-sim} by construction.
Let $\mathcal{T} = (T,\sigma,\tau)$ be an arbitrary aTAM system.  
We generate the tile system $\mathcal{T}_3 = (T_3, \sigma_0, \tau_3= \max(2, \tau))$ taking as input tileset $T$, seed $\sigma$ and temperature $\tau$.
The first step is to generate the tileset $T_\sigma \subset T_3$ which allows for growth of the seed assembly.
We develop a template for all scale 3 supertiles which allows for the creation of tiles which allow for seed-first simulation, shown in Figure~\ref{fig:S3-14_core_path}.
We show by induction that utilizing this template (without specifying exact tile locations at this point),  

\begin{figure}
    \centering
    \includegraphics[width=0.7\textwidth]{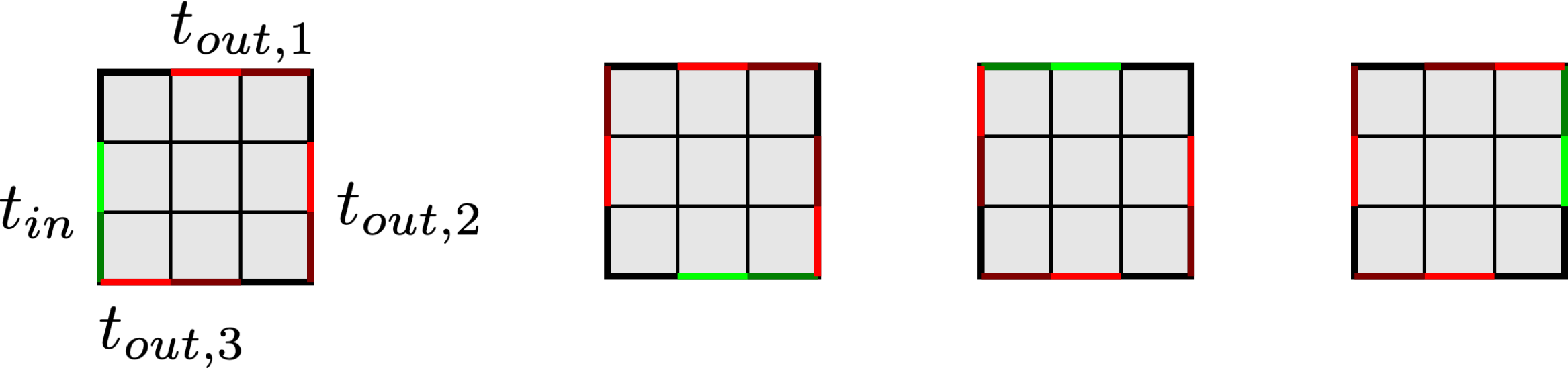}
    \caption{The basic outline of a scale 3 supertile $t$. This general tile makes no assumption on the path which is utilized to represent the tiles of the dependence path within the supertile. The green edge represents the edges shared with the neighbor supertile $t_{in}$ which contains tiles of the dependence path with an index smaller than any tile in $t$ (i.e., assembled earlier than $t$). The red edges represent the edges shared with up to 3 neighbor supertiles $t_{out,1},\:t_{out,2},\:t_{out,3}$ such that there exists tiles of $t$ in the dependence path with indices smaller than any tile in $t_{out,1},\:t_{out,2},\:t_{out,3}$. Two edges are taken, as this allows for the dependence path to both extend and return through the edges shared with $t_{in}$. The three tiles on the right demonstrate the rotations of this tile.}
    \label{fig:S3-14_core_path}
\end{figure}

\begin{lemma}\label{lem:S3-supertile-template-connectivity}
Given a finite scale 1 assembly, $\alpha_1$, we can generate an arbitrary finite scale 3 assembly, $\alpha_3$, such that a dependence path $l$ exists which visits each supertile and returns to the supertile which contains the tile.
\end{lemma}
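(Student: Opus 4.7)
The plan is to prove the lemma by induction on $s = |\dom \alpha_1|$, instantiating the path $l$ as a depth-first traversal of a spanning tree of the scale-$1$ binding graph. Concretely, fix a spanning tree $\mathcal{G}$ of the binding graph of $\alpha_1$ rooted at the origin tile (the westernmost tile in the southernmost row). At the scale-$3$ level I replace each vertex of $\mathcal{G}$ with a supertile that is an (appropriately rotated) copy of the template in Figure~\ref{fig:S3-14_core_path}, oriented so that its green ``in'' edge faces the supertile of its parent in $\mathcal{G}$ and its red ``out'' edges face the supertiles of its children. Because the in-edge spans two cells of the shared $3$-tile-wide boundary, the dependence path can \emph{both} enter a supertile and later exit it back to the parent, while the remaining cell of that boundary stays untouched; this is precisely what is needed to turn a DFS walk of $\mathcal{G}$ into a single dependence path.

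\textbf{Base case.} When $s = 1$, $\alpha_1$ is a single tile, so $\alpha_3$ consists of a single scale-$3$ supertile containing the seed tile $\sigma_0$. The path $l$ traverses the tiles of this one supertile in the order prescribed by the template and trivially ``returns'' to the supertile containing the seed.

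\textbf{Inductive step.} Assume the lemma for all scale-$1$ assemblies of size at most $n$. Given $\alpha_1$ with $n+1$ tiles, choose a leaf $v$ of $\mathcal{G}$ (which exists since $\mathcal{G}$ is a finite tree with at least two vertices) and set $\alpha_1^{-} = \alpha_1 \setminus \{v\}$. By the induction hypothesis we obtain a scale-$3$ assembly $\alpha_3^{-}$ with a dependence path $l^{-}$ that visits every supertile and returns. Let $u$ be the parent of $v$ in $\mathcal{G}$. Since $u$ has at most four scale-$1$ neighbors and at most one of them is the parent of $u$, the supertile for $u$ in $\alpha_3^{-}$ has an unused red out-edge directed at the location where $v$'s supertile will sit. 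I extend $l^{-}$ by diverting it at $u$'s supertile out through that red edge, routing it through an appropriately rotated copy of a Figure~\ref{fig:S3-14_core_path} template placed at $v$ (with the green edge facing $u$ and no out-edges, since $v$ is a leaf), and returning it through the second cell of the shared boundary back into $u$'s supertile. The resulting walk is still a dependence path because every new tile is placed with a $\tau_3$-strength binding to a tile already placed earlier in the sequence, and it visits every supertile of $\alpha_3 := \alpha_3^{-} \cup \{v$'s supertile$\}$ and returns to the origin supertile.

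\textbf{Main obstacle.} The substantive content of the argument is verifying that Figure~\ref{fig:S3-14_core_path}'s template, up to the four rotations shown, actually covers every combination of (exactly one) in-edge with (any subset of the remaining three) out-edges inside a $3 \times 3$ grid, and that in each such case the interior cells can be ordered so that consecutive cells are edge-adjacent and each cell's bond to its predecessor lies along an already-placed edge. This is the step I expect to require the most careful case analysis, and it is where the specific routing of the red ``out'' branches through the two free cells of the green ``in'' boundary really matters; the remainder of the construction is a straightforward spanning-tree-plus-DFS template instantiation.
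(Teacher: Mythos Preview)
Your proposal is correct and follows essentially the same approach as the paper: both arguments build a spanning tree of the scale-$1$ binding graph rooted at the origin and orient a copy of the Figure~\ref{fig:S3-14_core_path} template at each vertex so that the green in-edge faces the parent, yielding a DFS-style dependence path that enters and returns through that edge. The only cosmetic difference is that the paper presents the construction as an iterative build-up from the origin (add neighbors, rotate the template so $t_{in}$ faces the already-placed tile, repeat), whereas you phrase it as a leaf-removal induction; your version is somewhat more explicit about the DFS return mechanism and about the verification burden on the template, but the underlying idea is identical.

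One small point worth tightening in your inductive step: when you assert that ``the supertile for $u$ in $\alpha_3^{-}$ has an unused red out-edge directed at the location where $v$'s supertile will sit,'' this relies on the generic template of Figure~\ref{fig:S3-14_core_path} having out-edges on all three non-in sides regardless of how many children $u$ actually has in $\mathcal{G}|_{\alpha_1^-}$. That is indeed how the template is set up, but since the later concrete supertiles of Figure~\ref{fig:S3-2_seed_path_macrotiles} are child-count-specific, it would be cleaner to fix the full spanning tree $\mathcal{G}$ up front (as you do) and let the induction hypothesis refer to that fixed tree rather than to a freshly chosen one for $\alpha_1^-$; otherwise the template at $u$ may need to be swapped when $v$ is reattached.
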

\begin{proof}
We prove by induction.
Let us consider the binding graph of $\alpha_1$.
The base case is the westernmost tile of the southernmost row; we assign the supertile of Figure~\ref{fig:S3-14_core_path} to this location.
Due to the location of the tile, no other tile will be present along the location $t_{in}$; as such, this contains the minimal value of the dependence path.
We call this tile the `origin' tile.
For this tile, we investigate the tile locations adjacent to the origin tile in $\alpha_1$.
For each location of $t_{out,i}$ which contains a tile where $i \in \{1,2,3\}$, we rotate the supertile of Figure~\ref{fig:S3-14_core_path} as necessary such that $t_{in}$ matches up with the edge of $t_{out,i}$
Once all supertiles adjacent to the origin (which is at most 2) have been added, we select the an arbitrary supertile which has neighbor tiles adjacent to it in $\alpha_1$ which are not part of $\alpha_3$. 
This repeats until all locations of $\alpha_1$ have an associated supertile in $\alpha_3$.
At termination, the resulting $\alpha_3$ then contains a supertile in each location of $\alpha_1$ with a dependence path $l$ which visits each supertile and begins and ends the origin.  
\end{proof}

With Lemma~\ref{lem:S3-supertile-template-connectivity} we have a template which to define our supertiles, which will guarantee a dependence path through an assembly $\alpha_3$.
In addition, we define the single tiles which provide input and output to the system by specific edges of Figure~\ref{fig:S3-14_core_path}.
The light green edge shares a glue from the precedent path of the prior supertile adjacent to the current supertile.
Light red edges are those where the precedent path exits the current tile.
Dark red edges are those where the precedent path re-enters the current tile.
Finally, the dark green edge is that which the precedent path returns to the prior supertile.
As such, we then are restricted for the precedent path traveling between specific edges; tiles adjacent to light green can only have a precedent path which leads to a tile adjacent to dark green or light red.
Tiles adjacent to light red can only have a precedent path which leads to a tile adjacent to dark red or dark green.

Using the prior restrictions, the process of seed tile creation begins by defining the origin supertile.
As in Lemma~\ref{lem:S3-supertile-template-connectivity}, we define the origin supertile as the westernmost tile in the southernmost row of the seed.
If the system $\calT$ already contains a single-tile seed, we utilize a specific origin tile; this is case 0).
Due to the location of the origin supertile, only 3 possible cases exist for how the origin tile is connected to the remainder of the seed: 1) the origin tile is connected to the remainder of the seed by its north edge only, 2) the origin tile is connected to the remainder of the seed by its east edge only, 3) the origin tile is connected along both its north and east edges.
Cases 0), 1), 2) and 3) are illustrated as $\sigma$, N, E and N+E in Figure~\ref{fig:S3-8_origin_macrotile}, respectively.

\begin{figure}
    \centering
    \includegraphics[width=0.4\textwidth]{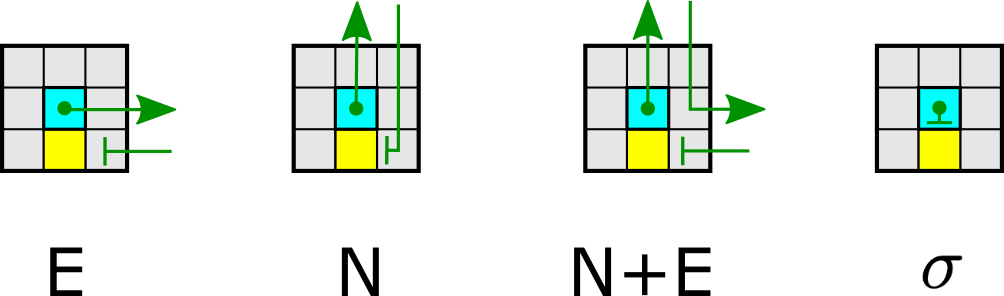}
    \caption{The 4 possible origin supertiles of westernmost tile of the southernmost row of the seed. `N+E' is the case where the origin supertile is connected to two neighbors along both its North and East edges, whereas `N' and `E' are connected to only one neighbor along their North or East edge, respectively. $\sigma$ is utilized in the case that the system being simulated is already singly seeded. The teal tile represents the location of the new single tile seed. Green bars indicate location of the final tile of the precedent path. The yellow tile represents the location of the first tile of the precedent path.}
    \label{fig:S3-8_origin_macrotile}
\end{figure}

As compared to scale 4, the number of general supertiles is increased in scale 3.
Simply applying rotation to a single case of a supertile with 3 neighbors (similar to supertile b in Figure~\ref{fig:S4-2_seed_path_macrotiles}) would result in the dependence path unable to be connected between neighbor tiles.
An additional difference is that instead of the dependence path taking the the clockwise most neighbor, it utilizes the counter-clockwise most neighbor.
This results in a total of 8 supertiles, as shown in Figure~\ref{fig:S3-2_seed_path_macrotiles}.
We note that these 
These tiles have an additional invariant - any possible perimeter path which utilizes fuzz outside the borders of the scale 3 supertile does not utilize diagonal fuzz.
While each of the tile types may need cheating fuzz, we must guarantee that fuzz is not diagonal.
We note that diagonal fuzz is only reachable from a subset of the supertiles in Figure~\ref{fig:S3-2_seed_path_macrotiles} which may be present on convex corners of $\sigma$: these are a, c and d.
The perimeter path for this scale 3 construction also follow the dependence path, similar to scale 4.
By inspection and comparing the locations of diagonal fuzz shown as red locations in Figure~\ref{fig:S3-10_cheating_fuzz_locations}, no locations which contain cheating fuzz overlap with diagonal fuzz.
Any fuzz which may be required by the perimeter path to be in a diagonal location is present on corners where two tiles are adjacent - as such, this is a location of acceptable cheating fuzz.

\begin{lemma}
The supertiles of Figure~\ref{fig:S3-2_seed_path_macrotiles} generate a dependence path which visits each supertile and returns to the supertile which contains the origin. 
\end{lemma}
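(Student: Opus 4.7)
The plan is to prove this by structural induction on the seed $\sigma$, mirroring the argument used in Lemma~\ref{lem:S3-supertile-template-connectivity}, but adapted to accommodate the tighter routing constraints at scale 3 and the need for the path to return to the origin. For the base case, I would take the origin supertile (the westernmost tile of the southernmost row) and assign it one of the four origin templates from Figure~\ref{fig:S3-8_origin_macrotile} according to whether it has a neighbor to the north, east, both, or neither (the singly-seeded case). By construction, each of these origin templates both starts and ends the dependence path at the seed tile location, with outgoing/return slots matching the appropriate light red / dark red edges described in the template of Figure~\ref{fig:S3-14_core_path}.

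For the inductive step, the hypothesis is that some assigned subassembly $\alpha' \subseteq \sigma$ has been realized with a partial dependence path that begins at the origin, visits each supertile in $\alpha'$, and returns to the origin. I would pick a tile $t \in \sigma \setminus \dom{\alpha'}$ that is adjacent to some already-assigned supertile $t_{\text{in}}$ (such a $t$ exists because $\sigma$ is connected), and use the edge shared between $t$ and $t_{\text{in}}$ as the ``incoming'' edge in the template. The remaining three edges of $t$ are then categorized by whether they are adjacent to a tile in $\sigma$ (requiring outgoing/return path plumbing) or not (requiring no path tiles). Since $t$ can have between $0$ and $3$ other neighbors in $\sigma$, and for each count the counter-clockwise-most-neighbor rule fixes which edge is the first to exit, the eight templates of Figure~\ref{fig:S3-2_seed_path_macrotiles} (together with their rotations) exhaust all configurations. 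I would then verify, by inspection of each template, that the template's light red and dark red edges match the templates assigned to its out-neighbors and that concatenating the new template with the existing partial path yields a valid extended dependence path that still terminates at the origin (in DFS fashion, one visits a child and comes back before moving on to the next child).

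The main obstacle is showing that eight templates really do suffice, as opposed to just the five used at scale 4. The reason more are needed is twofold: the smaller interior of a $3\times 3$ supertile leaves no room to wrap a single path around two ``branch'' edges in the same way a $4\times 4$ supertile does, and the construction must avoid diagonal fuzz on any exterior corner that the perimeter path will later traverse. I would therefore accompany the case analysis with a check that, for each of the eight templates, (i) the light red outgoing edge corresponds to the counter-clockwise-most out-neighbor relative to the incoming edge, so that the recursion order is well-defined and consistent across adjacent supertiles, and (ii) the perimeter path that will later follow the dependence path never requires a fuzz tile in a location that is only diagonally adjacent to an already-present supertile, appealing to the earlier observation that the only templates which could induce diagonal fuzz are the convex-corner templates (a, c, d) and that their designated fuzz positions coincide with the ``acceptable cheating fuzz'' regions of Figure~\ref{fig:S3-10_cheating_fuzz_locations}.

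Finally, I would close the induction by observing that when no more tiles remain to be added, the constructed dependence path is a closed walk from the origin that visits every supertile and returns, because each inductive extension is a DFS-style detour: the new supertile's template enters from $t_{\text{in}}$, performs all necessary outgoing detours through its out-neighbors (each itself a valid subpath by induction), and then hands control back to $t_{\text{in}}$ along the dark green return edge. This gives the desired property and completes the lemma.
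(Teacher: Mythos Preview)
Your approach is correct in spirit, but it is far more elaborate than the paper's. The paper's entire proof is a single observation: the eight concrete supertile templates in Figure~\ref{fig:S3-2_seed_path_macrotiles} conform to the abstract edge conventions of the generic template in Figure~\ref{fig:S3-14_core_path} (the light green / dark green / light red / dark red entry and exit edges), and therefore Lemma~\ref{lem:S3-supertile-template-connectivity} applies to them directly. In other words, the inductive argument you propose has already been carried out once, abstractly, in Lemma~\ref{lem:S3-supertile-template-connectivity}; this lemma is just the instantiation step, and the only content is checking that each of the eight templates routes its internal path between the prescribed edge slots. Your proposal instead re-runs the whole structural induction from scratch, which is valid but redundant. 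You also fold in verification of diagonal-fuzz avoidance and perimeter-path compatibility; those checks are handled separately in the surrounding text and are not part of what this lemma asserts, so including them here overloads the proof. The advantage of your longer route is that it is self-contained and makes explicit why eight templates (rather than five) are needed at scale~3; the advantage of the paper's route is brevity and a clean separation between the abstract connectivity argument and the concrete template inventory.
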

\begin{proof}
The tiles of Figure~\ref{fig:S3-2_seed_path_macrotiles} obey the edge restrictions of the template scale 3 tile; as such, Lemma~\ref{lem:S3-supertile-template-connectivity} follows for these tiles.
\end{proof}
 

\begin{figure}
    \centering
    \includegraphics[width=0.85\textwidth]{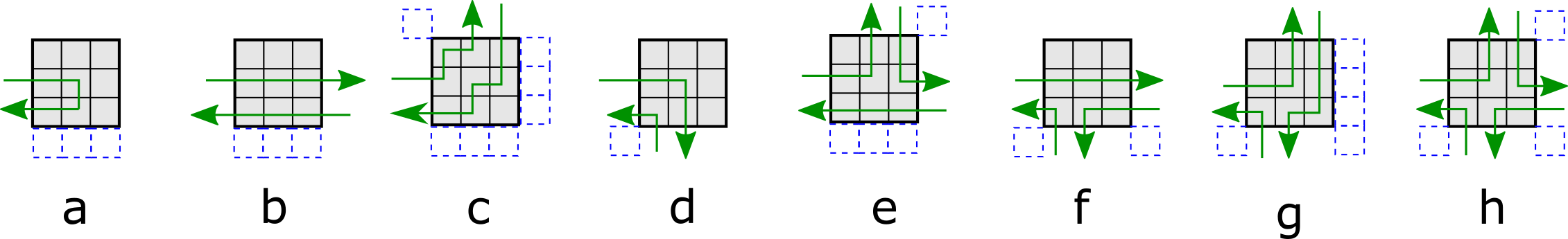}
    \caption{The supertile types which allow for the dependence path to follow the depth-first spanning tree depending upon the side of the origin tile which the tile can be connected through. Note, we can rotate these to allow for all possible path through the seed assembly. Blue dashed boxes indicate the locations of cheating fuzz which may be needed for the perimeter path.}
    \label{fig:S3-2_seed_path_macrotiles}
\end{figure}

The application of general set of scale 3 supertiles to the spanning tree of an arbitrary seed has the potential to isolate cavities from the perimeter path.
For full seed-first-simulation (and thus, also shape-simulation) it may be required for tiles to attach to glues inside of cavities.
Figure~\ref{fig:S3-6_cavity_enclosure} demonstrates a case where a seed which is assigned supertiles from Figure~\ref{fig:S3-2_seed_path_macrotiles} and a cavity is prevented from being accessed by the perimeter path.
To resolve this issue, we require two solutions: 1) developing a set of tiles which we guarantee allow for the perimeter path to access a cavity regardless of its location, and 2) correctly placing the tiles in the seed.
At a high level, we choose the northernmost corner of the westernmost edge of the cavity and draw a line in the $+y$ direction until it reaches either another cavity or the perimeter of the seed.
All the edges which this line passes through we remove.
We demonstrate his can be done safely in the case of \emph{cyclic cavities} - cavities for which there exists a cycle between the vertices of the binding graph of the seed which contain only the vertices within a Chebyshev distance of 1 from any location within the cavity.
In the case of \emph{non-cyclic cavities}, cavities for which there does not exist a cycle between the vertices of the binding graph of the seed which contain only the vertices within a Chebyshev distance of 1 from any location within the cavity, we demonstrate that they can be recursively combined until either they form a cyclic cavity or they are connected to the perimeter path.

\begin{figure}
    \centering
    \includegraphics[width=0.7\textwidth]{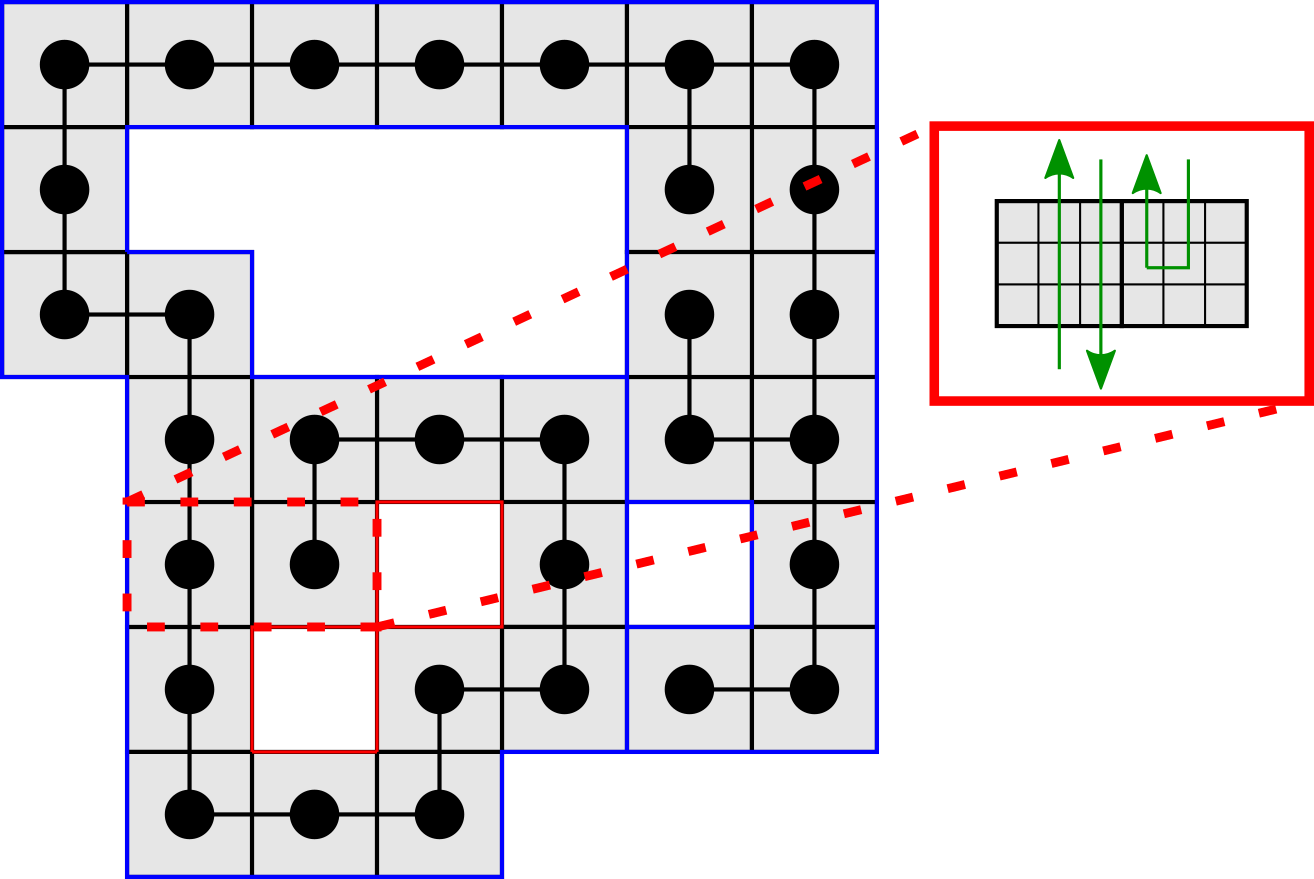}
    \caption{A valid assignment of supertiles which leads to cavities which are not reachable by the perimeter path. The red line indicate the cavities which are connected together but unable to be accessed from the perimeter path on the exterior of the seed. Due to the orientations of tiles, the dependence paths are adjacent to one another and block any possible perimeter path from reaching the internal cavities.}
    \label{fig:S3-6_cavity_enclosure}
\end{figure}

We first provide a description of cyclic and non-cyclic cavities, with the smallest possible cyclic cavity (a single tile missing from the center of a $3 \times 3$ square) demonstrated in Figure~\ref{fig:S3-4_cavity_types}.
We define \emph{regions} as (potentially infinite) connected subsets of $\mathbb{Z}^2$.
The regions split into 3 categories - cavities, the assembly, and \emph{free space} (the remaining plane).
When taking the complement of the domain of an assembly, there exists the infinite connected subset of $\mathbb{Z}^2$ which is the free space.
Additionally, for shapes which contain cavities (sometimes called holes in the literature), we will find finite connected subsets - these are the cavities within an assembly.
Given that a cycle is formed around cyclic cavities, we can safely remove a single edge from the cycle without causing a tile in that cycle to become disconnected.
Non-cyclic corners either separate two cavities in scale 1, or separate a cavity from the exterior of the shape.
Figure~\ref{fig:S3-5_non_cyclic_corners} demonstrates that in the worst case, two tiles which are at the non-cyclic corners of a scale 3 supertile will have a 1 tile wide path.
This path allows for tiles to be generated which connect the two regions - as such, we then can assign a new cavity from the union of the two connected subsets of $\mathbb{Z}^2$.
Either the two regions are a both non-cyclic cavities and can be combined, or the non-cyclic cavity is connected to the perimeter (and thus, free space).
In the latter case that the non-cyclic cavity is connected to the perimeter, we are done; no further modifications need to be made.
In the former case, the two non-cyclic cavities are joined and can the resulting cavity may remain a non-cyclic cavity.
We continue the process of joining the non-cyclic cavity together with its neighbor region.
Otherwise, the two non-cyclic regions may become a cyclic region themselves, at which point we halt.
Once all non-cyclic cavities have been combined and are either part of cyclic cavities or connected to the perimeter, the binding graph can then be modified.

\begin{figure}
    \centering
    \includegraphics[width=0.4\textwidth]{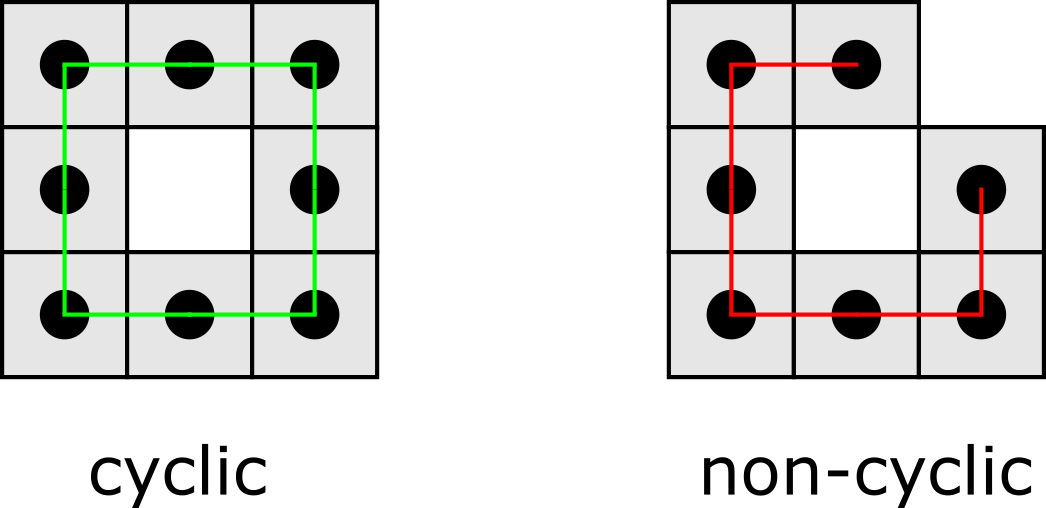}
    \caption{The two types of cavities possible in assemblies.}
    \label{fig:S3-4_cavity_types}
\end{figure}

\begin{figure}
    \centering
    \includegraphics[width=0.4\textwidth]{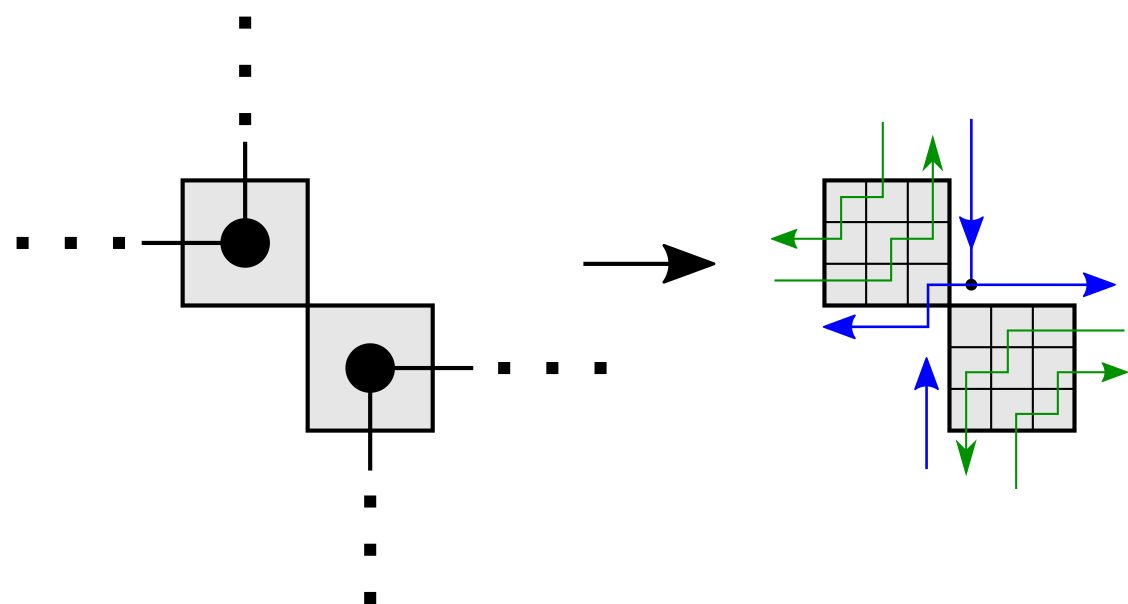}
    \caption{For a non-cyclic cavity, we demonstrate how even in the worst case of corner tiles connected to two neighbors themselves in a non-cyclic corner that a 1 tile wide path exists between the non-cyclic cavity and the adjacent space. This applies to all possible rotations of non-cyclic cavities.}
    \label{fig:S3-5_non_cyclic_corners}
\end{figure}

After the prior steps have finished, all cavities are now cyclic cavities.
This provides us with the property that a single edge can be removed from all cavities, and the vertices which are contained in the cycle remain connected.
For all remaining cyclic cavities, we remove the edge between the vertices surrounding the cyclic cavity adjacent to the northern most vertex of the western most column of the cavity.
From this vertex, we can then define the two vertices which will have their edge removed; for a northwest vertex location of $(i,j)$ we add the vectors $(-1,1)$ and $(0,1)$ to identify the two vertices to separate.
This provides us with the tuple $((x_0, y_0), (x_1, y_1))$ which corresponds to the coordinates of the two vertices.
Of the two vertices, we take $x_0$ as the $x$ coordinate of the two vertices with the smallest value (i.e., the western-most vertex).
We then increment $y_0$ and $y_1$ by 1 successively, removing all edges with the coordinates until either vertex is in the location of another cavity or free space.
When edges are removed from the binding graph such that two cavities are connected, similar to the combination of non-cyclic cavities we combine the two cavities.
We track the values of the vertices defined by $(x_0,y_0)$ and store these values for modifying the supertiles in these locations, we additionally include the vertex at $(-1+i, j)$; this guarantees the 1-tile wide path will reach into the cavity.
We call this tracked set of vertices $M$.
This process is repeated for all cavities until the perimeter is connected to all cavities.
Upon connecting all cavities to the perimeter, we then assign supertiles of Figure~\ref{fig:S3-2_seed_path_macrotiles} to the modified binding graph demonstrated in Figure~\ref{fig:S3-11_combining_cavities}.

\begin{figure}
    \centering
    \includegraphics[width=.8\textwidth]{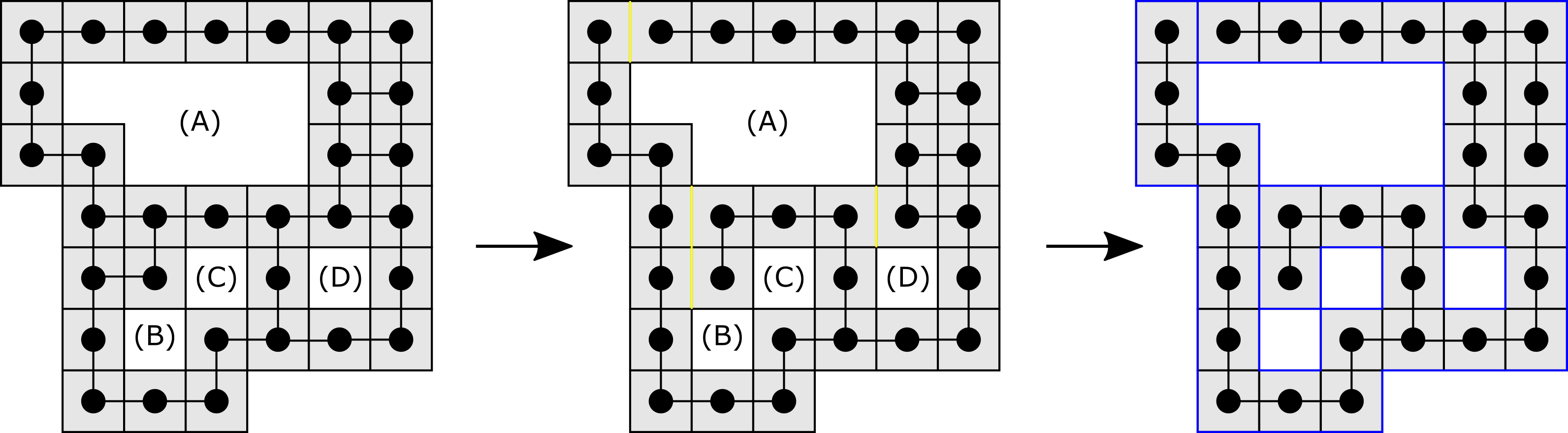}
    \caption{(left) The initial binding graph of a seed which contain cavities. We note that cavities (B) and (C) are, by themselves, non-cyclic cavities. These can be combined to form a larger cyclic cavity. (middle) After being combined, from the northernmost corner of the westernmost edge of each cavity we remove edges of the connection graph along the yellow line, guaranteeing access of the perimeter path to each cavity. We note that both cyclic cavities (B)(C) and (D) are connected to the cavity of (A), which is then connected to the perimeter. (right) A spanning tree is formed from the assignment of the scale 3 supertiles in Figure~\ref{fig:S3-2_seed_path_macrotiles}.}
    \label{fig:S3-11_combining_cavities}    
\end{figure}

\begin{figure}
    \centering
    \includegraphics[width=0.4\textwidth]{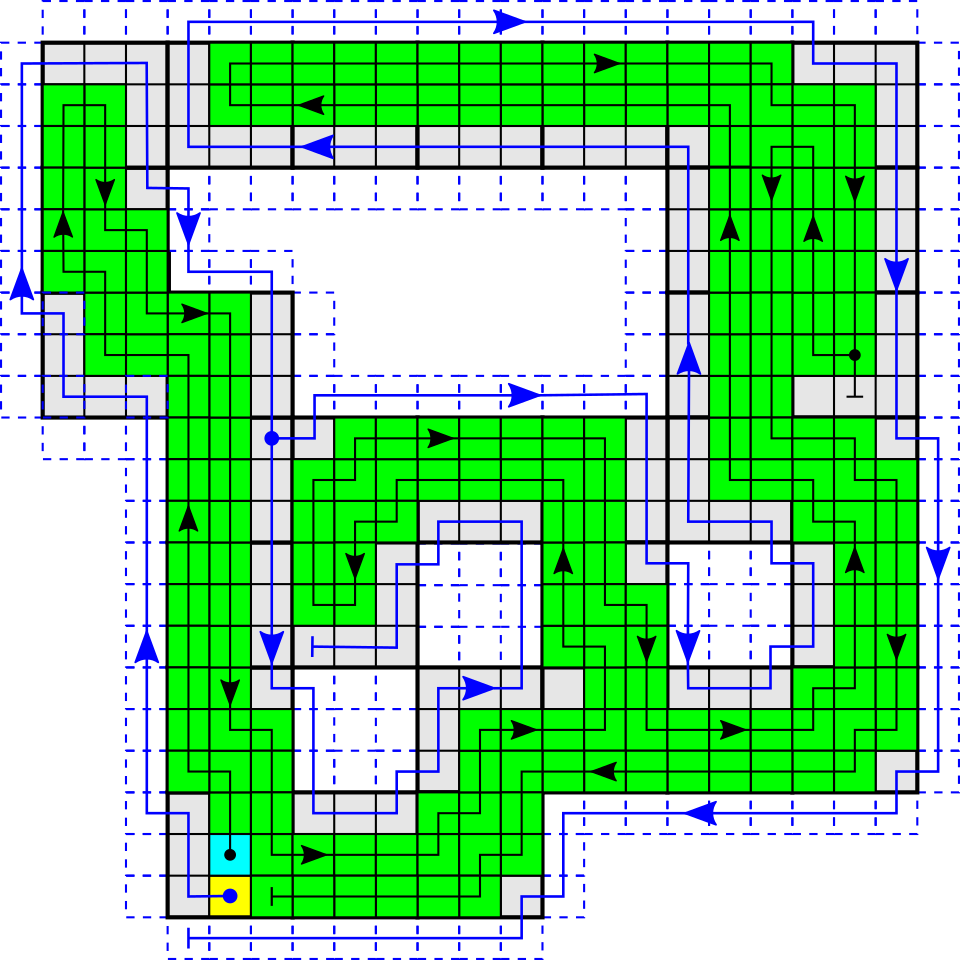}
    \caption{The seed resulting from the addition of supertile templates assigned to each vertex of the modified binding graph from Figure~\ref{fig:S3-11_combining_cavities}. The teal square represents the new single-tile seed for $\calT_3$. Green squares represent tiles of the dependence path, with the black arrows demonstrating the order of tile addition of the dependence path. Blue dashed squares represent locations of cheating fuzz available for the perimeter path. The blue arrow indicates the growth of the tiles of the perimeter path, barring a collision occurring.}
    \label{fig:S3-12_scale_3_example}
\end{figure}

With the modifications of the binding graph, we additionally must assign new supertiles which allow for the perimeter path to take advantage of the new binding graph structure.
The deletion of the edges from each cyclic cavity provides opportunity for a 1 tile wide path to exist in two manners.
Either the dependence path exists in a north-south direction, following the removed edges, or the dependence path enters from the west and terminates at a vertex which was adjacent to the deleted edges.
In the latter case, tile a from Figure~\ref{fig:S3-2_seed_path_macrotiles} is utilized in the presented rotation.
This supertile in its presented state allows for a 1 tile wide path.
If the dependence path follows north-south growth, additional tiles must be utilized which allow for the 1 tile wide path.
We note that for proper dependence path alignment, five new variants of tiles are defined.
The three left supertiles in Figure~\ref{fig:S3-13_n_to_e_conversion_tiles} allow for dependence path utilized by the supertiles in Figure~\ref{fig:S3-2_seed_path_macrotiles} to be shifted; in turn, this facilitates the usage of the right two supertiles which shift the dependence path to the left 2/3 tiles and leaving the necessary 1 tile wide gap for the perimeter path.
Tiles from each set of vertices which describe the 1 tile wide path, $M$, are replaced by the supertiles in Figure~\ref{fig:S3-13_n_to_e_conversion_tiles}.

\begin{figure}
    \centering
    \includegraphics[width=0.5\textwidth]{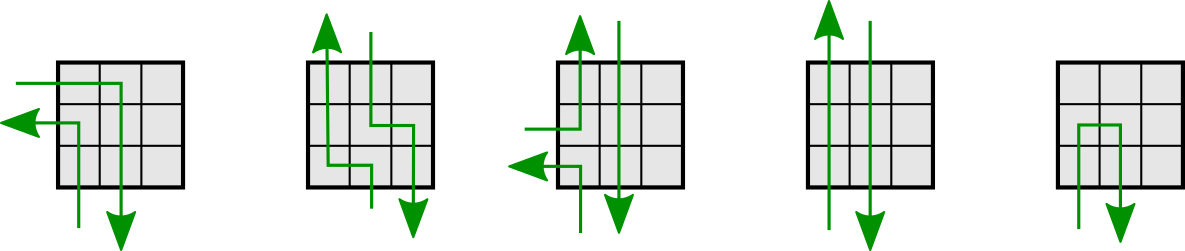}
    \caption{The five cavity connector supertiles which allow for the perimeter path to reach all cavities. The left three tile types allow for the dependence path of the cavity connector tiles to be joined to the dependence path of the remaining seed. The right two tiles shift the dependence path to the west side of the supertiles, guaranteeing that a 1 tile wide path exists to the cavity in question. We note that the tile `a' from Figure~\ref{fig:S3-2_seed_path_macrotiles} may also be utilized in the cavity connector as shown; it also provides a 1 tile wide path for the perimeter path to utilize.}
    \label{fig:S3-13_n_to_e_conversion_tiles}
\end{figure}

The final remaining task for the creation of the seed tiles is assigning the tiles of the dependence path and perimeter path.
As with the scale 4 construction, the dependence path tiles are assigned according to the supertile templates and are assigned unique strength $\tau_3$ glues shared between adjacent tiles of the dependence path.
When the dependence path terminates at the origin supertile, another unique strength $\tau_3$ glue is assigned to the edge between the final tile of the dependence path and the first tile of the perimeter path (yellow tile, Figure~\ref{fig:S3-8_origin_macrotile})
The perimeter path tiles begin by ``following'' the outermost tiles of the dependence path in a clockwise fashion.
The core tiles contain a unique strength $\tau_3 - 1$ glue which is shared with the perimeter path tile adjacent to it, and tiles in the perimeter path exposes a strength 1 glue of type $p$ along the the edge the perimeter tile shared with its predecessor in the perimeter path.
As in the scale 4 simulation, strength $\tau_3$ glues are required for attachment of perimeter path tiles which carry out 90 degree turns.
A key difference between the perimeter paths of scale 3 versus scale 4 is that scale 3 perimeter paths may be required to branch; in particular, this occurs when the perimeter path may need to access cavities which have only a width-1.
This requires that perimeter path tiles may require having two outwards facing $p$ glues; these are denoted by the circle with the surrounding.
Additionally, not all tiles of of the supertile may be immediately defined by either the perimeter path or the dependence path.
This may require small `offshoots' from the dependence path to provide a path to follow; we can see an example of this in Figure~\ref{fig:S3-12_scale_3_example} in the tile type a which is rotated 90 degrees clockwise.
We note that in cases like these, they manifest when type a supertiles from Figure~\ref{fig:S3-2_seed_path_macrotiles} are leafs of a spanning tree adjacent to the perimeter path.


For the remaining simulations of $3 \times 3$ tiles, we define the point of cooperativity/competition as the central tile in a $3 \times 3$ square, as shown in Figure~\ref{fig:S3-7_cooperativity_scale3}.
The process of developing $T_{IO}$ and the modification of $T_\sigma$ to prevent regrowth of the dependence path follows directly from the scale 4 example.
In the case of tiles crashing into the perimeter path, the fuzz growing from the point of cooperation/competition must be able take additional locations, as the perimeter path is guaranteed to be located in any particular location due to the use of cheating fuzz.
These locations have been highlighted in white - tiles which grow into these locations have $p$ glues facing towards the interior of the supertile which the fuzz is growing into.

We investigate the overall tile complexity of $\calT_3$.
For the tile complexity of $\calT_3$, we define the function $C_3(s,t,g) : \mathbb{N} \rightarrow \mathbb{N}$ where $s = |\sigma|$, $t = |T|$ and $g$ is the total number of unique glue/strength combinations in $T$ and $\sigma$.
We consider the worst possible upper bound for each of the items, and thus our tile complexity is a conservative estimate.
This function takes into account the number of tiles utilized by both $T_\sigma$ and $T_{IO}$.

For $T_\sigma$, there exists a variable number of tiles, depending upon the perimeter path taken.
Figure~\ref{fig:S3-2_seed_path_macrotiles} c requires at most 14 tiles for both perimeter path and dependence path - we take that as our high value; $14s$.
We then consider the fuzz required for each edge of a tile which contains an exterior facing glue.
Based upon our assumption high value, the fuzz is already present and reaches the cooperation/competition point.
Comparing with the all other tiles, the most possible fuzz required (Figure~\ref{fig:S3-2_seed_path_macrotiles} a) also requires 14 tiles.
Thus, we have $14s$ tiles required at most for the seed.

Let us consider the tiles of $T_{IO}$.
For output glues, fuzz can be shared between different tiles; it is the tile at the point of cooperation/competition which causes the output to be differentiated.
Glue has an input and output variant, and we assume this is present for all 4 sides.
4 tiles of fuzz are required per side to grow to adjacent points of cooperation/competition, leading to $2 \times 4 \times 4 \times g = 16g$.
Finally, we consider the $T_{IO}$ expansion.
From \cite{Versus}, the size of the minimal glue set is at most 6 (4 choose 2) for a tile at maximum, thus we require at most 6 tiles for each tile in $t$ for our expansion.
We additionally include tiles of the seed in this count - leading to an additional $6t + 6s$ tiles.

The conservative estimate for the tile complexity is $C_3 = 14s + 16g + 6t + 6s = 20s + 16g + 6t$, and it follows that $|T_3| \leq 20s + 16g + 6t$.

Thus, $\calT_3$ correctly seed-first-simulates an arbitrary aTAM system $\calT$ at scale factor 3 requiring the use of cheating fuzz using at most $|T_3| \leq 20s + 16g + 6t$ tile types, and so Theorem~\ref{thm:scale3-sim} is proven.
\end{proof}



\subsection{Technical details of the optimal tile complexity for seed-first-simulation}\label{sec:scaled-sim-append}

In this section we provide the proof and technical details for Theorem \ref{thm:scaled-simulation}.

First, we define a few structures and auxiliary tile sets that will be useful in our construction.

In \cite{IUSA}, a tile set $U$ was given that is intrinsically universal for the aTAM. This means that, given an arbitrary aTAM system $\calT = (T,\sigma,\tau)$, there is a function that specifies how the tiles of $U$ can be arranged to form a seed structure, say $\sigma_\calT$, and another function that returns a representation function $R$ mapping macrotiles of over $U$ to tiles of $T$, so that the resulting system correctly simulates $\calT$ (at temperature $2$, regardless of the value of $\tau$ in $\calT$) using $R$. In that construction, the macrotiles (a.k.a. supertiles) use a well-defined format for their sides, called \emph{supersides}. The format used in \cite{IUSA} is shown on the top in Figure \ref{fig:supersides}. For our construction, we will slightly modify the encoding of supersides by adding ``spacer'' sections, which are potentially very large sections that are simply passed over by the gadgets of the construction and used merely to allow for correct alignment and a larger scale factor that is driven by the space required to simulate a relatively complex Turing machine. In our construction, we will utilize the superside encoding on the bottom, and it will frequently be split in half into portions we'll refer to as $g_l$ and $g_r$. For a given $\calT$, all segments will have constant encodings across all macrotiles except for the ``glue'' sections which will consist of the binary encoding of the specific glue that each superside is simulating. To correctly utilize the superside encoding to perform simulations following the procedure of \cite{IUSA}, our construction will use a slightly modified version of their tile set $U$, which we will call $U'$. The only differences with $U'$ will be its ability to skip over the (arbitrarily long) spacer sections (effectively just absorbing the larger scale factor imparted by them, and a slight change to the tiles that form the left end of a so-called ``frame'' row (which will be discussed when the tiles for growing the ``activation row'' are explained.

\begin{proof}
We prove Theorem~\ref{thm:scaled-simulation} by construction. Therefore, let $P_\calT$ be a program that outputs $\calT = (T,\sigma,\tau)$, an arbitrary aTAM system.
We will define $\mathcal{S}_\calT = (T_\calT,\sigma_\calT,2)$ such that $\sigma_\calT$ consists of a single tile and $\mathcal{S}_\calT$ seed-first-simulates $\calT$ at some scale factor $c \in \mathbb{Z}^+$ while using $O(\frac{|P_\calT|}{\log{|P_\calT|}})$ tile types.

First, we will use the technique of \cite{AdChGoHu01} to compress $P_\calT$, allowing for the use of an optimal number of tile types.
We note that the binary string $P_\calT$ can be encoded in a higher base, which we'll refer to as $b$.
This allows for the length of the compressed encoding to be reduced to $\frac{|P_\calT|}{\log(b)}$. Tiles can be designed so that from the seed tile a row of $\frac{|P_\calT|}{\log(b)}$ tiles assemble with that compressed encoding on their north sides (requiring $\frac{|P_\calT|}{\log(b)}$ unique tile types).
Following the construction of \cite{AdChGoHu01}, tiles of $O(b)$ types can then assemble to the north of that row to ``decompress'' the string, resulting in the encoding of $P_\calT$ on the northern glues of the tiles of their northern row.
Note that, in order to correctly integrate with the tiles to be described later, these decompression tiles also perform binary counting to count the number of rows that they assemble during the decompression. (This only increases the tile complexity by a constant multiplicative factor, thus the decompression still only requires $O(b)$ tile types.)
The subest of tile types described so far will be referred to as $T_\sigma$ later in this proof.
At this point, the tile complexity is $\frac{|P_\calT|}{\log(b)} + O(b)$.  As discussed in \cite{AdChGoHu01}, the base $b$ can be selected such that this quantity is equal to $O(\frac{|P_\calT|}{\log(|P_\calT))})$. Finally, we note that the length of the shortest program $P_\calT$ which outputs $\calT$ is the Kolmogorov complexity of $\calT$, denoted $K(\calT)$. Thus, this portion of the construction requires $O(\frac{K(\calT)}{\log(K(\calT)})$ tile types. As will be shown, all other tile types used are from a constant tile set, independent of the system $\calT$, and thus the overall tile complexity is $O(\frac{K(\calT)}{\log(K(\calT)})$. This is the information theoretic lower bound for an aTAM tile set representing $\calT$.

We now define the rest of the construction.
Let $\sigma^2$ be a mapping of each point of $\sigma$ to a $2 \times 2$ square (i.e. the points of $\sigma$ at scale factor 2).
Let $H'$ be a Hamiltonian cycle through $\sigma^2$ and consider it to begin and end in the leftmost of the bottom-most $2 \times 2$ square of $\sigma^2$, starting in its top-left and ending in its bottom-left. (See Figure \ref{fig:ham-cycle-encoding} for an example.)

\begin{figure}
    \centering
    \includegraphics[width=1.0\textwidth]{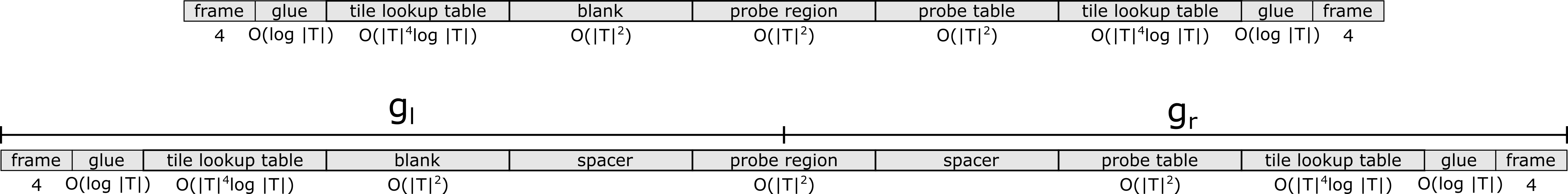}
    \caption{(Top) Encoding of the information presented on a superside in the intrinsically universal construction of \cite{IUSA}. The length of each section is listed below it, with respect to the tile set $T$ that is being simulated, (Bottom) Modified superside encoding for the construction in Theorem \ref{thm:scaled-simulation}. Note the additional ``spacer'' sections (whose sizes will be related to the running time of Turing machine $D$), and the division into halves called ``$g_l$'' and ``$g_r$''.}
    \label{fig:supersides}
\end{figure}

Define Turing machine $D$ that uses a one-way-infinite-to-the-right tape, takes as input program $P_\calT$, and does the following:
\begin{enumerate}
    \item $D$ moves its head across the input from left-to-right then back from right-to-left. (This is merely a technicality to guarantee that the runtime is longer than the output, a fact which will be utilized later.)
    
    \item $D$ runs $P$ to obtain the string encoding the definition of $\calT$, which we will refer to as $\langle \calT \rangle$.
    
    \item $D$ uses $\langle \calT \rangle$ to run the algorithm of \cite{IUSA} and compute the modified encoding of a superside for a macrotile simulating $\calT$ (as shown in Figure \ref{fig:supersides}). In the base encoding, the spacer sections will be a single tile wide and the glue encoding will be of the \emph{null} glue. This encoding will be split into a left half and a right half, which we will call $g_l$ and $g_r$, respectively.

    \item $D$ uses the definition of the seed $\sigma$ from $\langle \calT \rangle$ to compute the set of points contained within $\sigma^2$, which we will refer to as $\langle \sigma^2 \rangle$. (See Figure \ref{fig:ham-cycle-encoding} for an example.)
    
    \item $D$ uses $\langle \sigma^2 \rangle$ to compute the Hamiltonian cycle $H'$. (See Figure \ref{fig:ham-cycle-encoding} for an example.)
    
    \item $D$ creates a string, which we'll call $H$, that contains a section encoding each location on the cycle $H'$. The full string $H$ is a concatenation of one section for each point on the path. The encoding for each point is the following:
    
        \begin{enumerate}
            \item Each contains 3 portions, labeled $F$, $L$, and $R$ for ``forward'', ``left'', and ``right'', respectively.
            
            \item For all sections other than the one for the final point, exactly one of the portions will be marked with either a ``$^*$'' or a ``\#'', indicating that that is the direction of travel to the next point. (For example ``$F^*$'' means the path continues forward in the same direction as from the previous point, ``$L^*$'' means it turns left, and ``$R^*$'' means it turns right.) The difference between the ``$^*$'' or a ``\#'' markings are that ``$^*$'' indicates that the next location is within the same $2 \times 2$ macrotile as the current location, and ``\#'' indicates that the location is in a different, neighboring $2 \times 2$ macrotile.
            
            \item Each of the two remaining un-marked portions will either have (i) the symbol ``\_'' to indicate that its side is an interior side of $\sigma^2$ (i.e. it is adjacent to another block in $\sigma^2$) and thus no information needs to be output to that side, or (ii) if the side is an exterior side of $\sigma^2$ (i.e. it is on the perimeter of $\sigma^2$, whether that is the external perimeter or the perimeter of a cavity enclosed by $\sigma$) one of the strings $g_l$ or $g_r$, with either the \emph{null} glue or the encoding of a glue type in the portion of $g_l$ or $g_r$ labeled ``glue'' in Figure \ref{fig:supersides}. (Figure \ref{fig:ham-cycle-encoding} shows a shorthand encoding of the first 7 locations of the path. Rather than giving a full encoding of each of $g_l$ or $g_r$, it simply notes which is used for that location and gives the name of the glue that is encoded within it, or ``\_'' if it is the \emph{null} glue.) Traveling clockwise around the exterior of a $2 \times 2$ square of $\sigma^2$, the counterclockwise most of the two edges making up each side receives the $g_l$ string, and the clockwise most receives the $g_r$ string. Additionally, there are special markings that may be added to the end of the $g_r$ encoding as follows:
                \begin{enumerate}
                    \item If the right side of the $g_r$ encoding will be at the end of a convex corner, the symbol `c' is added to the rightmost glue of $g_r$.
                    
                    \item If the right side of the $g_r$ encoding will be at the boundary between two different $2 \times 2$ blocks of macrotiles but not at a concave corner, the symbol '4' is added to the rightmost glue of $g_r$.
                    
                    \item If the right side of the $g_r$ encoding will be at the boundary between two different $2 \times 2$ blocks of macrotiles at a concave corner, the symbol `7' is added to the rightmost glue of $g_r$.
                \end{enumerate}
        \end{enumerate}
        
        Additionally, another type of special marker is added to $H$ as follows:
        
        \begin{enumerate}
            \item In the encoding of the last location in $H$, for the portion labeled $L$, the leftmost symbol of the encoding of $g_r$ is given the additional special marker `A'. This will be used to eventually trigger the initiation of the pre-activation row which will begin its growth around the exterior of the seed-representing assembly after this final macrotile of $H$ has grown.
            
            \item If the seed $\sigma$ contains any internal cavities that are completely enclosed by $\sigma$, special handling for those cavities is required. For each such cavity, $D$ determines which side of which macrotile will be the final to be grown along it. Let $d \in \{F,L,R\}$ be the direction of that side relative to the growth of that macrotile. Let $g_d \in \{g_l,g_r\}$ be the string that is to be placed on the $d$ side of the macrotile. If $g_d = g_l$, then the rightmost symbol is given the additional special marker `A'. Else, if $g_d = g_r$, then the leftmost symbol is given the additional special marker `A'. This will be used to eventually trigger the initiation of the pre-activation row which will begin its growth around the the interior cavity to which it is adjacent after this final macrotile of on the perimeter of that cavity has grown.
        \end{enumerate}
        
    \item $D$ outputs $H$ and halts.
\end{enumerate}

\begin{figure}
    \centering
    \includegraphics[width=4.0in]{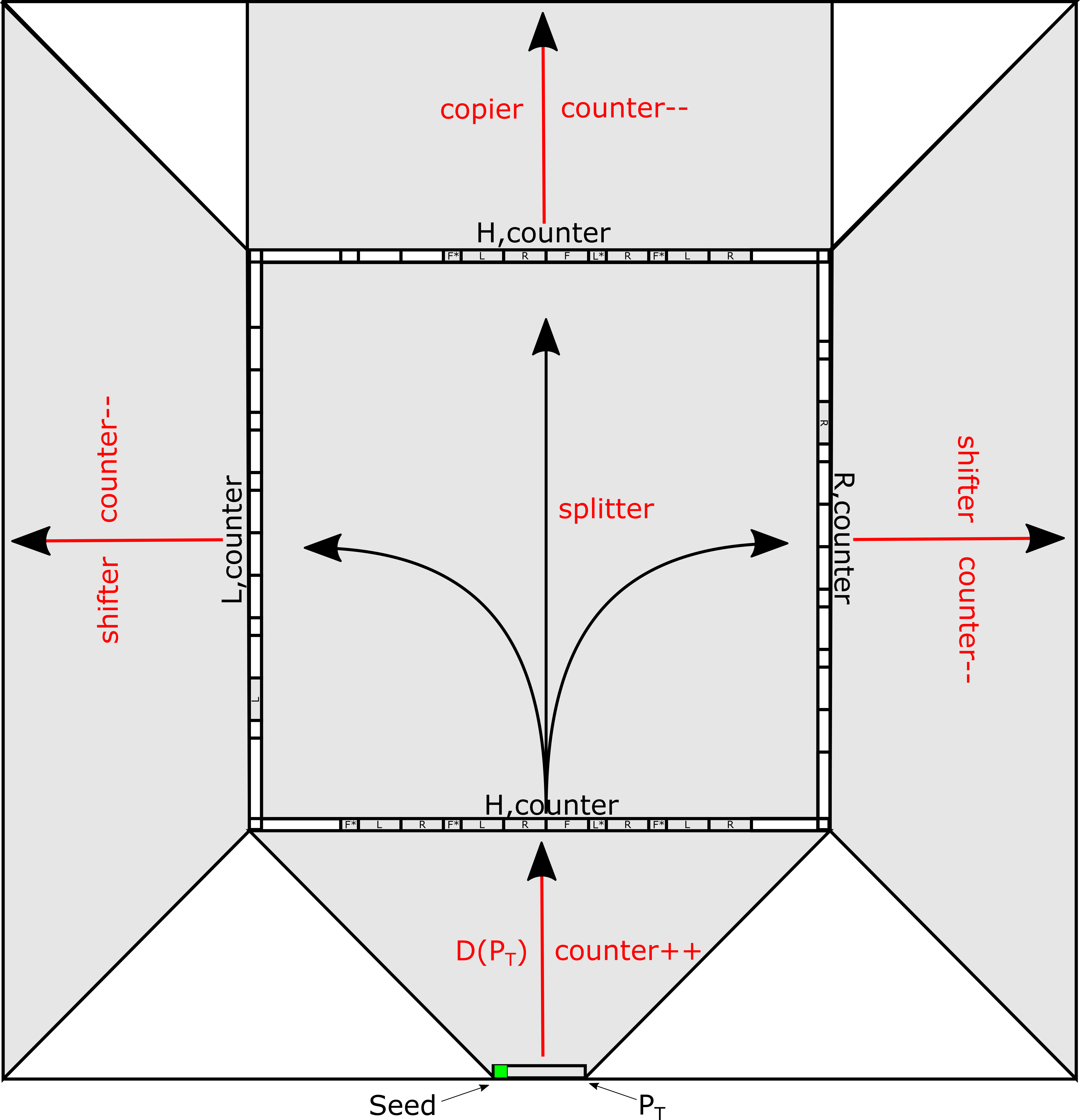}
    \caption{Growth of the first seed macrotile, representing one location of a $2 \times 2$ block in $\sigma^2$. Note that this is a generic example that doesn't match with the first macrotile of the example from Figure \ref{fig:ham-cycle-encoding}. For an example which does match, and depicts the first two macrotiles, see Figure \ref{fig:two-optimal-macrotiles}.}
    \label{fig:optimal-macrotile-growth}
\end{figure}

\begin{figure}
    \centering
    \includegraphics[width=4.0in]{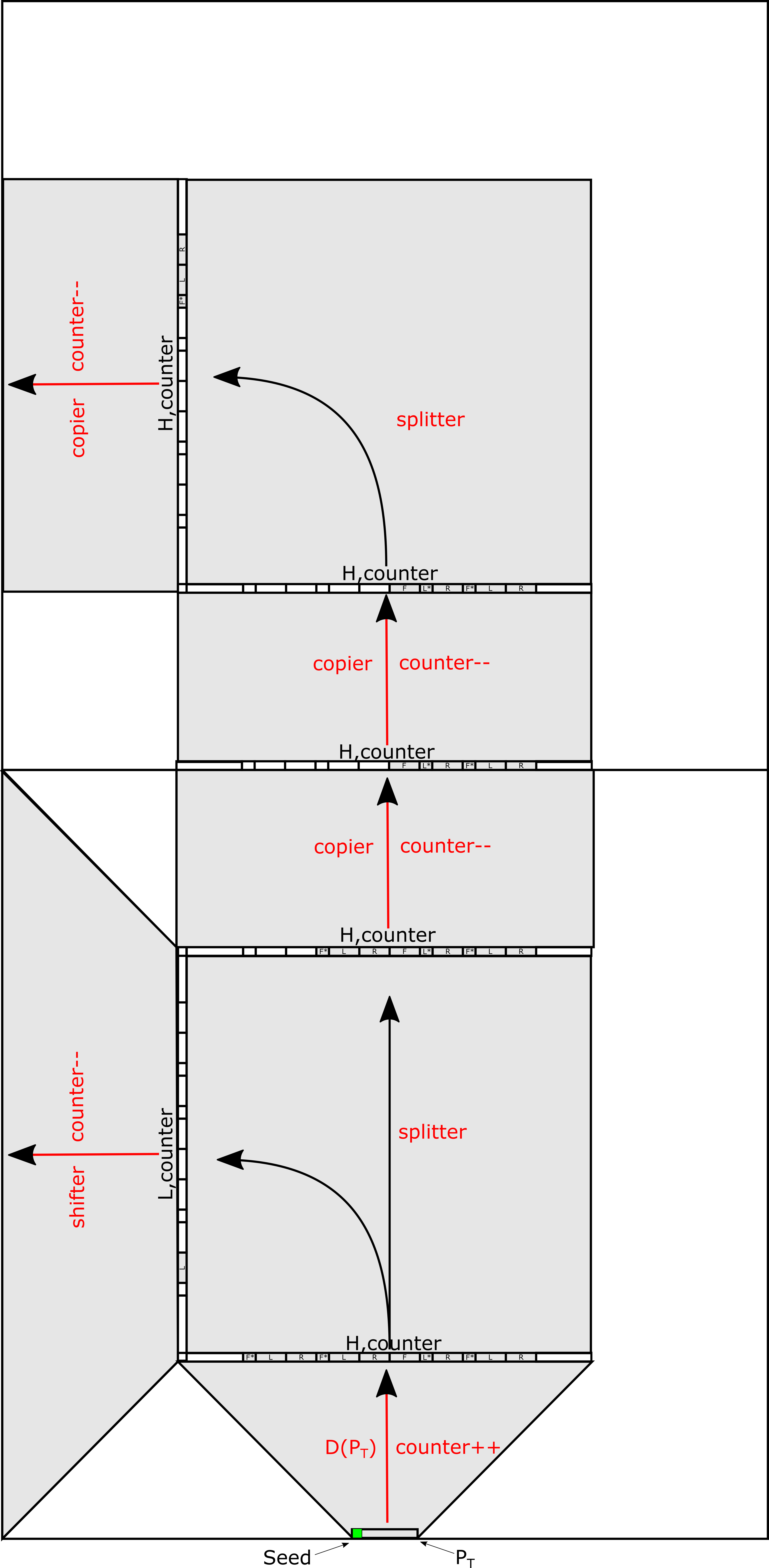}
    \caption{Example of the first two macrotiles from the example in Figure \ref{fig:ham-cycle-encoding}.}
    \label{fig:two-optimal-macrotiles}
\end{figure}

The tile set $T_\calT$ can now be constructed as follows. $T_\calT$ is the union of the following tile sets:
\begin{enumerate}
    \item The previously defined tile set $U'$, which is a modification of the tile set $U$ from \cite{IUSA}.
    
    \item The tile set $T_\sigma$, which was previously discussed, is hard-coded to grow from the seed tile a row whose northern glues encode the compressed version of $P_\calT$, and then decompress the bits of $P_\calT$ (the program which outputs $\calT$) in binary. It will also display a binary number encoding the number of rows which have grown upward to that point. The rightmost tile of the initial row will be the seed tile, $\sigma_\calT$ of $\mathcal{S}_\calT$. The glues between the tiles of this initial row will be strength-2, allowing the full row to grow directly from the seed tile. Additionally, the northern glue of the leftmost tile of the northern row will not only encode the first bit of $P_\calT$, but it will also encode the start state of Turing machine $D$ and be a strength-2 glue.
    
    \item The tile set $T_D$, which is a tile set that simulates Turing machine $D$ in a standard zig-zag manner (see \cite{DirectedNotIU}, among many others, for examples of zig-zag Turing machine simulations). Additionally, (1) each row of the simulation of $D$ grows one additional tile to each of the left and right sides, and (2) the bits of a binary counter are overlaid on the tiles of the simulation, with the least significant bit being on the leftmost tape cell (although each row extends by one additional tile to both the left and right ends, the Turing machine $D$'s tape has a fixed left end which never moves) and the bits extending to the right. This counter starts from the value encoded by the binary counter of $T_\sigma$ and is incremented during the growth of every row so that it effectively keeps a count of the runtime of $D$.
    
    \item The tile set $T_{splitter}$. (There are four rotated versions of these tiles, so we will explain the version that grows from north to south. The other three versions simply consist of rotated versions of the described tile types.) These tiles grow on top of row which encodes (from left to right) an arbitrary number of ``blank encoding'' tiles, followed by an encoding of $H$ overlaid with the bits of a binary counter, followed by an arbitrary number of ``blank encoding'' tiles. The grow in a zig-zag manner and cause portions of the information in the input row to be rotated to each of the left and right, as well as propagated northward. These tiles only copy a subset of the input row to be propagated in each direction. (See Figure \ref{fig:optimal-macrotile-growth} for an example of the ``splitter'' in the middle square.) The subset for each direction is determined by the leftmost section of the encoding of $H$, which contains a portion for each direction ($F$, $L$, and $R$).
    
        \begin{enumerate}
            \item For the direction marked with either a ``$^*$'' or a ``\#'', the full encoding of $H$ is copied to the side minus the leftmost section (i.e. the leftmost $F$, $L$, and $R$ portions are not copied, but instead replaced by ``blank'' values). This serves to erase that portion from $H$ as it is passed to the next block, so that the leftmost remaining portion will encode the correct instructions for that block. If the erased direction was marked with ``\#'', that symbol is still included on the left side of $H$ to signal to the next component (the copier) that the information is being copied across the boundary of one $2 \times 2$ macrotile into another. Additionally, the overlaid bits of the binary counter are copied in this direction.
            
            \item For any direction which consist of either $g_l$ or $g_r$ (with the appropriate glue encodings), that string is rotated to that side, along with the overlaid bits of the binary counter. The rest of the positions encode ``blank'' values.
            
            \item For any direction consisting of just the blank, ``\_'', nothing is rotated to that side, and that side of the macrotile is not grown (since it is interior to the $2 \times 2$ block of $\sigma^2$ and thus can't be used to initiate any additional growth).
        \end{enumerate}
        
\begin{figure}
    \centering
    \includegraphics[width=5.0in]{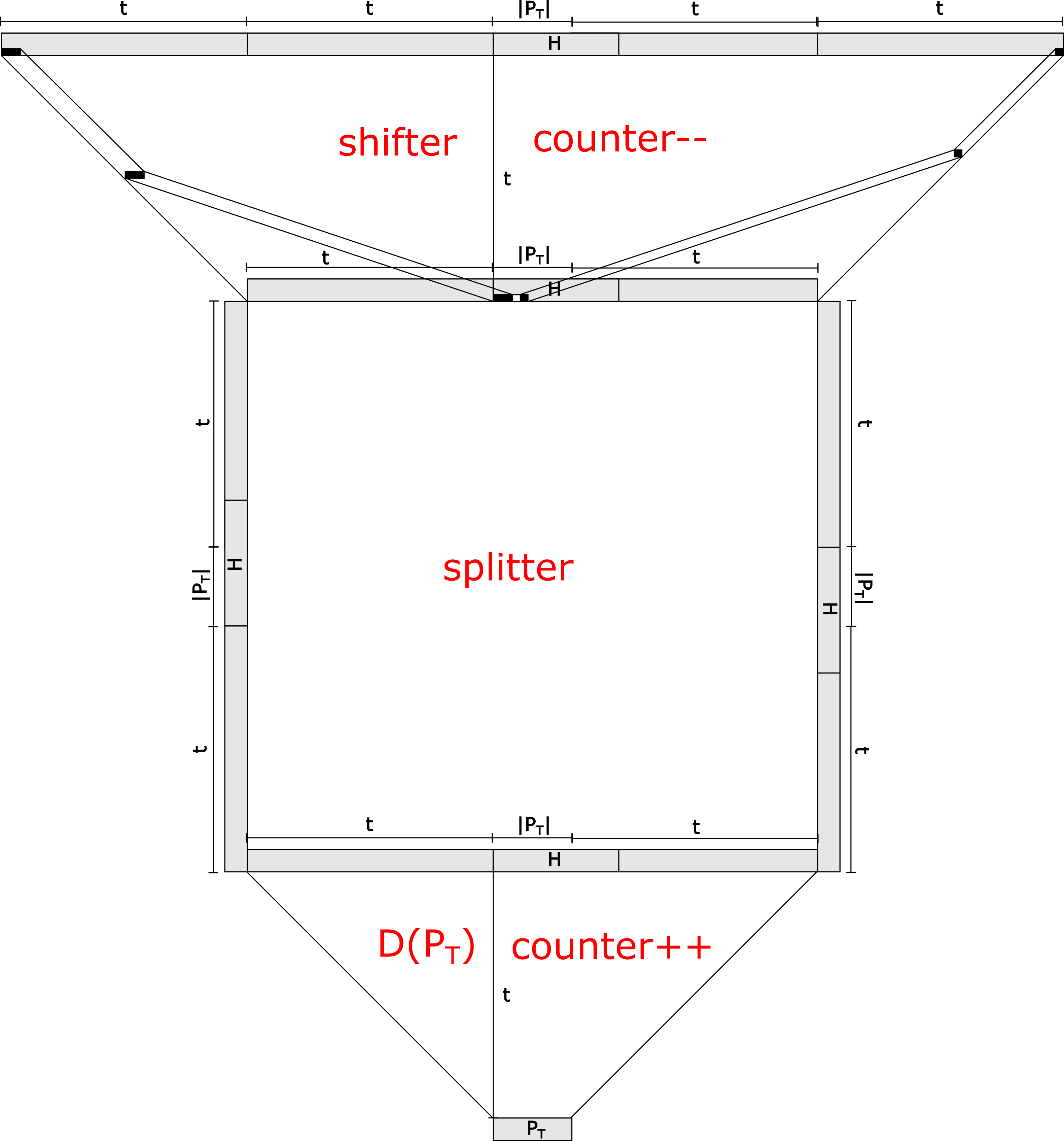}
    \caption{A schematic depiction of dimensions of portions of the macrotiles in $\mathcal{S}$ and how the shifting of the information from the encoding of either $g_l$ or $g_r$ occurs.}
    \label{fig:optimal-macrotile-dimensions}
\end{figure}

    \item The tile set $T_{shifter}$. (There are four rotated versions of these tiles, so we will explain the version that grows from north to south. The other three versions simply consist of rotated versions of the described tile types.) These tiles grow on top of row which encodes (from left to right) an arbitrary number of ``blank encoding'' tiles, followed by an encoding of either $g_l$ or $g_r$, followed by an arbitrary number of ``blank encoding'' tiles, with a subset of the tiles overlaid with the bits of a binary counter. These tiles cause zig-zag rows to grow which decrement a copy of the binary counter value until it reaches the value 0 
    , extend each row by an additional tile on each of the left and right sides, and copy the encoding of $g_l$ or $g_r$ upward while shifting its two halves to either end of the final row. As mentioned previously, and shown in the bottom of Figure \ref{fig:supersides}, each of $g_l$ and $g_r$ have a ``spacer'' section which is initially a side tile wide. As these zig-zag rows grow upward, they shift the information on the left of the spacer to the left, and the information to the right of the spacer to the right. Since rows alternate direction of growth, they alternate the direction in which they can shift information. Initially, the information which is being shifted in the same direction as the growth of a row is shifted by 6 positions. This is possible by having glues that encode up to the last 6 ``remembered'' values at a time. Once the information being shifted reaches the end of the row, the shifting switches to 2 positions per shift so that the information remains at the end of the row until arriving at the final row. (See Figure \ref{fig:optimal-macrotile-dimensions} for a schematic depiction of the dimensions of a macortile and how the encodings of $g_l$ and $g_r$ are shifted.) The reason for initially shifting 6 spaces at a time is that, since shifting occurs every second row for each direction, this provides a slope for the information being shifted of $2/6$, i.e. $1/3$. The dimensions of the macrotiles are largely dictated by dictated by the runtime of $D(\calT)$, which is denoted by the value $t$ in Figure \ref{fig:optimal-macrotile-dimensions}. Since $D$ is designed so that its runtime must always be greater than the length of its output, the length of $H$ must always be less than $t$ (by - significantly - more than 2). Therefore, the maximum amount that any information could need to be shifted is bounded above by $3t$ (see Figure \ref{fig:optimal-macrotile-dimensions}). Since the shifter grows upward a distance of $t-2$ rows, that means a slope of $1/3$ must always suffice to shift the information to the correct information before reaching the top row. Once reaching the edge of the row, the information can then be shifted at a slope of $1/1$ for the remaining rows to stay at the end of the rows. The tiles that perform the shifting can easily accommodate this by having a special marking on the tiles at the end of the rows to detect when the shifting information overlaps that mark.
    
    \item The tile set $T_{copier}$. (There are four rotated versions of these tiles, so we will explain the version that grows from north to south. The other three versions simply consist of rotated versions of the described tile types.)  These tiles grow on top of row which encodes (from left to right) an arbitrary number of ``blank encoding'' tiles, followed by an encoding of $H$ overlaid with the bits of a binary counter, followed by an arbitrary number of ``blank encoding'' tiles. There may also be a ``\#'' symbol on the left end of the encoding of $H$. These tiles cause zig-zag rows to grow which simply decrement a copy of the binary counter value until it reaches 0, while also copying the original value of the binary counter as well as the encoding of $H$ upward. If a ``\#'' symbol is on the left end of $H$, an additional 4 rows are grown upward, copying all of the same information but erasing that ``\#'' symbol. (This accounts for growth from one $2 \times 2$ macrotile into another and the spacing required to account for the pre-activation and activation rows that are between each pair of macrotiles where at least one is a non-seed macrotile. These will be described later.) Since this copying behavior will always need to be performed twice in a row (except for the possible spacing of the extra 4 rows), a special marking is also copied upward through the first occurrence of the copier so that a second occurrence of the copier is initiated. (An example can be seen at the top of the first macrotile and bottom of the second macrotile in Figure \ref{fig:two-optimal-macrotiles}.)
    
\begin{figure}
    \centering
    \includegraphics[width=0.8\textwidth]{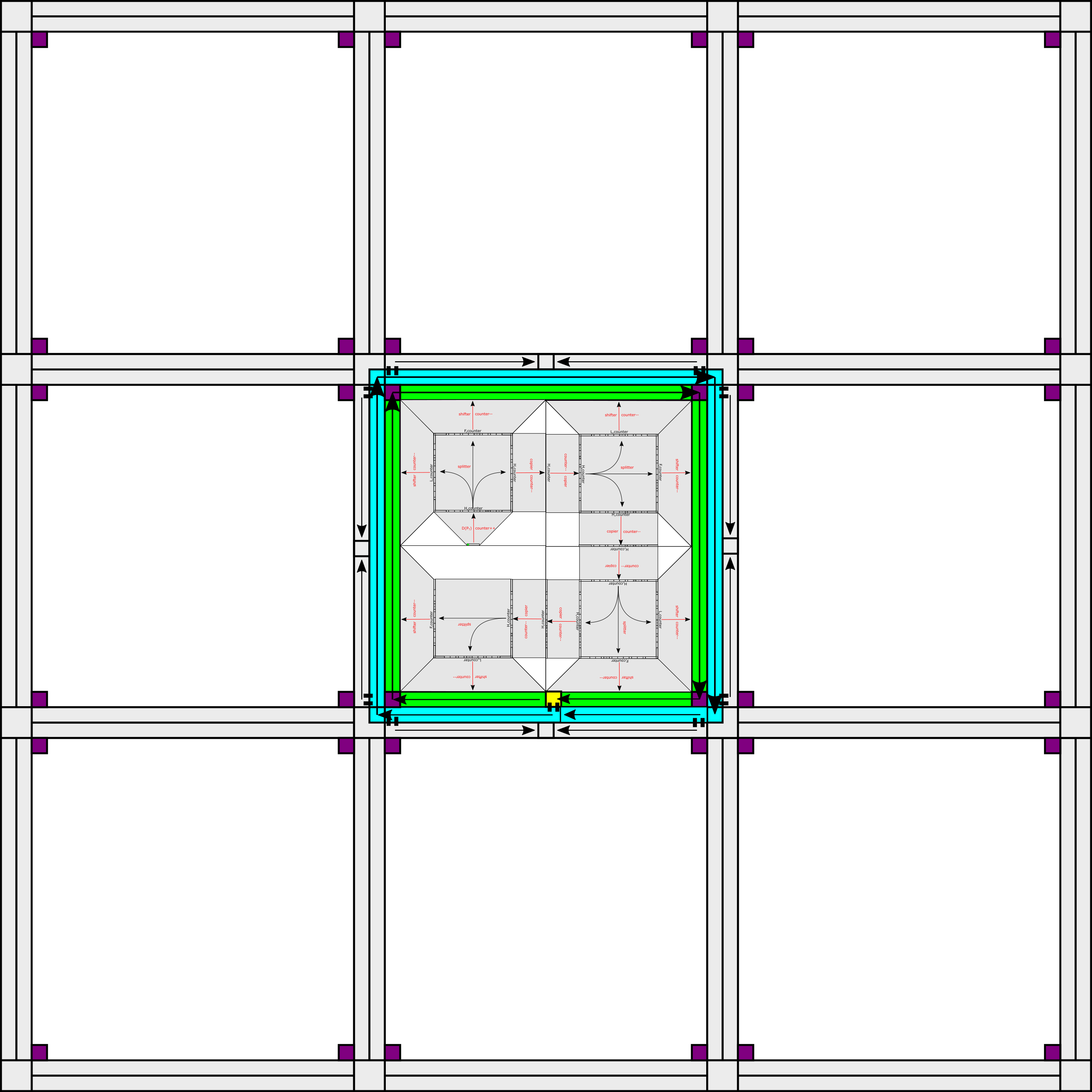}
    \caption{Depiction of the growth of the pre-activation (green) and activation (blue) rows around a simple assembly of seed macrotiles representing a single-tile seed $\sigma$, and the locations of possible future surrounding macrotiles outside of the representation of $\sigma$. Note that the rows showing the locations of pre-activation and activation rows, and between surrounding macrotile locations, are not remotely to scale but shown much larger for clarity, as the scale factor is very large and the size of rows is significantly smaller in the depicted macrotiles in the center.}
    \label{fig:scaled-simulation-seed-macrotile-frame}
\end{figure}

    \item The tile set $T_{act}$. This tile set creates two rows that grow around a perimeter of the assembly consisting of the seed-representing macrotiles. This includes the external perimeter around the entire seed-representing assembly, as well as the perimeters of any internal cavities (if there are any). The first row to grow is called the \emph{pre-activation} row, and begins growth when the special marker `A' makes it to the exterior row of a macrotile tile.
    This special marker causes the tile that is placed there 
    to include an exterior facing strength 2 glue that initiates the growth of the pre-activation row that grows in a clockwise direction around the perimeter of which that side is a part. (This is shown as the green row in Figure \ref{fig:scaled-simulation-seed-macrotile-frame}.)  The tiles of this row (and the activation row which will piggyback on it) grow via cooperation between a glue on the side of the tile most recently placed in the path facing the direction of growth, and a glue on the exterior of the macrotile over which it is growing. However, there are a few special cases in which such cooperation is not possible during the growth of the pre-activation row. Each of these is detected by the growing row due to a special glue marking on the rightmost glue of a $g_r$ encoding. They are as follows:
    
        \begin{enumerate}
            \item Convex corner:  Since such cooperation isn't possible when the row needs to grow around a convex corner, whenever the row reaches the end of the encoding of a copy of $g_r$ and the rightmost glue is marked with a `c', this signifies that it is at a convex corner, so a tile is placed with a strength 2 glue allowing for the placement of a tile that facilitates turning the corner. (Examples can be seen as purple tiles in the pre-activation row in Figure \ref{fig:scaled-simulation-seed-macrotile-frame}.)
            
            \item Straight gap between macrotiles of different $2 \times 2$ blocks: Since there will be an additional spacing of 4 tiles between such macrotiles, the rightmost glue of $g_r$ will have been marked with a '4', and this will cause a pre-activation row tile to attach that is able to initiate the growth of a set of 4 tiles that can grow across the gap. They can be generic tiles, unspecific to the encodings of $g_l$ or $g_r$ because they will be in locations don't need to propagate any such information since they will be interfacing with the tiles from the IU construction of \cite{IUSA} and in the corresponding locations there are generic `frame' tiles.
            
            \item Concave corner between macrotiles of different $2 \times 2$ blocks: Since there will be a spacing requiring 7 non-cooperative tiles between such macrotiles to go around the corner crossing the two 4-tile-wide gaps, the rightmost glue of $g_r$ will have been marked with a '7', and this will cause a pre-activation row tile to attach that is able to initiate the growth of a set of 7 tiles that can grow around that corner, across the gaps. As before, these can be generic tiles.
        \end{enumerate}
    
    This pre-activation row is guaranteed to be able to successfully complete for each perimeter on which it grows, even if it has to temporarily pause progress as some output sides generated by shifter modules may need extra time to complete. However, all macrotiles are guaranteed to already be in place and to have initiated their outputs to the point of supporting growth of the pre-activation row, but none of them can initiate any growth beyond that point, so there is no growth that can interfere with the completion of the pre-activation row. The tiles of the pre-activation row copy the information from the exterior of the macrotiles to their exterior sides, resulting in the perimeter of each pair of adjacent macrotiles in the same $2 \times 2$ square containing a complete superside representation in the form shown in Figure ~\ref{fig:supersides}, which is equivalent to that of the IU construction in \cite{IUSA} with the addition of the two extra ``spacer'' segments. This is because the two adjacent macrotiles will expose the encodings of $g_l$ and $g_r$ so that they are concatenated. In this way, each $2 \times 2$ block of seed-representing macrotiles in $\mathcal{S}_\calT$ will represent a single tile of the seed in $\calT$.
    
    When a pre-activation row completes by placing the final tile, which is adjacent to the first tile, 
    (the final tile of the external perimeter is shown as yellow in Figure \ref{fig:scaled-simulation-seed-macrotile-frame}) that final tile presents a strength 2 glue outward to initiate growth of the \emph{activation} row. The activation row also grows in a clockwise manner, on the outside of the pre-activation row. The activation row is the final row to grow around the edges of the $2 \times 2$ blocks of macrotiles that each represent single tiles of the seed of $\calT$. This row is always able to cooperate using a tile from the pre-activation row except for the case when it goes around a convex corner. In this case, the glue of the tile in the pre-activation row at that corner signals the upcoming corner and a tile attaches with a strength 2 glue, allowing the activation row to grow around the corner.
    
    As each segment completes, it exposes the glues of a superside that, modulo the spacing regions that are ignored, are identical to those used in the construction of \cite{IUSA}. What this means is that, at that point, each superside can behave identically to the simulated side of a tile in the intrinsically universal simulator. All of the information that needs to be propagated to newly growing adjacent supertiles is presented, and from the exterior (other than the larger scale factor accommodated by the spacer sections), the supertiles representing the seed structure are identical to those used by that simulator. Because of this, we call this the activation row since as soon as it grows across a supertile side, if that supertile simulates a tile with a glue exposed on the corresponding side, growth can now proceed outward, out of the representation of the seed into adjacent supertile locations. Thus, the supertile has been ``activated'' to begin simulation of growth beyond the seed assembly.
    
    Since no growth outside of the macrotiles representing the seed can occur before the activation row grows over them, and the activation row cannot even begin until the entire pre-activation row is complete, which can only occur once every macrotile representing the seed that is on the exterior of the seed assembly has completed, nothing can interfere with the growth of the assembly representing the seed until the activation row is growing. Therefore, we only need to consider interactions between growth outside of the seed and macrotiles that have already completed pre-activation rows but don't necessarily have completed activation rows. The only potential conflict then occurs if a supertile outside of the representation of the seed, but adjacent to the seed, attempts growth into the location that would have been occupied by the activation row, potentially blocking its further growth. To handle this situation, since we are leveraging the construction of \cite{IUSA} to handle all growth beyond the seed, we make use of the growth pattern used by that construction. The supertiles of that construction have empty $2 \times 2$ squares in the corners between them (which can be seen in Figures \ref{fig:scaled-simulation-seed-macrotile-frame} and \ref{fig:scaled-simulation-seed-macrotile-frame-blocking}). The row in which the activation row would have been attempting to grow but was blocked would instead be filled by a row of the `frame' of the IU construction. A simple modification to the the tiles of such a row from that tile set (specifically the tiles which are the leftmost tiles - from the perspective of the supertile growing that row - of those rows) adds a strength 2 glue that allows a newly designed tile to attach to the left of that tile into the otherwise empty $2 \times 2$ square. In most circumstances, this tile will be completely unused but out of the way. However, in circumstances in which the row to which it is attached blocked formation of the activation row, then this tile will be able to cooperate with a tile from the next segment of the pre-activation row to place a tile of the activation row and begin a new segment of its growth. This can been seen occurring three times in Figure \ref{fig:scaled-simulation-seed-macrotile-frame-blocking}, with correct growth of the activation row being restarted each time. Note that if such an encroaching supertile (as depicted in red in Figure \ref{fig:scaled-simulation-seed-macrotile-frame-blocking}) were to beat the activation row to the right side of the macrotile edge instead of the left side, then all that does is pause growth until that supertile's growth also occupies the left side (which it must eventually do, by design of the frame tiles in \cite{IUSA} which attempt to occupy both ends of such a row and then grow toward each other in the middle, since growth would pause in the middle until that supertile's growth occupies the left side). Therefore, growth of the activation row can be temporarily paused but never stopped.
    
    As each macrotile completes its activation row, any simulation growth which would be valid can commence. Collisions with any inactive portions of the seed do not cause errors, as don't previously mentioned collisions that pause the growth of the activation row, since the seed representation of the seed tile is sufficiently completed by that point to represent the appropriate tile, and there is no need for it to grow output from the colliding side since there is clearly already a supertile in that location.
\end{enumerate}

Although the necessary glue markings aren't always explicitly described above, each module places specially augmented glues in one or more positions (as needed) of its output row that allow the correct next module to begin growth from that output. This is possible because whenever a module completes growth, it is known which module will use that output to begin the next phase of growth.


\begin{figure}
    \centering
    \includegraphics[width=0.8\textwidth]{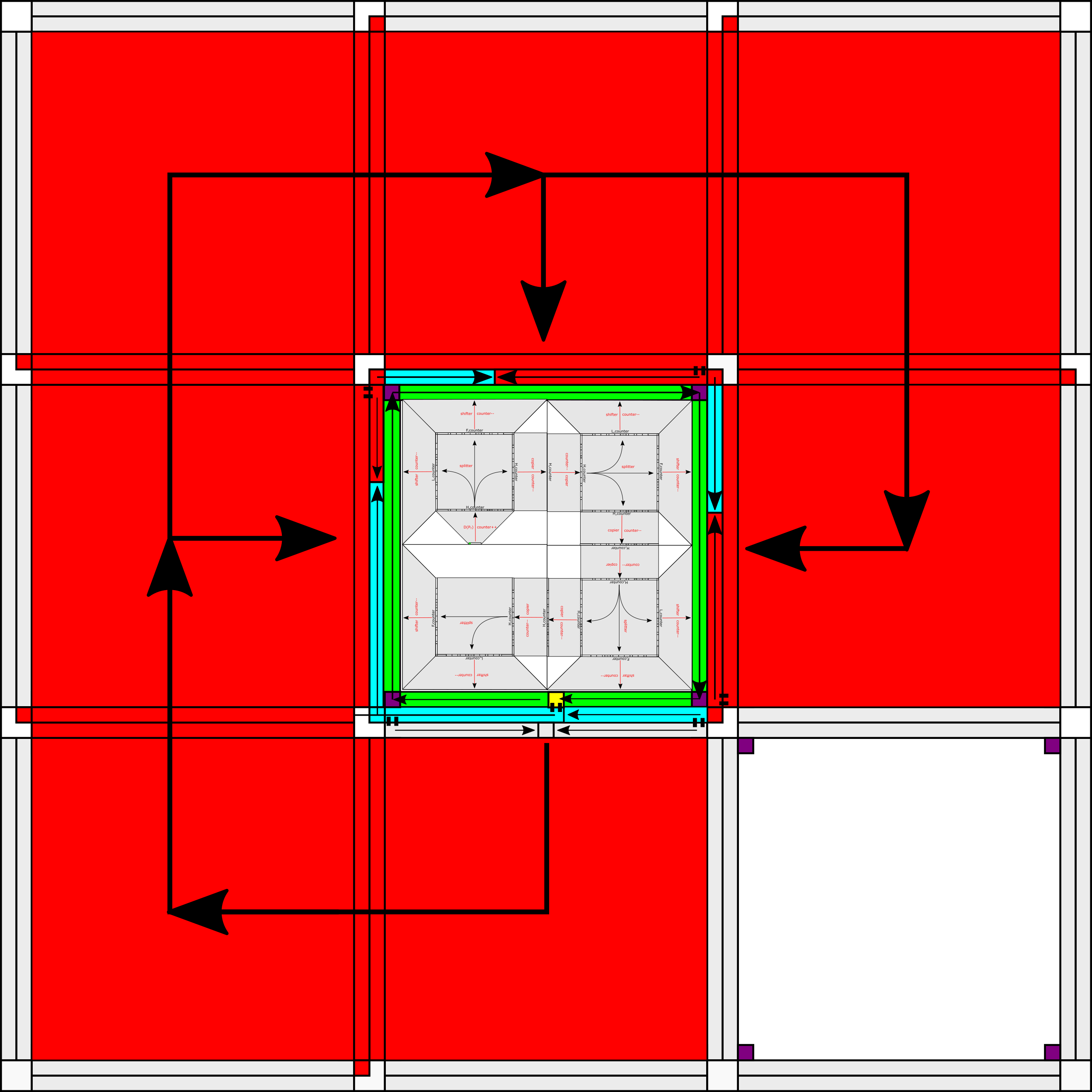}
    \caption{Extending the example of Figure~\ref{fig:scaled-simulation-seed-macrotile-frame}, a depiction of the growth protocol which ensures that, even if the growth of surrounding macrotiles completes and attempt to initiate growth into seed macrotiles locations, the activation of the seed remains correct and no erroneous growth can occur.}
    \label{fig:scaled-simulation-seed-macrotile-frame-blocking}
\end{figure}


With the seed $\sigma_\calT$ and tile set $T_\calT$ defined as above, and $\tau = 2$, the definition of $\mathcal{S}$ is complete. We now provide a sketch of the growth and correctness of the construction.

Starting from a single seed tile, a hard-coded row grows which encodes a program that outputs $\calT$ (potentially the shortest such program). A Turing machine is then simulated by the growth of a set of tiles designed to both simulate that particular Turing machine and also to maintain a counter which keetps track of the runtime of that Turing machine. The Turing machine computes a Hamiltonian cycle through a $2 \times 2$ scaled version of the seed, and then computes the necessary encodings to navigate the growth of the assembly along the cycle and also the strings to be encoded along the exposed sides of those microtiles. During the computation, the increase of width, on each side, of the subassembly performing the computation, is equal to the height it grows (which is equivalent to the runtime). Utilizing these dimensions, a few relatively simple sets of modular construction components are able to assemble the macrotiles along the Hamiltonian cycle and place the necessary encodings along the outer perimeter of the seed-representing assembly so that each $2 \times 2$ block of macrotiles is able to map to a single tile in the seed of $\calT$, with the appropriate information encoded on their perimeters for the assembly to eventually initiate growth that appears from the outside as though it is coming from supertiles of the original IS construction \cite{IUSA} (apart for the extra spacing fields), and thus growth outside of the simulated seed structure follows the design of the IU construction, which has been proven to correctly simulate the system whose information is encoded into the supersides.

The representation function $R$ is able to map a $2 \times 2$ square of macrotiles to a tile of $\calT$ once the fourth macrotile begins forming. At that point, $R$ is able to inspect the encodings of $g_l$ and $g_r$ designated for every side, and within them read the glues encoded for those sides, and then map the $2 \times 2$ block of macrotiles to a single tile in the seed of $\calT$. This ensures that all supertiles simulating tiles in the seed of $\calT$ resolve to those tiles before any growth can occur away from the seed (since all tiles will then be represented before the pre-activation and activation rows grow, which are required for growth away for the seed to begin). Therefore, $\mathcal{S}_\calT$ completely grows the seed first, following the requirement for seed-first-simulation. By connecting with the tiles of the IU construction \cite{IUSA}, we can utilize the full construction and proof of it to assert that as growth now proceeds, it correctly simulates $\calT$. Therefore, given an optimally short program that outputs an arbitrary aTAM system $\calT$, $\mathcal{S}_\calT$ seed-first-simulates $\calT$ using an optimal $O(\frac{K(\calT)}{\log(K(\calT))}$ tile types and a single-tile-seed, so Theorem \ref{thm:scaled-simulation} is proven.

\end{proof}

\section*{Acknowledgements}
The authors would like to thank Trent Rogers for helpful comments and creative brainstorming, and credit him for the idea of Claim \ref{clm:cross-model-IU}. Additionally, we thank the anonymous reviewers who gave extremely helpful comments and suggestions that have greatly improved this version of the paper.


\bibliographystyle{plainurl}
\bibliography{tam,extras,experimental_refs}

\end{document}